\newtheorem{theo}{Theorem}
\newtheorem{lem}{Lemma}
\newtheorem{defi}{Definition}
\newtheorem{conj}{Conjecture}
\newtheorem{prop}{Proposition}
\newtheorem{rem}{Remark}
\newcommand{\pr}[3]{H(#1|#2)_{#3}}
\newcommand{\hr}[2]{H(#1)_{#2}}
\newcommand{\ket}[1]{\vert #1 \rangle} % for Dirac ket
\newcommand{\kb}[2]{\left| #1 \vphantom{#2} \right>\left< #2 \vphantom{#1} \right|} % for Dirac ket bra
\newcommand{\proj}[1]{\kb{#1}{#1}} % for projector
\newcommand{\inner}[2]{\left\langle #1 \; , \; #2 \right\rangle} % for inner product
\newcommand{\bos}[1]{\boldsymbol{#1}}
\newcommand{\sx}{\mathcal{X}}
\newcommand{\sy}{\mathcal{Y}}
\newcommand{\tr}[1]{\mathrm{Tr} \left( #1 \right)}
\newcommand{\Tr}{\mathrm{Tr}}
\newcommand{\eps}{\varepsilon}
\newcommand{\dia}{\diamond}
\newcommand{\cb}{\mathrm{cb}}
\newcommand{\pptd}{\mathcal{PPT}_1}
\newcommand{\ppto}{\mathcal{PPT}_2}
\newcommand{\bhyx}{\mathcal{L}(\mathcal{H}_B \otimes \mathcal{H}_A)}
\newcommand{\bhx}{\mathcal{L}(\mathcal{H}_A)}
\newcommand{\dhx}{\mathcal{D}(\mathcal{H}_A)}
\newcommand{\bhy}{\mathcal{L}(\mathcal{H}_B)}
\newcommand{\ptx}[1]{{#1}^{\mathrm{T}_{A}}}
\newcommand{\trx}{\Tr_{A}}
\newcommand{\try}{\Tr_{B}}
\newcommand{\iyx}{\mathds{1}}
\newcommand{\ix}{\mathds{1}_{A}}
\newcommand{\iy}{\mathds{1}_{B}}
\begin{document}

\title{Tightening continuity bounds for entropies~\\ and bounds on quantum capacities}

\author{Michael~G~Jabbour
        and~Nilanjana~Datta% <-this % stops a space
\thanks{Michael~G~Jabbour is with the Centre for Quantum Information and Communication, \'Ecole polytechnique de Bruxelles, CP 165/59, Universit\'e libre de Bruxelles, 1050 Brussels, Belgium. Nilanjana~Datta is with the Department of Applied Mathematics and Theoretical Physics, Centre for Mathematical Sciences, University of Cambridge, Cambridge CB3 0WA, United Kingdom.
%This paper has downloadable supplementary material available at http://ieeexplore.ieee.org, provided by the authors. The material includes an appendix containing supplemental lemmas. Contact michael.gide.jabbour@gmail.com for further questions about this work.
}}

\maketitle

\begin{abstract} % 214 words
Uniform continuity bounds on entropies are generally expressed in terms of a single distance measure between probability distributions or quantum states, typically, the total variation- or trace distance. However, if an additional distance measure is known, the continuity bounds can be significantly strengthened. Here, we prove a tight uniform continuity bound for the Shannon entropy in terms of both the local- and total variation distances, sharpening an inequality in [I. Sason, IEEE Trans. Inf. Th., \textbf{59}, 7118 (2013)]. We also obtain a uniform continuity bound for the von Neumann entropy in terms of both the operator norm- and trace distances. We then apply our results to compute upper bounds on channel capacities. We first refine the concept of approximate degradable channels by introducing $(\varepsilon,\nu)$--degradable channels. These are $\varepsilon$--close in diamond norm and $\nu$--close in completely bounded spectral norm to their complementary channel when composed with a degrading channel. This leads to improved upper bounds to the quantum- and private classical capacities of such channels. Moreover, these bounds can be further improved by considering certain unstabilized versions of the above norms. We show that upper bounds on the latter can be efficiently expressed as semidefinite programs. As an application, we obtain a new upper bound on the quantum capacity of the qubit depolarizing channel.
\end{abstract}

\begin{IEEEkeywords}
Continuity bound, von Neumann entropy, operator norm, completely bounded norm, semidefinite programs, quantum capacity, private classical capacity.
\end{IEEEkeywords}

\IEEEpeerreviewmaketitle

\section{Introduction}
\label{sec:intro}

% \IEEEPARstart{E}{ntropies}
Entropies play a fundamental role in Information Theory. They characterize optimal rates of various information-processing tasks. For example, the optimal rate of asymptotically reliable compression of information emitted by a discrete, memoryless classical information source (i.e., its data compression limit) is given by its {\em{Shannon entropy}}. Similarly, the {\em{von Neumann entropy}} of a memoryless quantum information source characterizes its data compression limit. There are a plethora of different entropies and they satisfy a number of interesting properties. One such property concerns the following question: {\em{Given two quantum states (or random variables) which are close to each other, with respect to some suitable distance measure, how close are their entropies?}} The closeness of the entropies are expressed in terms of so-called continuity bounds.
Continuity bounds for the Shannon- and von Neumann entropies were originally obtained with the distance measures being the total variation distance and the trace distance, respectively. These are stated below. Refer to Section~\ref{sec:prelims} for notations and definitions.

\begin{theo}\label{th-1}{\em[Csisz\'ar inequality~\cite{PetzCon}]} For two probability distributions $p_X$ and $q_X$ on a finite set ${\mathcal{X}}$,
\begin{equation}
\begin{aligned}
    & |H(p_X) - H(q_X)| \\
    & \quad \leq {\mathrm{TV}}(p_X , q_X ) \log(|{\mathcal{X}}|  - 1) + h({\mathrm{TV}}(p_X , q_X )),
\end{aligned}
\end{equation}
where $H(p_X)$ and $H(q_X)$ denote the Shannon entropies of $p_X$ and $q_X$, $TV(p_X , q_X)$ denotes the total variation distance between them, and $h(\cdot)$ denotes the binary entropy function. 
\end{theo}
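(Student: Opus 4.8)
The plan is to derive the bound from a \emph{maximal coupling} of the two distributions, combined with the chain rule and the monotonicity of the Shannon entropy under conditioning and under adjoining variables. Write $d = |\mathcal{X}|$ and $\varepsilon = \mathrm{TV}(p_X, q_X) = \tfrac{1}{2}\sum_{x} |p_X(x) - q_X(x)|$. Because the right-hand side of the claimed inequality is symmetric in $p_X$ and $q_X$, it suffices to bound the one-sided difference $H(p_X) - H(q_X)$ by a conditional entropy and then rerun the same argument with the roles of the two distributions exchanged.

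The one external ingredient I would invoke is the maximal coupling theorem: there exists a joint distribution $P_{XY}$ on $\mathcal{X} \times \mathcal{X}$ with marginals $p_X$ and $q_X$ such that $\Pr[X \neq Y] = \mathrm{TV}(p_X, q_X) = \varepsilon$. Concretely, one places mass $\min\{p_X(x), q_X(x)\}$ on each diagonal pair $(x,x)$ and distributes the leftover mass off the diagonal, the total off-diagonal weight being exactly $\sum_x \bigl(p_X(x) - q_X(x)\bigr)_+ = \varepsilon$. Given such a coupling, subadditivity and the chain rule yield $H(X) \le H(X,Y) = H(Y) + H(X \mid Y)$, so that $H(p_X) - H(q_X) \le H(X \mid Y)$ (and symmetrically $H(q_X) - H(p_X) \le H(Y \mid X)$). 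It then remains only to control a conditional entropy of the form $H(X \mid Y)$.

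For this I would introduce the binary ``disagreement'' variable $E$, with $E = 0$ when $X = Y$ and $E = 1$ otherwise, so that $\Pr[E = 1] = \varepsilon$. Adjoining $E$ and applying the chain rule gives
\begin{align}
H(X \mid Y) &\le H(X, E \mid Y) = H(E \mid Y) + H(X \mid Y, E) \nonumber \\
&\le h(\varepsilon) + H(X \mid Y, E),
\end{align}
where $H(E \mid Y) \le H(E) = h(\varepsilon)$. On the event $E = 0$ one has $X = Y$, whence $H(X \mid Y, E = 0) = 0$; on the event $E = 1$ with $Y = y$ fixed, $X$ is supported on $\mathcal{X} \setminus \{y\}$, a set of size $d - 1$, so $H(X \mid Y = y, E = 1) \le \log(d - 1)$. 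Averaging over $Y$ and $E$ gives $H(X \mid Y, E) \le \varepsilon \log(d - 1)$, and combining this with the symmetric bound on $H(Y \mid X)$ produces $|H(p_X) - H(q_X)| \le \varepsilon \log(d-1) + h(\varepsilon)$, as claimed.

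The algebra is routine once the coupling is available, so I expect the only genuinely delicate points to be (i) the construction and verification of the maximal coupling, and (ii) the observation that conditioning on $\{X \neq Y\}$ restricts $X$ to $d-1$ rather than $d$ possible values. Point (ii) is exactly what sharpens the naive estimate $\log d$ to $\log(d-1)$ and is the source of the tightness asserted in the abstract; to certify that the bound cannot be improved one would exhibit an extremal pair, for instance $p_X$ concentrated on a single symbol and $q_X$ assigning mass $1 - \varepsilon$ to that symbol while spreading $\varepsilon$ uniformly over the remaining $d-1$ symbols, for which both sides of the inequality coincide.
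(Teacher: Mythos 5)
Your proof is correct, but note that the paper does not actually prove Theorem~\ref{th-1}: it is stated as a known result (the Csisz\'ar inequality) with a citation to~\cite{PetzCon}, so there is no internal proof to compare against. Your coupling argument is one of the standard routes to this bound and every step checks out: the maximal coupling with $\Pr[X \neq Y] = \mathrm{TV}(p_X,q_X)$ exists by the classical coupling theorem, the chain-rule manipulations are valid, and the key observation --- that conditioned on $\{X \neq Y, Y = y\}$ the variable $X$ lives on a set of size $|\mathcal{X}|-1$ --- is exactly what produces the $\log(|\mathcal{X}|-1)$ rather than $\log|\mathcal{X}|$. Two small remarks. First, your inequality $H(X \mid Y) \leq H(X, E \mid Y)$ is in fact an equality, since $E$ is a deterministic function of $(X,Y)$, so $H(X,E \mid Y) = H(X \mid Y) + H(E \mid X, Y) = H(X \mid Y)$; nothing is lost either way. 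Second, your extremal pair coincides (up to swapping the roles of $p_X$ and $q_X$) with the saturating pair the paper exhibits immediately after Theorem~\ref{th-2}, namely $q_X = \{1, 0, \ldots, 0\}$ and $p_X = \{1-t, \tfrac{t}{d-1}, \ldots, \tfrac{t}{d-1}\}$, confirming tightness. The only degenerate cases worth a word in a fully careful write-up are $\mathrm{TV}(p_X,q_X) \in \{0\}$ or $|\mathcal{X}| = 1$, where the bound holds trivially under the convention $0 \log 0 = 0$.
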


\begin{theo}\label{th-2}{\em[Audenaert-Fannes-Petz inequality~\cite{Audenaert, Fannes, PetzCon}]} For two quantum states $\rho$ and $\sigma$ on a $d$-dimensional Hilbert space, 
\begin{equation} \label{eq:th-2}
    |S(\rho) - S(\sigma)| \leq T(\rho , \sigma ) \log(d  - 1) + h(T(\rho, \sigma)),
\end{equation}
where $S(\rho)$ and $S(\sigma)$ denote the von Neumann entropies of $\rho$ and $\sigma$, and $T(\rho, \sigma)$ denotes the trace distance between them.
\end{theo}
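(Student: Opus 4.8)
The plan is to reduce the von~Neumann case to the classical (Shannon) case and invoke Theorem~\ref{th-1}. Writing the spectral decompositions of $\rho$ and $\sigma$, and letting $p$ and $q$ denote their respective vectors of eigenvalues arranged in non-increasing order, one has $S(\rho)=H(p)$ and $S(\sigma)=H(q)$, so that $|S(\rho)-S(\sigma)|=|H(p)-H(q)|$. Here both $p$ and $q$ are probability distributions on the $d$-element index set, and the problem becomes one of comparing two Shannon entropies, to which Theorem~\ref{th-1} directly applies.

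The key lemma I would establish is that the trace distance controls the total variation distance between the ordered spectra, namely $\mathrm{TV}(p,q)=\tfrac12\sum_i|p_i-q_i|\le\tfrac12\|\rho-\sigma\|_1=T(\rho,\sigma)$. This is a standard eigenvalue perturbation inequality (of Mirsky--Lidskii type): the map sending a Hermitian operator to its non-increasingly ordered spectrum is a contraction from the trace norm to the $\ell_1$ norm. Granting this, Theorem~\ref{th-1} gives $|H(p)-H(q)|\le f(\mathrm{TV}(p,q))$, where I abbreviate $f(t):=t\log(d-1)+h(t)$. A short calculation shows that $f$ is increasing on $[0,1-1/d]$ and decreasing thereafter, with maximum $f(1-1/d)=\log d$. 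Hence, whenever $T(\rho,\sigma)\le 1-1/d$, the inequality $\mathrm{TV}(p,q)\le T(\rho,\sigma)$ together with monotonicity immediately yields $f(\mathrm{TV}(p,q))\le f(T(\rho,\sigma))$, which is exactly the claimed bound~\eqref{eq:th-2}.

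The main obstacle is the complementary regime $T(\rho,\sigma)>1-1/d$, where $f$ is decreasing and the substitution above breaks down: $\mathrm{TV}(p,q)\le T(\rho,\sigma)$ no longer forces $f(\mathrm{TV}(p,q))\le f(T(\rho,\sigma))$. What rescues the inequality is a joint constraint that the naive reduction discards: misaligning the eigenbases of $\rho$ and $\sigma$ increases the trace distance without affecting either entropy, so configurations with simultaneously large trace distance and large entropy gap are mutually exclusive. To make this precise I would instead maximize $S(\rho)-S(\sigma)$ directly at fixed trace distance $T(\rho,\sigma)=T$, argue by a perturbation and convexity analysis that an optimizer may be taken with $\rho$ and $\sigma$ commuting and with spectra of a simple extremal shape, and thereby reduce the problem to a one- or two-parameter scalar optimization whose maximum is verified to equal $f(T)$. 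This last step---pinning down the extremal spectra and checking the resulting calculus inequality---is the technical heart of the argument, and is precisely where the sharpness of the constant is secured, as opposed to the weaker Fannes-type bound that concavity of the entropy alone would yield.
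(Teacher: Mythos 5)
Your first regime is handled correctly: Mirsky's inequality does give $\frac12\sum_i|p_i^\downarrow-q_i^\downarrow|\le\frac12\|\rho-\sigma\|_1$, and since $f(t)\coloneqq t\log(d-1)+h(t)$ satisfies $f'(t)=\log(d-1)+\log\frac{1-t}{t}$, it is indeed increasing exactly on $[0,1-1/d]$ with maximum $f(1-1/d)=\log d$, so for $T(\rho,\sigma)\le 1-1/d$ the chain $|S(\rho)-S(\sigma)|=|H(p)-H(q)|\le f(\mathrm{TV}(p,q))\le f(T(\rho,\sigma))$ is complete and rigorous. (Note that the paper itself does not prove Theorem~\ref{th-2}; it imports it from the cited literature, so your attempt has to stand on its own.)

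The genuine gap is the regime $T(\rho,\sigma)>1-1/d$, which you correctly identify as the crux but then only outline rather than prove. Everything load-bearing there is asserted, not argued: that a maximizer of $S(\rho)-S(\sigma)$ at \emph{exactly} fixed trace distance $T$ exists and may be taken commuting, that its spectra have ``a simple extremal shape,'' and that the resulting scalar optimization evaluates to $f(T)$ rather than to something larger. None of these follow from ``a perturbation and convexity analysis'' as stated --- this is precisely the technical content of Audenaert's paper, and it is delicate: the classical reduction genuinely loses information here, since a pair at exact trace distance $T>1-1/d$ has ordered-spectra total variation possibly strictly below $T$, at which point $f(\mathrm{TV}(p,q))$ can be as large as $\log d>f(T)$. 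Your heuristic that ``misaligning eigenbases increases trace distance without affecting entropies'' also cannot be used naively to push distances up: for the classical extremal pair (uniform vs.\ pure) the uniform state is unitarily invariant, so the trace distance is pinned at $1-1/d$ no matter the basis, and more generally the set of trace distances achievable at fixed spectra is a nontrivial interval whose endpoints (optimally aligned vs.\ optimally anti-aligned pairings of eigenvalues) you would have to characterize and then carry through the entropy optimization with a two-sided constraint. Until that optimization is actually set up and solved --- including a compactness argument for existence, a commutation argument for the extremizer, and the verification that the extremal value is $f(T)$ --- the inequality \eqref{eq:th-2} is established only for $T(\rho,\sigma)\le 1-1/d$, and the proposal does not constitute a proof of the theorem as stated.
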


The continuity bounds in the above theorems are said to be {\em{uniform}} since they do not depend on the specifics of the probability distributions or the quantum states but instead depend only on the distances between them. Moreover, the bounds are {\em{tight}}, in the sense that there are pairs of probability distributions (resp.~states) for which the bounds are saturated\footnote{More precisely, the continuity bounds are tight in the sense that the supremum of the quotient of the left hand side and the right hand side of the inequality in Theorem~\ref{th-1} (resp.~Theorem~\ref{th-2}), taken over all pairs of probability distributions on $\mathcal{X}$ (resp.~states on $\mathcal{H}$) is equal to $1$.}.
In fact, the bound in Theorem~\ref{th-1} is achieved for the following pair of probability distributions: $p_X= \{1-t, \frac{t}{d-1}, \ldots, \frac{t}{d-1}\}$ and $q_X= \{1, 0, \ldots, 0\}$, where $t= TV(p_X,q_X)$ and $d = |{\mathcal{X}}|$. Correspondingly, the bound in Theorem~\ref{th-2} is achieved for a pair of (commuting) states whose eigenvalues are given by the above-mentioned probability distributions, with $t$ now being the trace distance between the states and $d$ being the dimension of the Hilbert space.

There are, however, other interesting and relevant distance measures between pairs of probability distributions or quantum states. Let $LO(p_X, q_X)$, defined through~\eqref{eq:LO}, denote the {\em{local distance}} between the probability distributions $p_X$ and $q_X$, and let $\|\rho - \sigma\|_\infty$, defined through~\eqref{eq:op}, denote the {\em{operator norm distance}} between the states $\rho$ and $\sigma$. It turns out that the optimizing probability distributions (resp.~states) for which equality holds in the above theorems satisfy the property that the local distance (resp.~the operator norm distance) is equal to the total variation distance (resp.~trace distance)~\cite{Sason2013}. This means that the continuity bounds can never be saturated for pairs of probability distributions or states for which these properties are not satisfied.

Realizing this fact, in the classical case, led Sason~\cite{Sason2013} to derive a refined continuity bound for the Shannon entropy in terms of both the  total variation distance and the local distance between the probability distributions. This bound is given by Lemma~\ref{lem:Sason} below. However, Sason's bound is tight only when the quotient of the total variation distance by the local distance is an integer.
\begin{lem}[Sason~\cite{Sason2013}] \label{lem:Sason}
    For two probability distributions, $p_X$ and $q_X$, on a finite set ${\mathcal{X}}$,
    \begin{equation} \label{eq:ContBoundShannon2}
    \begin{aligned}
        & \left| \hr{X}{p} - \hr{X}{q} \right| \\
        & \quad \leq \mathrm{TV}(p_X,q_X) \log(\beta |\sx|-1) + h(\mathrm{TV}(p_X,q_X)),
    \end{aligned}
    \end{equation}
    where
    \begin{equation}
        \beta \coloneqq \frac{\mathrm{LO}(p_X,q_X)}{\mathrm{TV}(p_X,q_X)}.
    \end{equation}
    Furthermore, this bound is tight when $1/\beta$ is an integer.
\end{lem}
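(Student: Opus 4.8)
The plan is to prove the one–sided bound $\hr{X}{p} - \hr{X}{q} \le t\log(\beta d - 1) + h(t)$, where $t := \mathrm{TV}(p_X,q_X)$ and $d := |\sx|$; the reverse inequality then follows by symmetry, since both $\mathrm{TV}$ and $\mathrm{LO}$ (hence $\beta$) are invariant under swapping $p_X$ and $q_X$. The starting point is the maximal–coupling decomposition that already underlies Theorem~\ref{th-1}: writing $m(x) = \min(p_X(x), q_X(x))$, one has $p_X = (1-t)\,w + t\,a$ and $q_X = (1-t)\,w + t\,b$, where $w$, $a$, $b$ are the probability distributions obtained by normalising $m$, $(p_X - q_X)_+$ and $(q_X - p_X)_+$, and where $a$ and $b$ have disjoint supports inside $\sx$. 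Coupling $X\sim p_X$ and $Y\sim q_X$ through a common $\mathrm{Bernoulli}(t)$ variable $Z$ that selects the shared part $w$ when $Z=0$ and the disjoint parts $a,b$ when $Z=1$, and then using $H(X)\le H(X,Z)=H(Z)+H(X|Z)$ together with $H(Y)\ge H(Y|Z)$, I would obtain the key estimate
\begin{equation}
\hr{X}{p} - \hr{X}{q} \le h(t) + t\bigl(H(a) - H(b)\bigr),
\end{equation}
so that it suffices to show $H(a) - H(b) \le \log(\beta d - 1)$.

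The refinement over Theorem~\ref{th-1} should come entirely from \emph{not} discarding $H(b)$ (discarding it, i.e.\ bounding $H(a)-H(b)\le H(a)\le\log(d-1)$, merely reproduces Csisz\'ar's bound) and from exploiting the local distance. Since every coordinate of $p_X - q_X$ has absolute value at most $\mathrm{LO}(p_X,q_X) = \beta t$, the normalised excesses satisfy $\|a\|_\infty \le \beta$ and $\|b\|_\infty \le \beta$. Two consequences drive the argument. First, a probability distribution with sup-norm at most $\beta$ must be spread over at least $1/\beta$ atoms, so $|\mathrm{supp}(b)| \ge 1/\beta$; disjointness of the supports then forces $|\mathrm{supp}(a)| \le d - 1/\beta$, whence $H(a) \le \log|\mathrm{supp}(a)| \le \log(d - 1/\beta)$. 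Second, and crucially, the Shannon entropy dominates the min-entropy, giving $H(b) \ge -\log\|b\|_\infty \ge -\log\beta$. Adding the two estimates yields
\begin{equation}
H(a) - H(b) \le \log(d - 1/\beta) + \log\beta = \log(\beta d - 1),
\end{equation}
which is precisely the claimed bound.

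For tightness, suppose $k := 1/\beta$ is a positive integer and (choosing $d \ge 2k$) take $q_X$ uniform on $k$ symbols and $p_X$ obtained from it by moving total mass $t$ uniformly onto $d-k$ fresh symbols, so that $q_X = (\tfrac1k,\dots,\tfrac1k,0,\dots,0)$ and $p_X = (\tfrac{1-t}{k},\dots,\tfrac{1-t}{k},\tfrac{t}{d-k},\dots,\tfrac{t}{d-k})$. A direct computation gives $\mathrm{LO}(p_X,q_X)=t/k$, hence $\beta=1/k$, and $\hr{X}{p}-\hr{X}{q} = h(t) + t\log\tfrac{d-k}{k} = t\log(\beta d - 1) + h(t)$, saturating the bound. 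This also explains the integrality hypothesis: the support estimate is saturated only when $b$ occupies exactly $1/\beta$ atoms, and $H(b) = -\log\|b\|_\infty$ only when $b$ is flat at height $\beta$ on its support — both achievable simultaneously exactly when $1/\beta\in\mathbb{N}$; otherwise the integer rounding $|\mathrm{supp}(b)|\ge\lceil 1/\beta\rceil$ and the residual min-entropy gap leave the inequality strict.

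I expect the one genuinely delicate point to be the retention of $H(b)$ via the inequality $H(b) \ge -\log\|b\|_\infty$: recognising that the correct way to keep this term is through the min-entropy — rather than, say, a support-size lower bound, which would not combine cleanly with the support \emph{upper} bound on $a$ — is what makes the two estimates add up to exactly $\log(\beta d - 1)$, and is the heart of the argument. Verifying the elementary facts that $\|b\|_\infty\le\beta$ implies both $|\mathrm{supp}(b)|\ge 1/\beta$ and $H(b)\ge-\log\beta$ is routine, as is checking that all inequalities in the coupling step ($Z$ a function of $X$, and $Z$ independent of $Y$) are saturated by the extremal pair above.
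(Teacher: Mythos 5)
Your proof is correct. Note that the paper itself does not prove Lemma~\ref{lem:Sason}: it quotes the bound from Sason~\cite{Sason2013} and only exhibits the tightness example \eqref{eq:probaTightSason1}--\eqref{eq:probaTightSason2}, which is exactly your extremal pair ($\tilde{q}_X$ flat at height $\beta$ on $1/\beta$ atoms, $\tilde{p}_X$ obtained by moving mass $t$ uniformly onto the remaining $d-1/\beta$ atoms, with $d\geq 2/\beta$ guaranteed by the classical analogue of Lemma~\ref{lem:constDimQ}). Your chain of estimates checks out in full: the coupling step $\hr{X}{p}-\hr{X}{q}\leq h(t)+t\bigl(H(a)-H(b)\bigr)$ is valid since $H(X)\leq H(Z)+H(X|Z)$ and $H(Y)\geq H(Y|Z)$ with the common $H(w)$ terms cancelling; $\|a\|_\infty,\|b\|_\infty\leq\beta$ forces $|\mathrm{supp}(b)|\geq 1/\beta$, hence $H(a)\leq\log(d-1/\beta)$ by disjointness, while the min-entropy bound gives $H(b)\geq-\log\beta$, and the two add to $\log(\beta d-1)$; the symmetry reduction is legitimate because swapping $p_X,q_X$ fixes both $\mathrm{TV}$ and $\mathrm{LO}$. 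This maximal-coupling route is essentially Sason's original method (his paper derives the bound precisely via maximal coupling), so relative to the cited source you have reconstructed the intended argument rather than found a new one. By contrast, the present paper's proof of its strengthening, Theorem~\ref{theo:main}, avoids coupling entirely and instead uses rearrangement, a majorization step (Lemma~\ref{lem:majq}), an explicit joint distribution $q'_{XY}$, and Schur concavity; that heavier machinery buys back exactly the two pieces of slack you diagnose at the end --- the integer rounding $|\mathrm{supp}(b)|\geq\lceil 1/\beta\rceil$ (whence $d_-=d-\lceil\eps/\nu\rceil$ in \eqref{eq:rem}) and the gap in $H(b)\geq-\log\|b\|_\infty$ when $b$ cannot be flat (whence the $d_+\,\nu\log\nu+\mu\log\mu$ terms in \eqref{eq:RHSBound}) --- so your closing remark on why the bound is strict for $1/\beta\notin\mathbb{Z}$ is precisely the content of Lemma~\ref{lem:MeLeqSason}.
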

\medskip

In this paper, we sharpen Sason's inequality (Lemma~\ref{lem:Sason}) by proving a tight, uniform continuity bound for the Shannon entropy in terms of both the total variation distance and the local distance between 
a pair of probability distributions (see Theorem~\ref{theo:main}). We then derive a quantum analogue of Lemma~\ref{lem:Sason}, namely,  a uniform continuity bound for the von Neumann entropy, in terms of both the trace distance and operator norm distance between a pair of quantum states (see Theorem~\ref{theo:ContinuityOpNorm}). The bound is valid under a constraint relating these two distances (stated in the theorem) and it is tight when the quotient of the trace distance by the operator norm distance is an integer. Next, we use this result to obtain upper bounds on the quantum and private classical capacity of channels, by generalizing the notion of approximate degradable channels introduced by Sutter et al.~\cite{Sutter2017}.
\medskip

\noindent
{\bf{Layout of the paper with a summary of our contributions:}} Relevant notations and definitions are given in Section~\ref{sec:prelims}. Our main results on continuity bounds on entropies are stated in Theorem~\ref{theo:main} and Theorem~\ref{theo:ContinuityOpNorm} of Section~\ref{sec:main-results}. 
The proofs of these theorems are given in Section~\ref{sec:proofs-entropies}. In Section~\ref{sec:quantum-cap} we introduce the notion of $(\eps, \nu)$-degradable channels and use the result stated in Theorem~\ref{theo:ContinuityOpNorm} to obtain upper bounds on the quantum and private classical capacity of such channels. In fact, we prove two sets of results: $(i)$ Theorems~\ref{theo:cohinfcont} and~\ref{theo:upperQuantCapa} are expressed in terms of the diamond norm- and cb-norm distances (between the complementary channel and the channel composed with a degrading channel). These norms are frequently used in quantum information theory; $(ii)$ Propositions~\ref{prop:cohinfcont2} and~\ref{prop:upperQuantCapa2}, on the other hand, give bounds on the quantum and private classical capacity in terms of efficiently computable\footnote{That is, in terms of semidefinite programs (SDPs).} upper bounds on the unstabilized versions of the diamond norm- and cb norm distances, with the suprema in their definitions being restricted to the set of density matrices. We include these Propositions because they are guaranteed to provide upper bounds on the quantum and private classical capacity which are at least as good as that obtained in~\cite{Sutter2017}. The SDP formulations of bounds on these latter norms are given in Proposition~\ref{prop:primalDualUpperBound1to1} and Proposition~\ref{prop:SDP-ub-infty} respectively, and might be of independent interest. As an application of our results, we consider the qubit depolarizing channel and show that Proposition~\ref{prop:upperQuantCapa2} yields a tighter bound on its quantum capacity in comparison to that given by Sutter et al.~\cite{Sutter2017} (see Figs.~\ref{fig:depol1} and~\ref{fig:depol2}).
We conclude the paper by stating a conjecture and some open problems in Section~\ref{sec:discussions}.

\section{Mathematical preliminaries}
\label{sec:prelims}

\textbf{Classical systems.} In this paper, we take logarithms to base $2$. Let $\sx \coloneqq \left\lbrace 1,2, \cdots, |\sx| \right\rbrace$ be a finite set, and let $\mathcal{P}_{\sx}$ denote the set of probability distributions on $\sx$.
The {\em{total variation distance}} between $p_X= \{p_X(x)\}_{x \in \sx}, q_X = \{q_X(x)\}_{x \in \sx}
\in \mathcal{P}_{\sx}$ is defined as
\begin{equation} \label{eq:TV}
	\mathrm{TV}(p_X,q_X) \coloneqq \frac{1}{2} \sum_{x \in \sx} |p_X(x) - q_X(x)|,
\end{equation}
while the {\em{local distance}} is defined as
\begin{equation} \label{eq:LO}
	\mathrm{LO}(p_X,q_X) \coloneqq \sup_{x \in \sx} |p_X(x) - q_X(x)|.
\end{equation}
The Shannon entropy of a discrete random variable $X$ with distribution $p_X$ is defined as~\cite{Shannon} $H(X)_p \coloneqq - \sum_{x \in \sx} p_X(x) \log p_X(x)$.
It belongs to a family of functions related to the concept of majorization. Given $\bos{u} \in \mathbb{R}^d$ for some finite dimension $d$, let $\bos{u}^{\downarrow} \in \mathbb{R}^d$ be the vector containing the elements of $\bos{u}$ arranged in non-increasing order. For $\bos{u}, \bos{v} \in \mathbb{R}^d$, we say $\bos{u}$ is majorized by $\bos{v}$, (denoted as $\bos{u} \prec \bos{v}$)~\cite{Majorization}, if
\begin{equation} \nonumber
    \sum_{j=1}^k \bos{u}^{\downarrow}_j \leq \sum_{j=1}^k \bos{v}^{\downarrow}_j, \quad \forall k = 1, \cdots, d-1, \quad \sum_{j=1}^d \bos{u}^{\downarrow}_j = \sum_{j=1}^d \bos{v}^{\downarrow}_j.
\end{equation}
A function $f:\mathbb{R}^d \to \mathbb{R}$ is said to be Schur convex if $f(\bos{u}) \leq f(\bos{v})$ for any pair $\bos{u}, \bos{v} \in \mathbb{R}^d$ with $\bos{u} \prec \bos{v}$, and it is said to be Schur concave if $(-f)$ is Schur convex. The Shannon entropy is a notable example of Schur concave functions when it is taken as a function of the vectors of probability distributions. In what follows, $\lceil \cdot \rceil$ represents the ceiling function, $\lfloor \cdot \rfloor$ is the floor function and, for $\eps \in [0,1]$, $h(\eps) \coloneqq -\varepsilon \log \varepsilon - (1-\varepsilon) \log (1-\varepsilon)$ denotes the binary entropy while $g(\eps) \coloneqq (1+\eps) \log(1+\eps) - \eps \log \eps$. The latter function is sometimes called \emph{bosonic entropy function}.
%as $g(\eps)$ represents the entropy of a thermal Gaussian bosonic state parameterized by $\eps$.}
\medskip

\noindent
\textbf{Quantum systems.} Let $\mathcal{H}$ denote the Hilbert space associated to a finite-dimensional quantum system;  $\dim {\mathcal{H}} = d < \infty$.  Let ${\mathcal{L}}({\mathcal{H}})$ denote the algebra of linear operators acting on ${\mathcal{H}}$, and ${\mathcal{D}}(\mathcal{H}) \subset {\mathcal{L}}({\mathcal{H}})$  denote the set of states (density matrices) of the system, i.e.~positive semidefinite operators of unit trace.  Recall that the Schatten $p$-norm of any operator $A \in {\mathcal{L}}(\mathcal{H})$ for $p \in [1, \infty)$  is defined as ${\displaystyle \|A\|_{p}=[{\mathrm{Tr}}(|A|^{p})]^{\frac {1}{p}}},$ with $|A|= \sqrt{A^\dagger A}$ . Then ${\displaystyle \|A\|_{1} = {\mathrm{Tr}} |A| }$ is the trace norm, whereas the operator norm is given by
\begin{equation} \label{eq:op}
    \|A\|_\infty \coloneqq \lim_{p \to \infty}\|A\|_p= \sup \left\{\|A\ket{\psi}\| : \| \ket{\psi} \| =1, \ket{\psi} \in {\mathcal{H}} \right\}.
\end{equation}
The trace norm and operator norm are both multiplicative over tensor products~\cite{Watrousln}, i.e., $||A \otimes B||_i = ||A||_i ||B||_i$ for $i=1,\infty$ and $A,B \in {\mathcal{L}}(\mathcal{H})$. All Schatten $p$-norms, for $p \in [1, \infty]$, obey the duality relations~\cite{Watrousln}
\begin{equation} \label{eq:normduality}
    ||A||_p = \max \{ | \inner{B}{A} | : B \in {\mathcal{L}}(\mathcal{H}), ||B||_q \leq 1 \},
\end{equation}
where $1/p + 1/q = 1$ and $\inner{B}{A} \coloneqq \tr{B^{\dagger} A}$ is the Hilbert--Schmidt inner product on ${\mathcal{L}}(\mathcal{H})$.

The trace distance between two states $\rho, \sigma \in {\mathcal{D}}({\mathcal{H}})$ is given by $\mathrm{T}(\rho, \sigma) \coloneqq \frac{1}{2}||\rho-\sigma||_1$, while the operator norm distance between $\rho$ and $\sigma$ is given by $\|\rho-\sigma\|_\infty$.
By H\"older's inequality, ${\displaystyle \|A\|_1 \leq d \|A\|_\infty}$ for any $A \in {\mathcal{L}}({\mathcal{H}})$, which implies that
$\mathrm{T}(\rho, \sigma) \leq \frac{d}{2} \| \rho - \sigma \|_{\infty}.$
Further, by monotonicity of Schatten-$p$ norms,
$\|\rho - \sigma\|_{\infty} \leq \| \rho - \sigma \|_1 = 2 \mathrm{T}(\rho, \sigma)$.
However, both these inequalities can be strengthened to the inequalities~\eqref{st-op} of Lemma~\ref{lem:constDimQ} in Section~\ref{sec:main-results}.
The von Neumann entropy of a state $\rho$ is given by $S(\rho)\coloneqq - {\mathrm{Tr}}(\rho \log \rho)$.

A Hermiticity-preserving map, $\Phi$, from a system $A$ to a system $B$ (with associated Hilbert spaces $\mathcal{H}_A$ and $\mathcal{H}_B$) is a linear map: $\Phi : \mathcal{L}(\mathcal{H}_A) \rightarrow \mathcal{L}(\mathcal{H}_B)$ such that $\Phi(X) \in \mathcal{L}(\mathcal{H}_B)$ is Hermitian for all choices of $X \in \mathcal{L}(\mathcal{H}_A)$ Hermitian.
For notational simplicity, we denote such a map simply as $\Phi: A \rightarrow B$.
A quantum channel, $\Phi : A \rightarrow B$, is a linear, completely positive trace-preserving (CPTP) map.
A linear map $\tilde{\Phi} : A \rightarrow B$ is said to be completely positive and unital (CPU) if $\tilde{\Phi}({\mathds{1}}_A) = {\mathds{1}}_B$, where ${\mathds{1}}_A$ and ${\mathds{1}}_B$ denote the identity operators in ${\mathcal{L}}({\mathcal{H}}_A)$ and ${\mathcal{L}}({\mathcal{H}}_B)$, respectively.
The adjoint, $\Phi^*$, of a quantum channel $\Phi$ is defined through the relation $\tr{B \Phi(A)} = \tr{\Phi^*(B)A}$, and is a CPU map. By Stinespring's dilation theorem, one can choose an  isometry $V: \mathcal{H}_A \to \mathcal{H}_B \otimes \mathcal{H}_E$ such that $\Phi(\rho) = \Tr_E( V \rho V^\dagger)$ for any $\rho \in {\mathcal{D}}(\mathcal{H}_A)$.  We denote such an isometry simply as $V: A \to BE$. The complementary channel $\Phi^c$ is defined through the relation: $\Phi^c(\rho) = \Tr_B (V \rho V^\dagger)$ for any $\rho \in {\mathcal{D}}(\mathcal{H}_A)$. In this paper, the isometry $V$ is chosen such that the complementary channel is \emph{minimal}, that is, its output dimension is the least amongst all possible ones. The Choi--Jamio\l{}kowski representation of the linear map $\Phi: A \rightarrow B$ is the operator $J(\Phi) \in \mathcal{L}(\mathcal{H}_B \otimes \mathcal{H}_A)$ defined as
\begin{equation}
    J(\Phi) \coloneqq d \, (\Phi \otimes \rm{id}_A)(\proj{\Omega}),
\end{equation}
where $\ket{\Omega} \coloneqq \frac{1}{\sqrt{d}} \sum_{j=1}^d \ket{jj} \in {\mathcal{H}_A \otimes \mathcal{H}_A}$ denotes a maximally entangled state, and $d=\dim \mathcal{H}_A$. The operator $J(\Phi)$ is called the {\em{Choi matrix}} (or {\em{Choi operator)}} of the linear map. A linear map $\Phi: A \to B$ is completely positive (CP) if and only if $J(\Phi) \geq 0$, and it is trace-preserving (TP) if and only if $\Tr_B(J(\Phi)) = {\mathds{1}}_A$. For any input state $\rho\in \mathcal{D}(\mathcal{H}_A)$, the output of the linear map $\Phi: A \to B$ is given in terms of its Choi matrix as follows:
\begin{equation} \label{eq:Choi2}
    \Phi(\rho) = \Tr_A\left(J(\Phi) ({\rm{id}}_B \otimes \rho^T)\right),
\end{equation}
where the superscript $T$ denotes transposition, and the transpose is taken with respect to the basis chosen for the maximally entangled state $\ket{\Omega}$.

For a linear map $\Phi: A \to B$, one defines the superoperator norms (induced by Schatten norms) as follows: for $1 \leq p,q \leq \infty$,
\begin{equation} \nonumber
    \| \Phi \|_{q \rightarrow p} \coloneqq \max \left\{||\Phi(X)||_p : X \in {\mathcal{L}}({\mathcal{H}}_A), ||X||_q \leq 1 \right\}.
\end{equation}
However, more generally, one considers the map to be applied to only a subsystem of a larger system. In this case the following {\em{stabilized versions}} of the above norms are more relevant (see e.g.~\cite{Watrous2009}):
\begin{equation}
    ||| \Phi |||_{q \rightarrow p} \coloneqq \sup_{k \geq 1} \|\Phi \otimes {\rm{id}}_k \|_{q \rightarrow p}.
\end{equation}
In the above, ${\rm{id}}_k$ denotes the identity map: ${\rm{id}}_k :{\mathcal{L}}({\mathbb{C}}^k) \to {\mathcal{L}}({\mathbb{C}}^k)$.
Two important cases are the following:
$(i)$ the diamond norm (or completely bounded trace norm)
$\| \Phi\|_\diamond \coloneqq  ||| \Phi |||_{1 \rightarrow 1}$;
$(ii)$ the completely bounded norm (or completely bounded spectral (or operator) norm, cb-norm for short)
$\| \Phi \|_{\cb} \coloneqq  ||| \Phi |||_{\infty \rightarrow \infty}$.
It can be shown that~\cite{Watrous2009}
\begin{equation}
\|\Phi\|_\diamond = \| \Phi \otimes {\rm{id}}_{A}\|_{1 \rightarrow 1}
\,\, ; \,\,
\|\Phi\|_{\cb} = \| \Phi \otimes {\rm{id}}_{B}\|_{\infty \rightarrow \infty},
\end{equation}
where ${\rm{id}}_A$ (resp.~${\rm{id}}_B$) denote the identity maps on the systems $A$ and $B$ respectively; further,
\begin{equation} \label{eq:dualityDiamondCB}
\|\Phi \|_\diamond = \|\Phi^*\|_{\cb}.
\end{equation}
In the context of this work, we will also be interested in unstabilized versions of these norms, with the input space of the map being that of density matrices:
\begin{equation} \label{eq:normUnstabilized}
    \| \Phi \|_{1 \rightarrow p}^{\mathrm{D}} \coloneqq \max \{ ||\Phi(\rho)||_{p} : \rho \in \dhx \}.
\end{equation}
In what follows, we write $\| \Phi \|_p^{\mathrm{D}} \coloneqq \| \Phi \|_{1 \rightarrow p}^{\mathrm{D}}$ for notational simplicity. Similar unstabilized norms (where the input space of the map is that of Hermitian matrices) have been extensively studied in a series of works~\cite{Amosov2000,King2004,Watrous2004,Audenaert2005}. Norms such as the one defined in~\eqref{eq:normUnstabilized} are of particular interest in the context of quantum information processing, since the input to quantum channels are necessarily quantum states (i.e.~density matrices).

\section{Main results on continuity bounds for entropies}
\label{sec:main-results}

Our work was inspired by the paper of Sason~\cite{Sason2013}, in which the author derived a uniform continuity bound (see Lemma~\ref{lem:Sason} in the Introduction) on the Shannon entropy which depends on both the total variation distance- and the local distance between a pair of probability distributions. As mentioned in the Introduction, he derived this bound after realizing that the original, celebrated continuity bound given by Theorem~\ref{th-1} (and attributed to Cziszar) cannot be saturated by a pair of probability distributions $(p_X, q_X)$ when $\mathrm{LO}(p_X,q_X) \neq \mathrm{TV}(p_X,q_X)$.
Sason's bound, given by~\eqref{eq:ContBoundShannon2} in Lemma~\ref{lem:Sason}, is tight when $1/\beta$ is an integer. To see this, observe that the following pair of probability distributions $(\tilde{q}_{X}, \tilde{p}_{X})$ saturates the bound:
\begin{equation} \label{eq:probaTightSason1}
	\tilde{q}_{X}(x) \coloneqq \begin{cases}
		\nu/\eps, & \forall x \in \{ 1, \cdots, 1/\beta \}, \\
		0, & \textrm{else},
	\end{cases}
\end{equation}
and
\begin{equation} \label{eq:probaTightSason2}
	\tilde{p}_{X}(x) \coloneqq \begin{cases}
		(1-\eps)\nu/\eps, & \forall x \in \{ 1, \cdots, 1/\beta \}, \\
		\eps/(|\sx|-1/\beta), & \textrm{else},
	\end{cases}
\end{equation}
where $\eps \coloneqq \mathrm{TV}(\tilde{p}_X,\tilde{q}_X)$ and $\nu \coloneqq \mathrm{LO}(\tilde{p}_X,\tilde{q}_X)$. A simple calculation shows that $\left| \hr{X}{\tilde{p}} - \hr{X}{\tilde{q}} \right| = \hr{X}{\tilde{p}} - \hr{X}{\tilde{q}}$ attains the bound.
However, Sason's bound~\eqref{eq:ContBoundShannon2} is not tight when $1/\beta$ is not an integer. In order to show this, we prove a tight bound, and show that it is strictly lower than the one in~\eqref{eq:ContBoundShannon2} when $1/\beta$ is not an integer.

Define the set
\begin{equation} \label{eq:setnueps}
    \mathcal{S} \coloneqq \{ (\eps,\nu) \in [0, 1] \times [0, 1] : \nu \leq \eps \},
\end{equation}
and, for any integer $d > 1$, the function $f_d : \mathcal{S} \rightarrow [0, \infty)$ as
\begin{equation} \label{eq:RHSBound}
    f_d(\eps,\nu) \coloneqq h(\eps) + d_+ \, \nu \log \nu + \mu \log \mu + \eps \log d_- -\eps \log \eps,
\end{equation}
where $d_+ \in \mathbb{Z}$, $\mu \in [0,\nu)$ and $d_- \in \mathbb{Z}$ are defined through the following relations:
% \begin{equation} \label{eq:rem}
%     \eps = d_+ \nu + \mu,
% \end{equation}
\begin{equation} \label{eq:rem}
    d_+ \coloneqq \lfloor \eps/\nu \rfloor, \quad \mu \coloneqq \eps - d_+ \nu, \quad d_- \coloneqq d - \lceil \eps/\nu \rceil.
\end{equation}
% \begin{equation}
%     d_- \coloneqq \begin{cases}
% 	d-d_+, & \mathrm{if} \, \mu = 0, \\
% 	d-d_+-1, & \mathrm{else}.
%     \end{cases}
% \end{equation}
%
\begin{theo} \label{theo:main}
    Let $p_X, q_X \in \mathcal{P}_{\sx}$ with $d= |\sx|$, $t\coloneqq\mathrm{TV}(p_X,q_X) $ and $r\coloneqq\mathrm{LO}(p_X,q_X)$. Then the following inequality holds:
    \begin{equation} \label{eq:theo:ContinuityLocal}
        \left| \hr{X}{p} - \hr{X}{q} \right| \leq f_d(t,r),
    \end{equation}
    where the function $f_d$ is defined through~\eqref{eq:RHSBound}.
    Moreover, the inequality is tight.
\end{theo}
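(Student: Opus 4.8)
The plan is to control the signed gap $\hr{X}{p}-\hr{X}{q}$, which after taking absolute values suffices because $f_d$ depends only on $t$ and $r$, so the symmetric role of $p_X,q_X$ covers the opposite sign. First I would decompose both distributions around their common overlap. Put $w(x):=\min\{p_X(x),q_X(x)\}$, so that $\sum_x w(x)=1-t$, and let $a(x):=(p_X(x)-q_X(x))^+/t$ and $b(x):=(q_X(x)-p_X(x))^+/t$ be the normalised positive and negative parts; these are probability distributions supported on the disjoint sets $P=\{x:p_X(x)>q_X(x)\}$ and $N=\{x:q_X(x)>p_X(x)\}$. With $\hat w:=w/(1-t)$ one has $p_X=(1-t)\hat w+t\,a$ and $q_X=(1-t)\hat w+t\,b$, the two mixtures sharing the same overlap $\hat w$.

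Next I would feed this common decomposition into two complementary entropy inequalities: the grouping/mixing bound $\hr{X}{p}\le (1-t)H(\hat w)+t\,H(a)+h(t)$ and the concavity bound $\hr{X}{q}\ge (1-t)H(\hat w)+t\,H(b)$. Subtracting eliminates the overlap term and yields $\hr{X}{p}-\hr{X}{q}\le t\,[H(a)-H(b)]+h(t)$, reducing the problem to bounding $H(a)$ from above and $H(b)$ from below, each via one of the two distance constraints.

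For the bound on $H(a)$: on $N$ the quantities $q_X(x)-p_X(x)$ are at most $r$ and sum to $t$, so $|N|\ge\lceil t/r\rceil$, whence $|P|\le d-\lceil t/r\rceil=d_-$ and $H(a)\le\log|P|\le\log d_-$. For the bound on $H(b)$: every entry of $b$ equals $(q_X(x)-p_X(x))/t\le r/t$, so $b$ lies in the set of distributions with $\ell^\infty$-norm at most $r/t$; the entropy-minimiser there is the most-peaked distribution $b^\star$ consisting of $d_+=\lfloor t/r\rfloor$ entries equal to $r/t$ and one entry equal to $\mu/t$. Since one checks $b\prec b^\star$ for every admissible $b$ (the capped most-peaked distribution majorizes all others), Schur-concavity of $H$ gives $H(b)\ge H(b^\star)$. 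A short computation gives $t\,H(b^\star)=t\log t-d_+ r\log r-\mu\log\mu$; inserting both estimates collapses the right-hand side to $h(t)+t\log d_--t\,H(b^\star)=f_d(t,r)$, as claimed.

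For tightness I would exhibit $q^\star$ with $d_+$ entries equal to $r/t$, one entry equal to $\mu/t$ and the rest zero, and $p^\star$ with matching entries $(1-t)r/t$ and $(1-t)\mu/t$ on those coordinates together with $d_-$ entries equal to $t/d_-$, then verify $\mathrm{TV}(p^\star,q^\star)=t$, $\mathrm{LO}(p^\star,q^\star)=r$ and $\hr{X}{p^\star}-\hr{X}{q^\star}=f_d(t,r)$; by construction this pair turns all four inequalities into equalities (disjoint supports in the mixing bound, $\hat w=b$ in the concavity bound, $a$ uniform on $d_-$ points, and $b=b^\star$). I expect the crux to be the lower bound $H(b)\ge H(b^\star)$ via $\ell^\infty$-capped majorization, since this is exactly what supplies the refinement $d_+ r\log r+\mu\log\mu-t\log t$ over the crude estimate $H(b)\ge 0$, which would only reproduce the weaker bound $h(t)+t\log d_-$. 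A secondary point requiring care is feasibility of the extremal pair, whose sink value $t/d_-$ forces the constraint $t/d_-\le r$ (equivalently $d$ large enough) for $\mathrm{LO}$ to equal $r$ exactly, whereas the inequality itself needs no such restriction.
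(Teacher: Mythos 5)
Your proof is correct, and it takes a genuinely different (and more streamlined) route than the paper's. Where the paper proceeds by iteratively surgering the distributions themselves --- Step~A reorders, Step~B replaces $q_X$ by a weight-transferred $q'_X$ via the majorization lemmas (with a case split $d_+<|I|$ versus $d_+=|I|$), and Step~C introduces an auxiliary joint distribution $q'_{XY}$ and uses $\hr{X}{q'}\geq \pr{X}{Y}{q'}$ to extract the peaked part --- you instead work with the canonical overlap decomposition $p_X=(1-t)\hat w+t\,a$, $q_X=(1-t)\hat w+t\,b$ and two clean mixture-entropy inequalities, so the overlap term $(1-t)H(\hat w)$ cancels immediately. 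The ingredients then match up exactly: your mixing upper bound $\hr{X}{p}\leq(1-t)H(\hat w)+tH(a)+h(t)$ is equivalent to the paper's Step~D subadditivity estimate for $f(u)=-u\log u$ (unwinding $(1-t)H(\hat w)$ reproduces $\sum_x f(w(x))$, and $tH(a)-t\log t \leq t\log d_- - t\log t$ is the paper's Schur-concavity bound on $\sum_{x\in I^c}f(p_X(x)-q_X(x))$), while your single capped-majorization fact $b\prec b^\star$, with $b^\star$ having $d_+$ entries $r/t$ and one entry $\mu/t$, delivers in one stroke what the paper obtains from its transfer lemma (Lemma~\ref{lem:majq}) plus the conditional-entropy extraction of Step~C; indeed $tH(b^\star)=t\log t-d_+r\log r-\mu\log\mu$ is precisely the $y=0$ contribution in the paper's~\eqref{eq:StepEHq}. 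What your organization buys is brevity and the elimination of all case analysis and reordering bookkeeping; what the paper's buys is an explicit trajectory from $(p_X,q_X)$ to the extremal pair, which makes the saturation conditions self-evident. One loose end in your write-up resolves itself from your own argument: the feasibility constraint $t/d_-\leq r$ for the extremal pair (needed so that $\mathrm{LO}(p^\star,q^\star)=r$ rather than something larger) follows by applying your counting bound to $P$ as well as $N$ --- each of $P$ and $N$ carries total weight $t$ in entries of size at most $r$, so $d\geq |P|+|N|\geq 2\lceil t/r\rceil$ whenever the parameters $(t,r)$ are achievable on a $d$-element alphabet, whence $d_-\geq\lceil t/r\rceil\geq t/r$; this is exactly the paper's Lemma~\ref{lem:constDimQ}, invoked there via commuting states, so tightness holds for every achievable parameter triple (with the case $t=0$ trivial).
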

\noindent The proof of Theorem~\ref{theo:main} is given in Section~\ref{sec:proof:theo:main}.

\begin{rem}
    Observe that, for the distributions $p_X$ and $q_X$ satisfying the conditions of Theorem~\ref{theo:main},
    \begin{equation}
	   \eps = \frac{1}{2} \sum_{x \in \sx} |p_{X}(x)-q_{X}(x)| \leq \frac{1}{2} \sum_{x \in \sx} \nu = \frac{1}{2} |\sx| \nu,
    \end{equation}
    We also have that $\nu \leq \eps$ (see~\cite{Sason2013} for a simple proof), so that $\nu \in [2 \eps/|\sx|, \eps]$.
    Now, since $\nu |\sx| \geq 2 \eps$, %Lemma~\ref{lem:MeLeqSason} in the Appendix
    %Lemma~3 of the Appendix~\cite{appIEEE}
    {Lemma~\ref{lem:MeLeqSason} in Appendix~\ref{sec:appA}}
    shows that the right hand side of~\eqref{eq:theo:ContinuityLocal} is strictly smaller 
    than that of~\eqref{eq:ContBoundShannon2} when $1/\beta = \eps/\nu$ is not an integer. Hence, the continuity bound given in Theorem~\ref{theo:main} is tighter than the one given in Lemma~\ref{lem:Sason}.
\end{rem}

In the quantum case, we obtain the following sharpening of the Audenaert-Fannes-Petz inequality stated in Theorem~\ref{th-2}.
\begin{theo} \label{theo:ContinuityOpNorm}
    Let $\rho, \sigma \in {\mathcal{D}}({\mathcal{H}})$, with $ {\mathrm{dim}}\, {\mathcal{H}} = d < \infty$.
    If $\mathrm{T}(\rho, \sigma) \leq \eps \in [0,1]$ and $||\rho-\sigma||_{\infty} \leq \nu \in [0,1]$, such that $\eps \leq \nu d/(\nu d + 3)$, then for $\beta \coloneqq \nu/\eps$,
    \begin{equation} \label{eq:theo:ContinuityOpNorm2}
	\left| S(\rho) - S(\sigma) \right| \leq \eps \log(\beta d-1) + h(\eps).
    \end{equation}
    Moreover, the inequality is tight when $1/\beta$ in an integer.
\end{theo}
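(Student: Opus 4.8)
The plan is to deduce the quantum bound from its classical counterpart, Lemma~\ref{lem:Sason}, via an eigenvalue reduction. Since the von Neumann entropy depends only on the spectrum, let $\bos{p}, \bos{q} \in \mathcal{P}_{\mathcal{X}}$ (with $|\mathcal{X}| = d$) be the vectors of eigenvalues of $\rho$ and $\sigma$, each arranged in non-increasing order, so that $S(\rho) = \hr{X}{p}$ and $S(\sigma) = \hr{X}{q}$. The first task is to bound the two \emph{classical} distances between $\bos{p}$ and $\bos{q}$ by the two \emph{operator} distances between $\rho$ and $\sigma$, using this common (sorted) alignment. For the total variation distance I would invoke Mirsky's eigenvalue perturbation inequality, $\sum_j |\bos{p}_j - \bos{q}_j| \le \|\rho - \sigma\|_1$, which gives $\mathrm{TV}(\bos{p},\bos{q}) \le T(\rho,\sigma) \le \eps$; for the local distance I would invoke Weyl's inequality, $\max_j |\bos{p}_j - \bos{q}_j| \le \|\rho - \sigma\|_\infty$, which gives $\mathrm{LO}(\bos{p},\bos{q}) \le \nu$. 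Both bounds hold simultaneously for the sorted alignment, and the entropy is permutation invariant, so no information is lost.

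Next I would apply Lemma~\ref{lem:Sason} to the pair $(\bos{p},\bos{q})$ with this alignment, obtaining
\[
    |S(\rho) - S(\sigma)| \le t' \log\!\big(\tfrac{r'}{t'}\, d - 1\big) + h(t'),
\]
where $t' := \mathrm{TV}(\bos{p},\bos{q}) \le \eps$ and $r' := \mathrm{LO}(\bos{p},\bos{q}) \le \nu$ (note that $r' \ge 2t'/d$, so the logarithm's argument exceeds $1$ and the bound is well defined). It then suffices to show that the function $G(t,r) := t\log(\tfrac{r}{t} d - 1) + h(t)$ is non-decreasing in each argument over the relevant range, since then $G(t',r') \le G(\eps,\nu)$, which is exactly~\eqref{eq:theo:ContinuityOpNorm2} with $\beta = \nu/\eps$.

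Monotonicity in $r$ is immediate, as $\partial_r G = d/[(rd/t - 1)\ln 2] > 0$. Monotonicity in $t$ is the crux and is where the hypothesis $\eps \le \nu d/(\nu d + 3)$ enters. Setting $s := \nu d$ and differentiating, one finds
\[
    \partial_t G = \log\frac{(s - t)(1 - t)}{t^2} - \frac{1}{\ln 2}\cdot\frac{s}{s - t},
\]
and I would check that $\partial_t G \ge 0$ for all $t \in (0,\eps]$. Since the first term decreases and the subtracted term increases as $t$ grows, $\partial_t G$ is decreasing in $t$, so it is enough to verify the inequality at the largest admissible value of $t$, i.e.\ along the boundary $s = 3t/(1-t)$ corresponding to $t = s/(s+3)$; there it reduces to the single-variable inequality $\log\frac{2+t}{t} \ge \frac{3}{(2+t)\ln 2}$, which holds for all $t \in (0,1]$ (the difference of the two sides is decreasing and remains positive at $t = 1$). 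The constant $3$ in the hypothesis is precisely what makes this endpoint inequality valid up to $t = 1$, which explains its appearance. This calculus estimate is the main obstacle; the rest of the argument is structural.

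Finally, for tightness when $1/\beta = \eps/\nu$ is a positive integer, I would choose $\rho$ and $\sigma$ to be co-diagonal (hence commuting) states whose ordered eigenvalue distributions are the extremal pair $\tilde{q}_X, \tilde{p}_X$ of~\eqref{eq:probaTightSason1}--\eqref{eq:probaTightSason2}. For a co-diagonal pair indexed by the common eigenbasis, $\rho - \sigma$ is diagonal, so $T(\rho,\sigma) = \mathrm{TV}(\tilde{p}_X,\tilde{q}_X)$ and $\|\rho - \sigma\|_\infty = \mathrm{LO}(\tilde{p}_X,\tilde{q}_X)$; a direct computation gives $\mathrm{TV}(\tilde p_X,\tilde q_X) = \eps$ and $\mathrm{LO}(\tilde p_X,\tilde q_X) = \nu$, where one uses that the hypothesis forces $\eps < \nu d/2$, guaranteeing that the maximal single-coordinate gap is attained on the first block $\{1,\dots,1/\beta\}$ and equals $\nu$. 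Since $|S(\rho) - S(\sigma)| = |\hr{X}{\tilde{p}} - \hr{X}{\tilde{q}}|$ saturates Sason's bound, it also saturates~\eqref{eq:theo:ContinuityOpNorm2}, which establishes tightness.
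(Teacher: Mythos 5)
Your proof is correct, but it reaches the classical reduction by a genuinely different route than the paper. The paper dephases $\sigma$ in the eigenbasis of $\rho$ (a pinching $\mathcal{N}_d$): since $\mathcal{N}_d(\rho)=\rho$ and $\mathcal{N}_d$ is unital, majorization gives $S(\mathcal{N}_d(\sigma)) \geq S(\sigma)$, and data-processing for the trace distance together with data-processing of Schatten norms under unital channels (Theorem~II.4 of~\cite{PerezGarcia2006}) controls the classical $\mathrm{TV}$ and $\mathrm{LO}$ distances of the resulting pair of diagonal distributions; this yields only a one-sided comparison, so the paper must first assume w.l.o.g.\ $S(\rho)\leq S(\sigma)$. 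You instead compare the sorted eigenvalue vectors directly via the Lidskii--Mirsky inequality ($\mathrm{TV}(\bos{p},\bos{q})\leq \mathrm{T}(\rho,\sigma)$) and Weyl's perturbation bound ($\mathrm{LO}(\bos{p},\bos{q})\leq \|\rho-\sigma\|_\infty$), which gives \emph{exact} equality of the entropy gaps and dispenses with the w.l.o.g.\ step; the price is invoking Mirsky in place of pinching/data-processing, a comparable-strength tool. From that point on the two arguments coincide in structure: both feed the reduced pair into Lemma~\ref{lem:Sason} and then enlarge $(t',r')$ to $(\eps,\nu)$ by monotonicity, and your calculus reproduces exactly the paper's Lemmas~\ref{lem:incBoundSason1} and~\ref{lem:incBoundSason2} (your derivative $\partial_t G$ is identical to the one computed there). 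Your handling of the monotonicity in $t$ is actually slicker: you observe $\partial_t G$ is decreasing in $t$ and evaluate at the boundary $t = \nu d/(\nu d + 3)$, where it collapses to the one-variable inequality $\log\frac{2+t}{t} \geq \frac{3}{(2+t)\ln 2}$ (which checks out: the difference is decreasing for $t<4$ and equals $\log 3 - 1/\ln 2 > 0$ at $t=1$), whereas the paper proceeds via the estimates $2^{1/\ln 2}\eps^2 \leq \eps(1+2\eps)$ and $\ln x \geq 1 - 1/x$; both make transparent why the constant $3$ appears in the hypothesis. Your tightness argument, including the verification that $\eps < \nu d/2$ forces the extremal local gap $\nu$ to sit on the first block, matches the paper's.
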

\noindent Theorem~\ref{theo:ContinuityOpNorm} is proven in Section~\ref{sec:proof:theo:ContinuityOpNorm}.

\section{Proofs of continuity bounds on entropies}
\label{sec:proofs-entropies}

We employ the following lemma in the proofs of our results.

\begin{lem} \label{lem:constDimQ}
Consider a Hilbert space $\mathcal{H}$ with ${\rm{dim}}\, \mathcal{H}= d<\infty$. Let $\rho, \sigma \in {\mathcal{D}}({\mathcal{H}})$ such that $\eps \coloneqq \mathrm{T}(\rho, \sigma)$ and $\nu \coloneqq ||\rho-\sigma||_{\infty}$. Then we necessarily have that
\begin{equation}\label{st-op}
d \geq 2 \left\lceil \frac{\eps}{\nu} \right\rceil \quad \text{and} \quad \nu \leq \eps.
\end{equation}
\end{lem}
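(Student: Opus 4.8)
The plan is to work entirely with the spectral decomposition of the difference operator $\Delta \coloneqq \rho - \sigma$, which is Hermitian and hence has real eigenvalues $\lambda_1 \geq \cdots \geq \lambda_d$. Since $\rho$ and $\sigma$ are both density matrices, $\Tr \Delta = \Tr \rho - \Tr \sigma = 0$, so $\sum_{j=1}^d \lambda_j = 0$. Both distances are read off directly from the spectrum: $\eps = \tfrac12 \|\Delta\|_1 = \tfrac12 \sum_j |\lambda_j|$ and $\nu = \|\Delta\|_\infty = \max_j |\lambda_j|$. I would carry out the degenerate case $\nu = 0$ (equivalently $\rho = \sigma$, hence $\eps = 0$) separately and up front, since then $\eps/\nu$ is ill-defined; there both inequalities hold trivially, so I assume $\nu > 0$ henceforth.

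The key structural step is to split the spectrum into its positive and negative parts: set $P \coloneqq \sum_{\lambda_j > 0} \lambda_j$ and $N \coloneqq \sum_{\lambda_j < 0} |\lambda_j|$. The traceless condition $\sum_j \lambda_j = P - N = 0$ forces $P = N$, and combining this with $\|\Delta\|_1 = P + N = 2\eps$ yields $P = N = \eps$. I want to emphasize that pinning down the \emph{exact} value $P = N = \eps$, rather than merely $P + N = 2\eps$, is precisely what produces the factor of $2$ in the final bound.

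Next I would run a counting argument. Let $k_+$ and $k_-$ denote the numbers of strictly positive and strictly negative eigenvalues, respectively. Each positive eigenvalue has magnitude at most $\nu$, so $\eps = P \leq k_+ \nu$, giving $k_+ \geq \eps/\nu$; as $k_+$ is an integer, $k_+ \geq \lceil \eps/\nu \rceil$. The identical estimate applied to $N$ gives $k_- \geq \lceil \eps/\nu \rceil$. Since the positive and negative eigenvalues form disjoint subcollections among the $d$ eigenvalues of $\Delta$, we get $d \geq k_+ + k_- \geq 2 \lceil \eps/\nu \rceil$, which is the first claimed bound. For the second, observe that the eigenvalue attaining $\nu = \max_j |\lambda_j|$ is either positive or negative and therefore contributes its full magnitude $\nu$ to either $P$ or $N$; in either case $\eps \geq \nu$.

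I do not anticipate a genuine obstacle beyond the bookkeeping above, since the argument is elementary once one passes to the spectrum of $\rho - \sigma$. The only real subtlety is the use of the traceless condition to establish the balance $P = N = \eps$; everything else is a matter of bounding each eigenvalue by $\nu$ and counting. Handling the $\nu = 0$ case cleanly at the outset removes the last bit of friction.
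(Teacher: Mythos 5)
Your proposal is correct and follows essentially the same route as the paper: its decomposition $\rho-\sigma=Q-S$ into positive parts with orthogonal supports is exactly your split of the spectrum of $\Delta$ into positive and negative eigenvalues, with $\tr{Q}=\tr{S}=\eps$ playing the role of your balance $P=N=\eps$, followed by the same integer-counting and eigenvalue-bounding steps. Your explicit treatment of the degenerate case $\nu=0$ is a minor point of extra care not spelled out in the paper.
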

\noindent Note that a consequence of Lemma~\ref{lem:constDimQ} is that for ${\rm{dim}} \, {\mathcal{H}}=3$, we necessarily have $\nu=\eps$.
\begin{proof}[Proof of Lemma~\ref{lem:constDimQ}]
The difference $\rho-\sigma$ can be expressed as $\rho-\sigma = Q-S$, where $Q$ and $S$ are positive operators with mutually orthogonal supports, and $\tr{Q} = \tr{S} = \eps$. If we write the spectral decompositions
\begin{equation}
    Q = \sum_{i=1}^d \delta^+_i \proj{\psi^+_i}, \quad S = \sum_{i=1}^d \delta^-_i \proj{\psi^-_i},
\end{equation}
where $\delta^+_i, \delta^-_i \geq 0$ for all $i=1, \cdots, d$,
we have, by definition, that $\delta^+_i \leq \nu$ for all $i$ and $\delta^-_i \leq \nu$ for all $i$. We then have $\eps = \sum_{i} \delta^+_i \leq t_+ \nu$, where $t_+$ is the dimension of the support of $Q$.
Since $t_+$ is an integer, it implies $t_+ \geq \left\lceil \eps/\nu \right\rceil$. Similarly, $\eps = \sum_{i} \delta^-_i \leq t_- \nu$, where $t_-$ is the dimension of the support of $S$,
and since $t_-$ is an integer, $t_- \geq \left\lceil \eps/\nu \right\rceil$. From this, we get $d \geq t_+ + t_- \geq 2 \left\lceil \frac{\eps}{\nu} \right\rceil$, which gives the first statement.
To prove the second statement, note that since $\delta^+_i \geq 0$ for all $i$, $\eps \geq \delta^+_i$ for all $i$, so that $\eps \geq \sup_i \delta^+_i$. Similarly, $\delta^-_i \geq 0$ for all $i$, which implies $\eps \geq \delta^-_i$ for all $i$, and $\eps \geq \sup_i \delta^-_i$. From these last two facts, we get $\eps \geq \max \{ \sup_i \delta^+_i, \sup_i \delta^-_i \}= \nu$, which ends the proof.
\end{proof}

\subsection{Proof of Theorem~\ref{theo:main} \label{sec:proof:theo:main}}

Consider two probability distributions $p_{X}$ and $q_{X}$ in $\mathcal{P}_\sx$. In the context of this proof, we set $\eps \coloneqq \mathrm{TV}(p_X,q_X)$ and $\nu \coloneqq \mathrm{LO}(p_X,q_X)$ for notational simplicity. We assume, without loss of generality, that $\hr{X}{p} \geq \hr{X}{q}$ and define $\Delta H \coloneqq \hr{X}{p} - \hr{X}{q}$. The proof of Theorem~\ref{theo:main} consists of a series of steps which are described in detail below. The key idea is to alter the distributions $q_{X}$ and $p_{X}$, in a series of iterative steps such that the difference $\hr{X}{p}-\hr{X}{q}$ never decreases, and the total variation distance and local distance also remain unchanged. As a result of these manipulations, the entropy difference $\Delta H$ attains its maximal value under the constraints on the total variation distance and the local distance, yielding the desired upper bound.
\bigskip

\noindent \textbf{Intuitive idea behind each of the steps in the proof:} In Step~A, we reorder the elements of $q_{X}$ and $p_{X}$ in the following way: we group the elements for which $q_{X}(x) \geq p_{X}(x)$ together, and those for which $q_{X}(x) < p_{X}(x)$ together; in each group, we then reorder the elements of $q_X$ in non-increasing order, and reorder the elements of $p_X$ correspondingly. This preserves both the total variation distance and the local distance between them, as well as their Shannon entropies. In Step~B, we define a new probability distribution $q'_{X}$ that is obtained from $q_X$ by moving probability weights in it, in a manner which  decreases its Shannon entropy, while preserving the total variation distance to the probability distribution $p_{X}$. In Step~C, we define a joint distribution $q'_{XY}$ whose $\mathcal{X}$-marginal is $q'_{X}$, allowing us to lower bound the entropy, $H(X)_{q'}$, of $q'_{X}$ in terms of the conditional entropy $H(X|Y)_{q'}$ from which the contribution of the saturating distribution $\tilde{q}_{X}$ (defined through~\eqref{eq:probaTightMe1}) can be readily extracted. The probability distribution $p_{X}$ remains unchanged, as in the previous step. In Step~D, we derive an upper bound on the entropy, $H(X)_p$ of $p_{X}$. Finally, in Step~E, we combine the results of the previous steps to upper bound the entropy difference $\Delta H$.

Henceforth, we choose to represent a probability distribution $p_{X} \in \mathcal{P}_{\sx}$ as a column vector $\left\lbrace p_{X}(x) \right\rbrace_{x \in \sx}$ of dimension $|\sx|$, and hence use the notation $\left\lbrace p_{X}(x) \right\rbrace_{x \in \sx}\coloneqq  \left( p_{X}(1), \ldots, p_{X}(|\sx|) \right)$. Moreover, we use the symbols $q_{X}$ and $p_{X}$ to refer to both the probability distributions as well as their vector representations.
\medskip

\noindent \textbf{Step A:} Define the two sets
\begin{equation} \nonumber
        I \coloneqq \left\lbrace x : q_{X}(x) \geq p_{X}(x) \right\rbrace, \quad I^c \coloneqq \left\lbrace x : q_{X}(x) < p_{X}(x) \right\rbrace.
\end{equation}
Note that both $I$ and $I^c$ are non-empty, except in the trivial case where $q_X=p_X$ (for which Theorem~\ref{theo:main} is automatically satisfied). In both $p_{X}$ and $q_{X}$, arrange the elements corresponding to indices in $I$ ahead of those corresponding to indices in $I^c$. Then, in both $p_{X}$ and $q_{X}$, order the elements corresponding to indices in $I$ such that $q_{X}(x \! + \! 1) \leq q_{X}(x), \forall x \in I \setminus \left\lbrace |I| \right\rbrace$, and do the same for the elements corresponding to indices in $I^c$, so that we also have $q_{X}(x \! + \! 1) \leq q_{X}(x), \forall x \in I^c \setminus \left\lbrace |I^c| \right\rbrace$.
These operations correspond to permutations of elements within $p_{X}$ and $q_{X}$. As a consequence, neither $\hr{X}{p}$ nor $\hr{X}{q}$ changes under these operations. Moreover, both the total variation distance and the local distance between $p_X$ and $q_X$ remain unchanged, since the exact same permutations have been performed in $p_{X}$ and $q_{X}$.
\medskip

\noindent \textbf{Step B:} As stated in the definition of $f_d$ in~\eqref{eq:RHSBound},
% define the unique $d_+ \in \mathbb{Z}$ and $\mu \in [0,\nu)$ through the relation $\eps = d_+ \nu + \mu$ (see~\eqref{eq:rem}).
define $d_+ \in \mathbb{Z}$ and $\mu \in [0,\nu)$ as $d_+ \coloneqq \lfloor \eps/\nu \rfloor$ and $\mu \coloneqq \eps - d_+ \nu$ (see~\eqref{eq:rem}).
This simply means that $\mu$ is the remainder of the division of $\eps$ by $\nu$, while $d_+$ is its quotient, i.e. $\eps = d_+ \nu + \mu$. Note here that if $\mu \neq 0$, then $d_+ + 1 \leq |I|$, or equivalently, $d_+ < |I|$. This can be seen with a proof by contradiction. Indeed, suppose $d_+ \geq |I|$, then $d_+ \nu \geq |I| \nu = \sum_{i=1}^{|I|} \nu$. Since $q_X(x)-p_X(x) \leq \nu$ for all $x \in I$, it implies that $d_+ \nu \geq \sum_{x=1}^{|I|} (q_X(x)-p_X(x)) = \eps$. But since $\mu \neq 0$, we know from~\eqref{eq:rem} that $d_+ \nu < \eps$, leading to a contradiction. Similarly, it can be shown by a proof by contradiction that if $\mu = 0$, then $d_+ \leq |I|$. As a consequence, we now look at two different cases: first, the case when $d_+<|I|$, for which $\mu \in [0,\nu)$, and secondly, the case when $d_+ = |I|$, for which we necessarily have $\mu=0$.

$(i)$ $d_+<|I|$: In that case, define the vector $q'_X$ whose elements satisfy the following relations:
\begin{equation} \label{eq:Bi}
	q'_{X}(x) \coloneqq \begin{cases}
		p_{X}(x) + \nu, & \forall x \in \{ 1, \cdots, d_+ \}, \\
		p_{X}(x) + \mu, & \textrm{if } x = d_+ + 1, \\
		p_{X}(x), & \forall x \in I \setminus \{ 1, \cdots, d_+ + 1 \}, \\
		q_{X}(x), & \forall x \in I^c.
	\end{cases}
\end{equation}
Note that $q'_{X}$ corresponds to a valid probability distribution, $\mathrm{TV}(p_X,q'_X) = \eps$ and $\mathrm{LO}(p_X,q'_X) = \nu$. Now, the elements of the vector $\{ q_X(x) \}_{x \in I}$ are arranged in non-increasing order. Furthermore, the elements of the vector $\{ q'_X(x) \}_{x \in I}$ can be written as
\begin{equation}
	q'_{X}(x) = \begin{cases}
		q_{X}(x) + \omega_x, & \forall x \in \{ 1, \cdots, d_+ \}, \\
		q_{X}(x) + \omega_x, & \textrm{if } x = d_+ + 1, \\
		q_{X}(x) - \omega_x, & \forall x \in I \setminus \{ 1, \cdots, d_+ + 1 \}.
	\end{cases}
\end{equation}
For $x \in \{ 1, \cdots, d_+ \}$, we have
\begin{equation}
	\omega_x = q'_{X}(x) - q_{X}(x) = \nu - (q_{X}(x) - p_{X}(x)) \geq 0,
\end{equation}
and for $x \in I \setminus \{ 1, \cdots, d_+ + 1 \}$, we have
\begin{equation}
	\omega_x = q_{X}(x) - q'_{X}(x) = q_{X}(x) - p_{X}(x) \geq 0.
\end{equation}
Finally, $\sum_{x=1}^{d_+ + 1} \omega_x - \sum_{x=d_+ + 2}^{|I|} \omega_x = 0$.
As a consequence, whatever the sign of $\omega_{d_+ + 1}$, we can employ %Lemma~\ref{lem:majq}
%Lemma~2 of the Appendix~\cite{appIEEE}
{Lemma~\ref{lem:majq} in Appendix~\ref{sec:appA}}
in order to conclude that the vector $\{ q_X(x) \}_{x \in I}$ is majorized by the vector $\{ q'_X(x) \}_{x \in I}$. Using Lemma~5 of~\cite{JabbourDatta}, we then have that $q_{X} \prec q'_{X}$, which in turn implies that $\hr{X}{q'} \leq \hr{X}{q}$. We end up with
\begin{equation} \label{eq:StepBDeltaHi}
	\Delta H \leq \hr{X}{p} - \hr{X}{q'}.
\end{equation}

$(ii)$ $d_+ = |I|$: In that case, $\eps = |I| \nu$, so that
\begin{equation}
	\sum_{x=1}^{|I|} (q_X(x)-p_X(x)) = |I| \nu = \sum_{x=1}^{|I|} \nu,
\end{equation}
or,
\begin{equation}
	\sum_{x=1}^{|I|} (q_X(x)-p_X(x)-\nu) = 0.
\end{equation}
Since all the terms of the summation in the above equation are non-positive (which follows from the definition of $\nu$), we have that $q_X(x)=p_X(x)+\nu$ for all $x \in I$, which means that the elements of $q_X$ satisfy the following relations:
\begin{equation}
	q_{X}(x) = \begin{cases}
		p_{X}(x) + \nu, & \forall x \in \{ 1, \cdots, d_+ \}, \\
		q_{X}(x), & \forall x \in \{ d_++1, \cdots, |\sx| \}.
	\end{cases}
\end{equation}
In that case, we just define $q_X' \coloneqq q_X$, so that
\begin{equation} \label{eq:StepBDeltaHii}
	\Delta H = \hr{X}{p} - \hr{X}{q'}.
\end{equation}

\noindent \textbf{Step C:}
As mentioned before, our aim is to alter the distribution $q_X$ (in a manner such that the total variation distance and the local distance between it and $p_X$ are preserved) until it reaches the saturating distribution $\tilde{q}_X$ given by~\eqref{eq:probaTightMe1}. In the previous step, $q_X$ has was altered to $q'_X$. Next, we would like to ``extract" $\tilde{q}_X$ from $q'_X$. To do so, we define a joint distribution $q'_{XY}$ (whose ${\mathcal{X}}$-marginal is given by $q'_X$) such that $q'_{X|Y=0}$ is an unnormalized version of $\tilde{q}_X$, while what is left of $q'_{XY}$ in $q'_{X|Y=1}$ is such that its entropy, in conjunction with a bound on the entropy of $p_X$, yields the entropy of $\tilde{p}_X$.

Let $\sy \coloneqq \{0,1\}$. Let $\mathcal{P}_\sy$ denote the set of probability distributions on $\sy$ and $\mathcal{P}_{\sx \times \sy}$ denote the set of probability distributions on $\sx \times \sy$. For both the cases $d_+<|I|$ and $d_+ = |I|$, define the joint probability distribution $q'_{XY} \in \mathcal{P}_{\sx \times \sy}$ whose elements satisfy the following relations:
\begin{equation} \nonumber
	q'_{XY}(x,y) \coloneqq \begin{cases}
		\nu, & \textrm{if } y=0, \forall x \in \{ 1, \cdots, d_+ \}, \\
		\mu, & \textrm{if } y=0 \textrm{ and } x = d_+ + 1, \\
		0, & \textrm{if } y=0, \forall x \notin \{ 1, \cdots, d_+ + 1 \}, \\
		p_{X}(x), & \textrm{if } y=1, \forall x \in I, \\
		q_{X}(x), & \textrm{if } y=1, \forall x \in I^c.
	\end{cases}
\end{equation}
We indeed have $\sum_{y \in \sy} q'_{XY}(x,y) = q'_X(x)$,
so that $q'_{XY}$ is a genuine probability distribution with $\sx$-marginal $q'_X$. Now,
\begin{equation} \label{eq:StepCHq}
    \begin{aligned}
        \hr{X}{q'} & \stackrel{(a)}{\geq} \pr{X}{Y}{q'} \\
    & = \sum_{y \in \sy} q'_Y(y) \hr{X|Y=y}{q'} \\
    & \stackrel{(b)}{=} \eps \hr{X|Y=0}{q'} + (1-\eps) \hr{X|Y=1}{q'}.
    \end{aligned}
\end{equation}
In the above series of inequalities, $(a)$ follows from the fact that conditioning reduces entropy, while $(b)$ follows from $q'_Y(0) = \sum_x q'_{XY}(x,0) = \eps$, which also means that $q'_Y(1) = \sum_x q'_{XY}(x,1)= 1-\eps$.
\medskip

\noindent \textbf{Step D:} Define the function $f : [0,1] \rightarrow [0,\infty)$ as $f(u) \coloneqq - u \log u$. The concavity of the function $f$ and the fact that $f(0)=0$ {(which is justified by a continuity argument, since $\lim_{u \to 0} u \log u = 0$)}, implies that $f$ is subadditive.
Using this,
\begin{equation} \label{eq:ineqhp1}
    \begin{aligned}
        & \hr{X}{p} = \sum_{x \in \sx} f(p_X(x)) \\
        & \leq \sum_{x \in I} f(p_X(x)) + \sum_{x \in I^c} f(q_X(x)) + \sum_{x \in I^c} f(p_X(x)-q_X(x)).
    \end{aligned}
\end{equation}
As in the statement of Theorem~\ref{theo:main}, define $d_- \coloneqq |\sx| - \lceil \eps/\nu \rceil$.
% \begin{equation}
% 	d_- \coloneqq \begin{cases}
% 		|\sx|-d_+, & \mathrm{if} \, \mu = 0, \\
% 		|\sx|-d_+-1, & \mathrm{else}.
% 	\end{cases}
% \end{equation}
As mentioned already, $d_+ \leq |I|$. If $\mu=0$, we then have that
\begin{equation}
	d_- = |\sx|-d_+ = |I^c|+(|I|-d_+) \geq |I^c|.
\end{equation}
We also showed that if $d_+ = |I|$, we necessarily have $\mu = 0$. This means that if $\mu \neq 0$, we have $d_+ \neq |I|$, which means that $d_+ \leq |I|-1$, or $|I| \geq d_++1$. In that case, when $\mu \neq 0$,
\begin{equation}
	d_- = |\sx|-d_+-1 = |I^c| + (|I|-d_+-1) \geq |I^c|.
\end{equation}
Consequently, irrespective of the value of $\mu \in [0,\nu)$, we always have that $d_- \geq |I^c|$. Now, $\sum_{x \in I^c} (p_X(x)-q_X(x)) = \eps$,
so that
\begin{equation} \nonumber
	\Big( \frac{\eps}{d_-}, \cdots, \frac{\eps}{d_-} \Big) \! \prec \! \Big( \frac{\eps}{|I^c|}, \cdots, \frac{\eps}{|I^c|} \Big) \! \prec \! \{ p_X(x)-q_X(x) \}_{x \in I^c},
\end{equation}
where the first vector in the above equation has $d_-$ elements, while the other two have $|I^c|$ elements, and when comparing vectors of different dimensions using a majorization relation, we append zeros to the vector with the smaller dimension. Since for any non-zero integer $n$, the function $F : \mathbb{R}^n \rightarrow \mathbb{R} : \bos{v} \mapsto \sum_{i=1}^n f(v_i)$
is symmetric and concave, it is also Schur concave, so that
\begin{equation} \nonumber
	\sum_{x \in I^c} f(p_X(x)-q_X(x)) \leq \sum_{x=1}^{d_-} f \left( \frac{\eps}{d_-} \right) = - \eps \log \left( \eps/d_- \right).
\end{equation}
%which (using subadditivity of the function $f$), leads to
Combining the above with~\eqref{eq:ineqhp1}, we obtain
\begin{equation} \label{eq:StepDHp}
	\hr{X}{p} \leq \sum_{x \in I} f(p_X(x)) + \sum_{x \in I^c} f(q_X(x)) - \eps \log \left( \eps/d_- \right).
\end{equation}

\noindent \textbf{Step E:} Substituting the expressions of $\hr{X|Y=0}{q'}$ and $\hr{X|Y=1}{q'}$
into the right-hand side of~\eqref{eq:StepCHq}, we obtain
\begin{align}
        \hr{X}{q'} & \geq - d_+ \nu \log (\nu/\eps) - \mu \log(\mu/\eps) + (1-\eps) \log (1-\eps) \nonumber \\
        & \quad \quad + \sum_{x \in I} f(p_X(x)) + \sum_{x \in I^c} f(q_X(x)), \label{eq:StepEHq}
\end{align}
where we used the fact that
\begin{equation}
	\sum_{x \in I} p_X(x) + \sum_{x \in I^c} q_X(x) = q'_Y(1) = 1-\eps.
\end{equation}
Inserting~\eqref{eq:StepEHq} and~\eqref{eq:StepDHp} into~\eqref{eq:StepBDeltaHi} (or~\eqref{eq:StepBDeltaHii}, depending on the case), and using~\eqref{eq:rem}, we obtain the required bound
\begin{equation} \label{eq:theo:ContinuityLocal3}
    \Delta H \leq h(\eps) + \eps \log d_- + d_+ \nu \log \nu + \mu \log \mu - \eps \log \eps.
\end{equation} 

In order to see that the bound is tight, one may verify that the bound is reached by the probability distributions $\tilde{q}_X$ and $\tilde{p}_X$, whose elements satisfy the following relations:
\begin{equation} \label{eq:probaTightMe1}
	\tilde{q}_{X}(x) \coloneqq \begin{cases}
		\nu/\eps, & \forall x \in \{ 1, \cdots, d_+ \}, \\
		\mu/\eps, & \textrm{if } x = d_+ + 1, \\
		0, & \textrm{else},
	\end{cases}
\end{equation}
and, when $\mu \neq 0$,
\begin{equation} \label{eq:probaTightMe2b}
	\tilde{p}_{X}(x) \coloneqq \begin{cases}
		(1-\eps)\nu/\eps, & \forall x \in \{ 1, \cdots, d_+ \}, \\
		(1-\eps)\mu/\eps, & \textrm{if } x = d_+ + 1, \\
		\eps/d_-, & \textrm{else},
	\end{cases}
\end{equation}
whereas, when $\mu = 0$,
\begin{equation} \label{eq:probaTightMe2a}
	\tilde{p}_{X}(x) \coloneqq \begin{cases}
		(1-\eps)\nu/\eps, & \forall x \in \{ 1, \cdots, d_+ \}, \\
		\eps/d_-, & \textrm{else}.
	\end{cases}
\end{equation}
By definition of the parameters $d_+$, $\mu$ and $d_-$, we see that all elements of $\tilde{q}_X$, and $\tilde{p}_X$, are non-negative and sum to one. Hence, $\tilde{q}_X$ and $\tilde{p}_X$ are bonafide probability distributions. It is also easy to see that $|\tilde{p}_{X}(x)-\tilde{q}_{X}(x)| \leq \nu$, for all $x \in \{ 1, \cdots, d_+ \}$ when $\mu=0$, and for all $x \in \{ 1, \cdots, d_+ +1 \}$ when $\mu>0$. Moreover, the inequality $\eps/d_- \leq \nu$ is a simple consequence of Lemma~\ref{lem:constDimQ} (taking $\rho$ and $\sigma$ in the latter to be commuting states and having eigenvalues given by the probability vectors $\tilde{p}_{X}$ and $\tilde{q}_{X}$, respectively). From this, it is easy to see that $\mathrm{LO}(\tilde{p}_X,\tilde{q}_X) = \nu$. We also have that $\mathrm{TV}(\tilde{p}_X,\tilde{q}_X) = \eps$. Finally, it is easy to see that $\left| \hr{X}{\tilde{p}} - \hr{X}{\tilde{q}} \right| = \hr{X}{\tilde{p}} - \hr{X}{\tilde{q}}$ saturates the bound both when $\mu=0$ and $\mu >0$. This ends the proof.

\subsection{Proof of Theorem~\ref{theo:ContinuityOpNorm} \label{sec:proof:theo:ContinuityOpNorm}}

Let the spectral decompositions of $\rho$ and $\sigma$ be given by
\begin{equation} \nonumber
    \rho = \sum_{x \in \sx} r_X(x) \proj{\phi_x}, \qquad \sigma = \sum_{x \in \sx} s_X(x) \proj{\psi_x},
\end{equation}
where $r_X, s_X \in \mathcal{P}_{\sx}$ and $|\sx|=d$.
Suppose, without loss of generality, that $S(\rho) \leq S(\sigma)$. Define the {\em{dephasing channel}} in the eigenbasis of $\rho$ as follows\footnote{Note that this is just a pinching map.}: for any state $\omega$,
\begin{equation}
	\mathcal{N}_d(\omega) \coloneqq \sum_{x=1}^d \proj{\phi_x} \omega \proj{\phi_x},
\end{equation}
We have that $\mathcal{N}_d(\rho) = \rho$. Furthermore, $\mathcal{N}_d$ is a unital channel, so that $\mathcal{N}_d(\sigma) \prec \sigma$ {(see \textit{e.g.}~\cite{Alberti1982})}, which implies $S(\mathcal{N}_d(\sigma)) \geq S(\sigma)$. Let the spectral decomposition of $\mathcal{N}_d(\sigma)$ be
\begin{equation}
    \mathcal{N}_d(\sigma) = \sum_{x \in \sx} \tilde{s}_X(x) \proj{\phi_x}.
\end{equation}
Putting all the previous facts together, we have
\begin{equation}
    H(X)_r = S(\rho) \leq S(\sigma) \leq S(\mathcal{N}_d(\sigma)) = H(X)_{\tilde{s}}.
\end{equation}
Meanwhile, from the data-processing inequality for the trace distance,
\begin{equation} \nonumber
    \eps \geq \eps_1 \coloneqq \mathrm{T}(\rho, \sigma) \geq \mathrm{T}(\mathcal{N}_d(\rho), \mathcal{N}_d(\sigma)) = \mathrm{TV}(r_X,\tilde{s}_X) \eqqcolon \eps_2.
\end{equation}
From the data-processing inequality for Schatten $p$-norms ($1 \leq p \leq \infty$) under unital channels (see Theorem~II.4 in~\cite{PerezGarcia2006})\footnote{Note that the same conclusion can be reached by exploiting the fact that the dephasing channel is a random unitary and using the triangle inequality.},
\begin{equation} \nonumber
    \begin{aligned}
        \nu \geq \nu_1 \coloneqq ||\rho-\sigma||_{\infty} & \geq ||\mathcal{N}_d(\rho)-\mathcal{N}_d(\sigma)||_{\infty} \\
        & = \mathrm{LO}(r_X,\tilde{s}_X) \eqqcolon \nu_2.
    \end{aligned}
\end{equation}
Now, using Lemma~\ref{lem:Sason}, we have
\begin{equation} \label{eq:prooftheo:ContinuityOpNorm1}
    \begin{aligned}
        \left| S(\rho) - S(\sigma) \right| & = S(\sigma) - S(\rho) \leq H(X)_{\tilde{s}} - H(X)_r \\
        & \leq \eps_2 \log\left((\nu_2/\eps_2) d-1\right) + h(\eps_2).
    \end{aligned}
\end{equation}
From Lemma~\ref{lem:constDimQ}, $d \geq 2 \left\lceil \frac{\eps_2}{\nu_2} \right\rceil$, so that $\nu \geq \nu_1 \geq \nu_2 \geq \eps_2/d$. Therefore, combining~\eqref{eq:prooftheo:ContinuityOpNorm1} with % Lemma~\ref{lem:incBoundSason1}
%Lemma~4 of the Appendix~\cite{appIEEE},
{Lemma~\ref{lem:incBoundSason1} in Appendix~\ref{sec:appA},}
we get
\begin{equation} \label{eq:prooftheo:ContinuityOpNorm2}
    \left| S(\rho) - S(\sigma) \right| \leq \eps_2 \log\left((\nu/\eps_2 )d-1\right) + h(\eps_2).
\end{equation}
Finally, from the hypothesis of Theorem~\ref{theo:ContinuityOpNorm}, $\eps_2 \leq \eps_1 \leq \eps \leq \nu d/(\nu d + 3)$. Therefore, combining~\eqref{eq:prooftheo:ContinuityOpNorm2} with %Lemma~\ref{lem:incBoundSason2}
%Lemma~5 of the Appendix~\cite{appIEEE},
{Lemma~\ref{lem:incBoundSason2} in Appendix~\ref{sec:appA},}
we end up with the desired bound
\begin{equation} \label{eq:prooftheo:ContinuityOpNorm3}
    \left| S(\rho) - S(\sigma) \right| \leq \eps \log\left((\nu/\eps) d-1\right) + h(\eps).
\end{equation}
To show that the continuity bound is tight when $1/\beta$ is an integer, one can simply consider commuting states $\rho$ and $\sigma$ whose eigenvalues are given by the probability distributions of~\eqref{eq:probaTightSason1} and~\eqref{eq:probaTightSason2}, respectively, with $|\sx|=d$.

\section{\texorpdfstring{$(\eps,\nu)$--degradable channels and upper bounds on their quantum and private classical capacities}{(epsilon,nu)--degradable channels and upper bounds on their quantum and private classical capacities}}
\label{sec:quantum-cap}

The {\em{quantum capacity}} of a quantum channel is the maximum rate at which quantum information can be transmitted through it reliably (in the asymptotic limit\footnote{That is, in the limit $n \to \infty$, where $n$ denotes the number of independent and identical uses of the channel.}) per use of the channel. For a quantum channel $\Phi: A \to B$, the quantum capacity is given by the so-called LSD theorem, named after Lloyd, Shor and Devetak, whose independent works~\cite{Lloyd1997,Shor2002,Devetak2005}
 led to it:
\begin{equation}\label{eq:qcap}
    Q(\Phi) = \lim_{k \rightarrow \infty} \frac{1}{k} Q^{(1)}(\Phi^{\otimes k}),
\end{equation}
where $Q^{(1)}(\Phi)$ denotes the {\em{channel coherent information}} of $\Phi$ and is given by
\begin{equation} \label{eq-1}
    Q^{(1)}(\Phi) \coloneqq \max_{\rho \in \mathcal{D}(\mathcal{H}_A)} \left( S(\Phi(\rho)) - S(\Phi^{\mathrm{c}}(\rho)) \right).
\end{equation}

The {\em{private classical capacity}} of a quantum channel is the maximum rate at which classical information can be transmitted through it reliably (in the asymptotic limit) per use of the channel, such that no information about the message leaks to an eavesdropper who might have access to the environment. It was shown independently by Cai et al~\cite{Cai2004} and Devetak~\cite{Devetak2005} that
\begin{equation}\label{eq:pcap}
    P(\Phi) = \lim_{k \rightarrow \infty} \frac{1}{k} P^{(1)}(\Phi^{\otimes k}),
\end{equation}
where $P^{(1)}(\Phi)$ denotes the {\em{channel private information}} of $\Phi$ and is given by
\begin{equation} \nonumber
    \begin{aligned}
        P^{(1)}(\Phi) \coloneqq \max_{\{p_i,\rho_i\}} & \Biggl( S\biggl(\sum_i p_i \Phi(\rho_i)\biggr) - \sum_i p_i S\left( \Phi(\rho_i)\right) \\
        & - S\biggl(\sum_i p_i \Phi^{\mathrm{c}}(\rho_i)\biggr) + \sum_i p_i S\left( \Phi^{\mathrm{c}}(\rho_i)\right) \Biggr),
    \end{aligned}
\end{equation}
where the maximization is over ensembles of quantum states $\{p_i, \rho_i\}$.
The channel coherent information, $Q^{(1)}(\Phi)$, {and the channel private information, $P^{(1)}(\Phi)$,} are in general neither additive nor convex~\cite{Smith-Yard-Science,Smith2008b}. Hence the regularizations in~\eqref{eq:qcap} {and~\eqref{eq:pcap} are necessary and this makes the quantum and private classical capacity} difficult to compute. There is, however, an important family of channels for which the channel coherent information and the private information are additive, and hence the quantum and private classical capacities are given by single letter formulae. This is the family of {\em{degradable channels}}. A quantum channel $\Phi: A \to B$ is said to be {\em{degradable}} if it can be {\em{degraded}} to its complementary channel $\Phi^c: A \to E$, i.e., there exists a linear CPTP map (quantum channel) such that $\Phi^c = \Lambda \circ \Phi$, where $\circ$ denotes composition of channels. Such a map $\Lambda$ is often referred to as a {\em{degrading channel}} for the quantum channel $\Phi$.

The notion of degradability of a quantum channel is, however, not robust, in the sense that a small perturbation of a degradable channel can destroy its property of degradability. In~\cite{Sutter2017}, Sutter et al. introduced the notion of {\em{approximate degradable quantum channels}}. These are channels for which the degradability condition is approximately satisfied: 
\begin{defi}[Definition~4 of~\cite{Sutter2017}, $\eps$--degradable]
    A quantum channel $\Phi : A \to B$ is said to be $\eps$--degradable, for some $\eps \geq 0$, if there exists a (degrading) quantum channel $\Lambda : B \to E$ such that $||\Phi^{\mathrm{c}} - \Lambda \circ \Phi ||_\dia \leq \eps$.
\end{defi}
The quantum capacity, $Q(\Phi)$ of any channel $\Phi$ is trivially lower-bounded by the channel coherent information $Q^{(1)}(\Phi)$. However, as proved in~\cite{Sutter2017}, for $\eps$--degradable channels, the quantum capacity is also bounded from above by $Q^{(1)}(\Phi)$, up to error terms which vanish as $\eps$ goes to zero. This is because $\eps$--degradable channels, approximately inherit the desirable additivity properties of the channel coherent information of degradable channels (see~$(i)$ in Theorem~7 of~\cite{Sutter2017}). {Similarly, the private classical capacity, $P(\Phi)$ of any channel $\Phi$ is lower-bounded by the channel private information $P^{(1)}(\Phi)$. For $\eps$--degradable channels, it is also bounded from above by $P^{(1)}(\Phi)$, up to error terms which vanish as $\eps$ goes to zero (see~$(iii)$ in Theorem~7 of~\cite{Sutter2017}).}

%%%%%%%%%%%%%%%%%%% (\eps,\nu)-DEGRADABLE CHANNELS) %%%%%%%%%%%%%%%%%%%

\subsection{\texorpdfstring{$(\eps,\nu)$--degradable channels}{(Epsilon,Nu)--degradable channels}}

In this section, we refine the notion of approximate degradable channels by introducing the definition of $(\eps,\nu)$--degradable channels: these are a subset of the set of $\eps$--degradable channels, for which the operator norm distance between the complementary channel and the channel composed with a degrading map is at most $\nu$. We then proceed to prove that for such channels, one can get a tighter upper bound on the quantum capacity than that obtained in~\cite{Sutter2017}.

\begin{defi}[$(\eps,\nu)$--degradable] \label{defi:EpsNuDegCHannel}
     A quantum channel $\Phi : A \to B$ is said to be $(\eps,\nu)$--degradable, for some $\eps \geq 0$ and some $\nu \geq 0$, if there exists a (degrading) quantum channel $\Lambda : B \to E$ such that $||\Phi^{\mathrm{c}} - \Lambda \circ \Phi ||_\dia \leq \eps$ and $||\Phi^{\mathrm{c}} - \Lambda \circ \Phi ||_\cb \leq \nu$.
\end{defi}

\noindent 
\begin{rem}By Definition~\ref{defi:EpsNuDegCHannel}, every channel is $(\eps,\nu)$--degradable for some $\eps \in [0,2]$ and $\nu$ in $[0,2 d_Ad_E]$, where $d_A:=\mathrm{dim} \, \mathcal{H}_A$ and $d_E:=\mathrm{dim} \, \mathcal{H}_E$.
Indeed, $||\Phi^{\mathrm{c}} - \Lambda \circ \Phi||_\dia \leq ||\Phi^{\mathrm{c}}||_\dia + ||\Lambda \circ \Phi||_\dia \leq 2$, while $||\Phi^{\mathrm{c}} - \Lambda \circ \Phi||_\cb \leq ||\Phi^{\mathrm{c}}||_\cb + ||\Lambda \circ \Phi||_\cb \leq 2 d_Ad_E$. Here, we used that fact that, for any CPTP map $\Psi : A \rightarrow E$,
\begin{equation}
\|\Psi\|_{\cb} = \| \Psi \otimes {\rm{id}}_{E}\|_{\infty \rightarrow \infty} \leq d_Ad_E,
\end{equation}
where the inequality is a consequence of Theorem~II.1 in~\cite{PerezGarcia2006}.
\end{rem}
It follows from the definition that if a channel $\Phi$ is $(\eps,\nu)$--degradable, it is also $(\eps',\nu')$--degradable for all $\eps' \geq \eps$ and $\nu' \geq \nu$. 
Fix $\nu \in [0,2 d_Ad_E]$. For such a $\nu$,
% For any $0 \le \nu \le 2 d_Ad_E$,
the smallest possible parameter $\eps$ such that $\Phi$ is $(\eps,\nu)$--degradable is given by
\begin{equation} \label{eq:optepsGen}
    \eps_\Phi \coloneqq \inf_{\Lambda} \left\{ || \Phi^{\mathrm{c}} - \Lambda \circ \Phi ||_{\dia} : \Lambda : \mathcal{L}(B) \rightarrow \mathcal{L}(E) \, \text{is CPTP} \right\}.
\end{equation}
It was shown in~\cite{Sutter2017} that the above optimization problem can be expressed as a semidefinite program (SDP) (see Proposition~9 in~\cite{Sutter2017}). Similarly,
fix $\eps \in [0, 2]$. For such an $\eps$,
% for any $0 \leq \eps \leq 2$,
the smallest possible parameter $\nu$ such that $\Phi$ is $(\eps,\nu)$--degradable is given by
\begin{equation} \label{eq:optnuGen}
    \nu_\Phi \coloneqq \inf_{\Lambda} \left\{ || \Phi^{\mathrm{c}} - \Lambda \circ \Phi ||_{\cb} : \Lambda : \mathcal{L}(B) \rightarrow \mathcal{L}(E) \, \text{is CPTP} \right\}.
\end{equation}
\begin{prop}
    The optimization problem~\eqref{eq:optnuGen} can be expressed as a semidefinite program.
\end{prop}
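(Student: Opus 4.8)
The plan is to reduce the optimization \eqref{eq:optnuGen} to a diamond-norm problem by the duality \eqref{eq:dualityDiamondCB}, mirroring the semidefinite-programming derivation given for \eqref{eq:optepsGen} in \cite{Sutter2017}, and then to substitute the known SDP characterization of the diamond norm. First I would parametrize the degrading channel $\Lambda : B \to E$ by its Choi matrix $J(\Lambda) \in \mathcal{L}(\mathcal{H}_E \otimes \mathcal{H}_B)$. The requirement that $\Lambda$ be CPTP is equivalent to the linear matrix inequality $J(\Lambda) \geq 0$ together with the affine equality $\Tr_E J(\Lambda) = \mathds{1}_B$, so the admissible set of $J(\Lambda)$ is a compact spectrahedron and the infimum in \eqref{eq:optnuGen} is in fact attained.

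Next I would use \eqref{eq:dualityDiamondCB} to convert the cb-norm objective into a diamond norm. Writing $\Psi_\Lambda \coloneqq \Phi^{\mathrm{c}} - \Lambda \circ \Phi$, which is Hermiticity-preserving, the duality relation gives
\begin{equation} \nonumber
    \| \Psi_\Lambda \|_{\cb} = \| \Psi_\Lambda^* \|_\dia = \| (\Phi^{\mathrm{c}})^* - \Phi^* \circ \Lambda^* \|_\dia .
\end{equation}
The key observation is that the Choi matrix $J(\Psi_\Lambda^*)$ depends \emph{affinely} on $J(\Lambda)$: taking the adjoint amounts to a fixed linear relabelling of the Choi matrix (a swap of the two tensor factors together with a transposition), and the Choi matrix of the composition $\Phi^* \circ \Lambda^*$ is obtained from the fixed operator $J(\Phi^*)$ and $J(\Lambda^*)$ by the standard composition rule for Choi matrices, which is bilinear and hence linear in $J(\Lambda^*)$ for fixed $\Phi$; subtracting the constant $J((\Phi^{\mathrm{c}})^*)$ preserves affineness.

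Finally I would substitute the \emph{minimization} form of Watrous's SDP for the diamond norm \cite{Watrous2009, Watrousln}, which expresses $\| \Psi_\Lambda^* \|_\dia$ as the minimum of a linear objective over auxiliary Hermitian variables subject to a block linear matrix inequality whose data is exactly $J(\Psi_\Lambda^*)$; the operator-norm terms appearing there are linearized in the standard way by a scalar $\lambda$ and a constraint of the form $M \preceq \lambda \mathds{1}$ on a partial trace of the auxiliary variable. Since $J(\Psi_\Lambda^*)$ is affine in $J(\Lambda)$, this block LMI is jointly affine in $J(\Lambda)$ and the auxiliary variables. Crucially, because the diamond norm is used in its dual (minimization) form, the outer infimum over $\Lambda$ and this inner minimization combine into a single joint minimization over the tuple consisting of $J(\Lambda)$ and the auxiliary variables, subject to the CPTP constraints and the block LMI. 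As every constraint is a linear matrix inequality or affine equality and the objective is linear, \eqref{eq:optnuGen} is a semidefinite program.

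The main obstacle is the Choi-matrix bookkeeping of the second step: one must verify, with consistent conventions, that both the adjoint operation and channel composition act \emph{linearly} on the relevant Choi matrices, so that the LMI data genuinely remains affine in $J(\Lambda)$. A secondary point requiring care is the merging of the two optimizations, which works precisely because the diamond norm is taken in its dual minimization form and therefore introduces no min--max structure; using the primal (maximization) form of the diamond norm would instead yield an $\inf_\Lambda \sup_\rho$ that is not directly an SDP.
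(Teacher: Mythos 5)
Your proposal is correct and follows essentially the same route as the paper: both invoke the duality $\|\Psi\|_{\cb} = \|\Psi^*\|_\dia$ from~\eqref{eq:dualityDiamondCB}, substitute Watrous's minimization-form SDP for the diamond norm of a difference of completely positive maps, impose the degrading-channel constraints as linear matrix conditions on a Choi matrix, and merge the nested infima into one joint SDP. The only cosmetic difference is that you keep the CPTP map $\Lambda$ as the optimization variable and track the (linear) passage to $J(\Lambda^*)$, whereas the paper changes variables at the outset to the adjoint CPU map $\Theta = \Lambda^*$ with constraints $J(\Theta) \geq 0$ and $\Tr_E(J(\Theta)) = \mathds{1}_B$ --- an equivalent parametrization.
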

\begin{proof}
    From~\eqref{eq:dualityDiamondCB}, we have that
    \begin{equation}
        || \Phi^{\mathrm{c}} - \Lambda \circ \Phi ||_{\cb} = || (\Phi^{\mathrm{c}})^* - (\Lambda \circ \Phi)^* ||_{\dia},
    \end{equation}
    so that we can write~\eqref{eq:optnuGen} as
    \begin{equation} \nonumber
        \nu_\Phi = \inf_{\Lambda} \left\{ || (\Phi^{\mathrm{c}} - \Lambda \circ \Phi)^* ||_{\dia} : \Lambda : \mathcal{L}(B) \rightarrow \mathcal{L}(E) \, \text{is CPTP} \right\}.
    \end{equation}
    or,
    \begin{equation} \label{eq:optnu3}
        \nu_\Phi = \inf_{\Theta} \left\{ || (\Phi^{\mathrm{c}})^* - \Phi^* \circ \Theta ||_{\dia} : \Theta : \mathcal{L}(E) \rightarrow \mathcal{L}(B) \, \text{is CPU} \right\}.
    \end{equation}
    We now follow the same steps as in the proof of Proposition~9 in~\cite{Sutter2017} to show that the optimization problem~\eqref{eq:optnu3} can be expressed as an SDP. Watrous showed~\cite{Watrous2009} that for two completely positive maps $\Gamma_1, \Gamma_2 : \mathcal{L}(E) \rightarrow \mathcal{L}(A)$, the diamond norm of their difference can be expressed as an SDP of the form
    \begin{equation}
        \begin{aligned}
            || \Gamma_1 - \Gamma_2 ||_{\dia} = \, & 2 \inf_Z || \Tr_A(Z) ||_{\infty} \\
            & \text{s. t.} \quad Z \geq J(\Gamma_1 - \Gamma_2) \\
            & \phantom{\text{s. t.} \quad} Z \geq 0,
        \end{aligned}
    \end{equation}
    where $J(\Gamma_1 - \Gamma_2)$ denotes the Choi operator of $(\Gamma_1 - \Gamma_2)$. Using this, \eqref{eq:optnu3} becomes
    \begin{equation} \label{eq:optnu4}
        \begin{aligned}
            \nu_\Phi = \, & 2 \inf_{\Theta} \inf_Z || \Tr_A(Z) ||_{\infty} \\
            & \text{s. t.} \quad Z \geq J((\Phi^{\mathrm{c}})^*) - J(\Phi^* \circ \Theta) \\
            & \phantom{\text{s. t.} \quad} Z \geq 0 \\
            & \phantom{\text{s. t.} \quad} J(\Theta) \geq 0 \\
            & \phantom{\text{s. t.} \quad} \Tr_E (J(\Theta)) = \mathds{1}_B,
        \end{aligned}
    \end{equation}
    where the two final constraints in~\eqref{eq:optnu4} ensure that $\Theta$ is CPU. Since the two minimizations can be interchanged, they can be merged, so that~\eqref{eq:optnu3} becomes
    \begin{equation} \label{eq:optnu5}
        \begin{aligned}
            \nu_\Phi = \, & 2 \inf_{Z,\Theta} || \Tr_A(Z) ||_{\infty} \\
            & \text{s. t.} \quad Z \geq J((\Phi^{\mathrm{c}})^*) - J(\Phi \circ \Theta) \\
            & \phantom{\text{s. t.} \quad} Z \geq 0 \\
            & \phantom{\text{s. t.} \quad} J(\Theta) \geq 0 \\
            & \phantom{\text{s. t.} \quad} \Tr_E (J(\Theta)) = \mathds{1}_B,
        \end{aligned}
    \end{equation}
    which is indeed an SDP.
\end{proof}

%%%%%%%%%%%%%%%%%%% QUANTUM CAPACITY %%%%%%%%%%%%%%%%%%%

\subsection{Upper bounds on quantum {and private classical} capacities}

For a quantum channel $\Phi: A \to B$ and a degrading channel $\Lambda: B \to \tilde{E}$, with ${\mathcal{H}}_{\tilde{E}} \simeq {\mathcal{H}}_E$, choose Stinespring isometries $V : A \to B \otimes E$ and $W : B \to \tilde{E} \otimes F$, respectively. Then, define~\cite{Sutter2017}
\begin{equation} \label{eq:maxCondEnt}
    \begin{aligned}
            S(\Phi,\Lambda) \coloneqq \max_{\rho \in \mathcal{D}(\mathcal{H}_A)} \{ & S(F|\tilde{E})_{\omega} : \\ & \omega_{E\tilde{E}F} = (W \otimes \mathds{1}) V \rho V^{\dagger} (W^\dagger \otimes \mathds{1})\},
    \end{aligned}
\end{equation}
where $S(F|\tilde{E})_{\omega} \coloneqq S(\omega_{\tilde{E}F}) - S(\omega_{\tilde{E}})  $ is the conditional entropy, with $\omega_{\tilde{E}F} = \Tr_E(\omega_{E\tilde{E}F})$ and $\omega_{\tilde{E}} = \Tr_{EF}(\omega_{E\tilde{E}F})$.
Note that $S(F|\tilde{E})_{\omega} = S(\Phi(\rho))-S(\Lambda \circ \Phi(\rho))$.

\begin{rem}
    Note that even though we previously denoted the degrading map as $\Lambda : B \to E$, here we choose to denote it as $\Lambda : B \to \tilde{E}$, with the Hilbert spaces of $E$ and $\tilde{E}$ being isomorphic, because of the following reasons: $(i)$ we need to consider the tripartite state $\omega_{E\tilde{E}F}$ resulting from the isometries $V$ and $W$ defined above; $(ii)$ its reduced states $\omega_E$ and $\omega_{\tilde{E}}$ are not identical, since the channel is only approximately degradable.
\end{rem}

Sutter et al.~\cite{Sutter2017} showed, by employing the Audenaert-Fannes-Petz inequality (Theorem~\ref{th-2}) that if $\Phi : A \to B$ is an $\eps$--degradable channel with a degrading channel $\Lambda : B \to E$ then (see Proposition~5 of~\cite{Sutter2017})
\begin{equation}
    |Q^{(1)}(\Phi)-S(\Phi,\Lambda)| \leq \frac{\eps}{2} \log (d_E-1) + h(\eps/2),
\end{equation}
where $d_E = {\rm{dim}} \,{\mathcal{H}}_E$. 
Furthermore, they showed, by employing the so-called Alicki-Fannes-Winter inequality~\cite{AlickiFannes,Winter2016} for conditional entropy the following upper bound on the quantum capacity of an $\eps$--degradable channel (see inequality~$(ii)$ in Theorem~7 of~\cite{Sutter2017}):
\begin{equation} \label{eq:boundCapaAlickyFannesSutter}
    Q(\Phi) \leq S(\Phi,\Lambda) + \eps \log(d_E) + g(\eps/2).
\end{equation}
Since any $(\eps,\nu)$--degradable channel is, by definition, also $\eps$--degradable, the bound~\eqref{eq:boundCapaAlickyFannesSutter} also holds for the quantum capacity of a $(\eps,\nu)$--degradable channel. From the above, one can get the following upper bound on the quantum capacity of an $\eps$--degradable channel
\begin{equation}
    \begin{aligned}
        Q(\Phi) \leq Q^{(1)}(\Phi)
        & + \frac{\eps}{2} \log(d_E - 1) + h(\eps/2) \\
        & \quad + \eps \log(d_E) + g(\eps/2).
    \end{aligned}
\end{equation}

We show that the above upper bound can be strengthened by exploiting the notion of $(\eps,\nu)$--degradable channels as follows.
\begin{theo} \label{theo:cohinfcont}
 Let $\Phi : A \to B$ be an $(\eps,\nu)$--degradable channel with a degrading channel $\Lambda : B \to E$, with ${\rm{dim}}\, {\mathcal{H}}_E = d_E$. 
 If $\eps \leq 2\nu d_E/(\nu d_E + 3)$, then for $\beta \coloneqq 2\nu/\eps$,
    \begin{equation} \label{eq:theo:cohinfcont}
        |Q^{(1)}(\Phi)-S(\Phi,\Lambda)| \leq \frac{\eps}{2} \log(\beta d_E-1) + h(\eps/2).
    \end{equation}
\end{theo}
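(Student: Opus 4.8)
The plan is to reduce the statement to a single application of the continuity bound of Theorem~\ref{theo:ContinuityOpNorm}, applied to the two output states $\Phi^{\mathrm{c}}(\rho)$ and $\Lambda \circ \Phi(\rho)$, which both reside on $\mathcal{H}_E$ (of dimension $d_E$). The first step is to turn the two channel-norm hypotheses into state-distance bounds that hold uniformly over inputs. Since $\|\Phi^{\mathrm{c}} - \Lambda\circ\Phi\|_\dia \leq \eps$ dominates the trace norm of the output on any single input state, one gets $\mathrm{T}(\Phi^{\mathrm{c}}(\rho), \Lambda\circ\Phi(\rho)) = \tfrac{1}{2}\|(\Phi^{\mathrm{c}} - \Lambda\circ\Phi)(\rho)\|_1 \leq \eps/2$ for every $\rho \in \dhx$. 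Likewise, because $\|\Phi^{\mathrm{c}} - \Lambda\circ\Phi\|_\cb \leq \nu$ dominates the unstabilized $\infty \to \infty$ norm and $\|\rho\|_\infty \leq 1$ for a density matrix, we obtain $\|\Phi^{\mathrm{c}}(\rho) - \Lambda\circ\Phi(\rho)\|_\infty \leq \nu$ (consistent with the fact that the operator-norm distance of two output states is automatically at most $1$, as required by the hypothesis $\nu \in [0,1]$ of Theorem~\ref{theo:ContinuityOpNorm}).

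The second step handles the fact that $Q^{(1)}(\Phi)$ and $S(\Phi,\Lambda)$ are each defined by a maximization over $\dhx$ of the \emph{same} objective $S(\Phi(\rho))$ minus the entropy of one of the two outputs; recall from the line following~\eqref{eq:maxCondEnt} that $S(\Phi,\Lambda) = \max_\rho\,(S(\Phi(\rho)) - S(\Lambda\circ\Phi(\rho)))$, while $Q^{(1)}(\Phi) = \max_\rho\,(S(\Phi(\rho)) - S(\Phi^{\mathrm{c}}(\rho)))$. I would treat the two directions separately. Let $\rho_\star$ attain $Q^{(1)}(\Phi)$; using $\rho_\star$ as a feasible point in the maximization defining $S(\Phi,\Lambda)$ gives $S(\Phi,\Lambda) \geq S(\Phi(\rho_\star)) - S(\Lambda\circ\Phi(\rho_\star))$, so after the common term $S(\Phi(\rho_\star))$ cancels, $Q^{(1)}(\Phi) - S(\Phi,\Lambda) \leq S(\Lambda\circ\Phi(\rho_\star)) - S(\Phi^{\mathrm{c}}(\rho_\star))$. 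Symmetrically, letting $\rho_\star'$ attain $S(\Phi,\Lambda)$ and feeding it into $Q^{(1)}(\Phi)$ yields $S(\Phi,\Lambda) - Q^{(1)}(\Phi) \leq S(\Phi^{\mathrm{c}}(\rho_\star')) - S(\Lambda\circ\Phi(\rho_\star'))$. Each right-hand side is the difference of von Neumann entropies of a pair of states satisfying the distance bounds from Step~1.

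The final step applies Theorem~\ref{theo:ContinuityOpNorm} to each such pair on $\mathcal{H}_E$, with its parameters instantiated as $\eps/2$ (the trace-distance bound) and $\nu$ (the operator-norm bound), so that its $\beta = \nu/(\eps/2) = 2\nu/\eps$ matches the $\beta$ in the statement, and its required hypothesis $\eps/2 \leq \nu d_E/(\nu d_E + 3)$ is precisely the assumption $\eps \leq 2\nu d_E/(\nu d_E + 3)$. This yields $|S(\Lambda\circ\Phi(\rho_\star)) - S(\Phi^{\mathrm{c}}(\rho_\star))| \leq \tfrac{\eps}{2}\log(\beta d_E - 1) + h(\eps/2)$ and the analogous bound for $\rho_\star'$; combining the two directions gives the claimed inequality. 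The step I expect to require the most care is the clean extraction of the output operator-norm distance from the cb-norm of the channel difference (the passage through the unstabilized $\infty \to \infty$ norm together with $\|\rho\|_\infty \le 1$), since everything afterwards—the two-sided maximizer argument and the parameter/constraint matching—is essentially bookkeeping, reflecting that Theorem~\ref{theo:ContinuityOpNorm} was tailored so that the factor of $1/2$ on the trace side lines up exactly with the diamond-norm-to-trace-distance conversion.
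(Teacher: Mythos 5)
Your proposal is correct and follows essentially the same route as the paper: the paper likewise converts the diamond-norm hypothesis into $\mathrm{T}(\omega_E,\omega_{\tilde{E}}) \leq \eps/2$ and the cb-norm hypothesis into $\|\omega_E - \omega_{\tilde{E}}\|_\infty \leq \nu$ for the two channel outputs, and then applies Theorem~\ref{theo:ContinuityOpNorm} with parameters $(\eps/2,\nu)$ so that $\beta = 2\nu/\eps$ and the hypothesis $\eps \leq 2\nu d_E/(\nu d_E+3)$ matches, exactly as you do (your explicit two-sided maximizer argument is the bookkeeping the paper leaves implicit in ``replace $S(\omega_E)$ with $S(\omega_{\tilde{E}}) - (\cdots)$''). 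The only cosmetic difference is in extracting the operator-norm bound: you use $\|\cdot\|_{\infty\to\infty} \leq \|\cdot\|_{\cb}$ together with $\|\rho\|_\infty \leq 1$ directly, while the paper tensors with an ancillary state $\sigma$ and invokes multiplicativity of the operator norm --- both are valid and equivalent in substance.
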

\noindent This leads to the following upper bound on the quantum capacity of an $(\eps,\nu)$--degradable channel.
\begin{theo} \label{theo:upperQuantCapa}
Let $\Phi : A \to B$ be an $(\eps,\nu)$--degradable channel with a degrading channel $\Lambda : B \to E$, with ${\rm{dim}}\, {\mathcal{H}}_E = d_E$.
If $\eps \leq 2\nu d_E/(\nu d_E + 3)$, then for $\beta \coloneqq 2\nu/\eps$,
    \begin{equation} \label{eq:theo:upperQuantCapa}
        \begin{aligned}
            Q(\Phi) \leq Q^{(1)}(\Phi)
        & + \frac{\eps}{2} \log(\beta d_E-1) + h(\eps/2) \\
        & \quad + \eps \log(d_E) + g(\eps/2).
        \end{aligned}
    \end{equation}
\end{theo}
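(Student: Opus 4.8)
The plan is to derive Theorem~\ref{theo:upperQuantCapa} by combining the refined continuity bound of Theorem~\ref{theo:cohinfcont} with the Alicki--Fannes--Winter-type capacity bound already established in~\cite{Sutter2017}. Since every $(\eps,\nu)$--degradable channel is, by definition, also $\eps$--degradable, the inequality
\[
    Q(\Phi) \leq S(\Phi,\Lambda) + \eps \log(d_E) + g(\eps/2),
\]
namely inequality~$(ii)$ in Theorem~7 of~\cite{Sutter2017}, applies verbatim with the same degrading channel $\Lambda : B \to E$. This is the only place where the asymptotic quantity $Q(\Phi)$ enters the argument; its derivation, which controls the gap between the regularized coherent information and its single-letter value through the continuity of the conditional entropy, is taken as given.

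First I would invoke this bound to reduce the task to controlling the quantity $S(\Phi,\Lambda)$ in terms of the channel coherent information $Q^{(1)}(\Phi)$. Then I would apply Theorem~\ref{theo:cohinfcont}, whose hypothesis $\eps \leq 2\nu d_E/(\nu d_E + 3)$ coincides exactly with the hypothesis assumed here, together with the elementary one-sided estimate $S(\Phi,\Lambda) - Q^{(1)}(\Phi) \leq |Q^{(1)}(\Phi) - S(\Phi,\Lambda)|$, to obtain
\[
    S(\Phi,\Lambda) \leq Q^{(1)}(\Phi) + \tfrac{\eps}{2}\log(\beta d_E - 1) + h(\eps/2).
\]
Substituting this into the previous display, with $\beta = 2\nu/\eps$, yields the claimed bound immediately.

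I expect essentially no obstacle at this final stage: all the analytic content has already been absorbed into Theorem~\ref{theo:cohinfcont} (which itself rests on the sharpened von Neumann continuity bound of Theorem~\ref{theo:ContinuityOpNorm}) and into the cited conditional-entropy continuity estimate. The only points meriting a word of care are that the \emph{same} fixed degrading channel $\Lambda$ can be fed into both ingredients, which is guaranteed because both are stated for an arbitrary but fixed $\Lambda$ witnessing $(\eps,\nu)$--degradability, and that $S(\Phi,\Lambda)$ enters with the sign that makes the one-sided inequality above sufficient, so that no absolute value or optimization over $\Lambda$ needs to be propagated through the composition.
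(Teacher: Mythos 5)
Your proposal is correct and follows the paper's own proof exactly: the paper likewise combines the bound $Q(\Phi) \leq S(\Phi,\Lambda) + \eps \log(d_E) + g(\eps/2)$ from Theorem~7$(ii)$ of~\cite{Sutter2017} (valid since $(\eps,\nu)$--degradability implies $\eps$--degradability) with the one-sided consequence $S(\Phi,\Lambda) \leq Q^{(1)}(\Phi) + \frac{\eps}{2}\log(\beta d_E - 1) + h(\eps/2)$ extracted from Theorem~\ref{theo:cohinfcont}. Your remarks about using the same fixed $\Lambda$ in both ingredients and about only needing the one-sided inequality match the paper's reasoning (the paper phrases the latter as the non-negativity of the right-hand side of~\eqref{eq:theo:cohinfcont}).
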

\begin{proof}[Proof of Theorem~\ref{theo:cohinfcont}]
    Our proof is analogous to that of Proposition~5 in~\cite{Sutter2017}. The channel coherent information can be written as
    \begin{equation} \nonumber
        \begin{aligned}
            Q^{(1)}(\Phi) = \max_{\rho \in \mathcal{D}(\mathcal{H}_A)} \{ & S(\omega_{\tilde{E}F}) - S(\omega_E) : \\
            & \omega_{E\tilde{E}F} = (W \otimes \mathds{1}) V \rho V^{\dagger} (W^\dagger \otimes \mathds{1}) \},
        \end{aligned}
    \end{equation}
    with $\omega_{E} = \Tr_{\tilde{E}F}(\omega_{E\tilde{E}F})$ (see~\eqref{eq:maxCondEnt}). Since $\Phi$ is $(\eps,\nu)$--degradable, it can be seen that $||\omega_E-\omega_{\tilde{E}}||_1 \leq \eps$ (and hence $T(\omega_E,\omega_{\tilde{E}}) \leq \eps/2)$) and $||\omega_E-\omega_{\tilde{E}}||_\infty \leq \nu$.
   Indeed, defining ${\mathcal{H}}_{E'} \simeq {\mathcal{H}}_E$,
    \begin{equation} \nonumber
        \begin{aligned}
            & ||\Phi^{\mathrm{c}} - \Lambda \circ \Phi ||_\cb \leq \nu \\
            \Rightarrow \quad & ||[(\Phi^{\mathrm{c}} - \Lambda \circ \Phi) \otimes {\rm{id}}_{E'}](\rho \otimes \sigma)||_\infty \leq \nu, \\
            & \quad \quad \forall \, \rho \in \mathcal{L}(\mathcal{H}_A), \sigma \in \mathcal{L}(\mathcal{H}_{E'}) \, \text{ s.t. } \, ||\rho||_\infty, ||\sigma||_\infty \leq 1, \\
            \stackrel{(a)}{\Rightarrow} \quad &  ||(\omega_E-\omega_{\tilde{E}}) \otimes \sigma||_\infty \leq \nu, \forall \, \sigma \in \mathcal{L}(\mathcal{H}_{E'}) \, \text{ s.t. } \, ||\sigma||_\infty \leq 1, \\
            \stackrel{(b)}{\Rightarrow} \quad &  ||(\omega_E-\omega_{\tilde{E}})||_\infty \leq \nu,
        \end{aligned}
    \end{equation}
    where we considered a density matrix $\rho$ satisfying $\omega_{E\tilde{E}F} = (W \otimes \mathds{1}) V \rho V^{\dagger} (W^\dagger \otimes \mathds{1})$ (as in~\eqref{eq:maxCondEnt}) in $(a)$ and took any density matrix $\sigma$ satisfying $||\sigma||_\infty = 1$ in $(b)$ ($\rho$ and $\sigma$ therefore both satisfy $||\rho||_\infty, ||\sigma||_\infty \leq 1$). Since $\eps \leq 2\nu d_E/(\nu d_E + 3)$, we can therefore apply Theorem~\ref{theo:ContinuityOpNorm} to replace $S(\omega_E)$ with $S(\omega_{\tilde{E}}) - \left(\eps/2 \log(\beta d_E - 1) + h({\eps}/{2})\right)$, and the claim follows.
\end{proof}

\begin{proof}[Proof of Theorem~\ref{theo:upperQuantCapa}]
Since $\Phi$ is an $(\eps,\nu)$--degradable channel with a degrading channel $\Lambda$, with $\eps \leq 2\nu d_E/(\nu d_E + 3)$, we have from Theorem~\ref{theo:cohinfcont} that
\begin{equation}
    S(\Phi,\Lambda) \leq Q^{(1)}(\Phi) + \frac{\eps}{2} \log(\beta d_E-1) + h(\eps/2),
\end{equation}
since the right-hand side of~\eqref{eq:theo:cohinfcont} is non-negative. Combining the above relations with~\eqref{eq:boundCapaAlickyFannesSutter} gives~\eqref{eq:theo:upperQuantCapa}.
\end{proof}

The upper bound on the quantum capacity given in Theorem~\ref{theo:upperQuantCapa} pertain to $(\eps,\nu)$--degradable channels, whose definition involves the diamond norm and the cb norm (see Definition~\ref{defi:EpsNuDegCHannel}). However, it is sometimes possible to obtain a better upper bound on the quantum capacity of a quantum channel in terms of efficiently computable upper bounds on unstabilised versions of these norms, with the inputs being restricted to density matrices (see~\eqref{eq:normUnstabilized}). Indeed, we obtain the following propositions, whose proofs are analogous to those of Theorems~\ref{theo:cohinfcont} and~\ref{theo:upperQuantCapa}.
\begin{prop} \label{prop:cohinfcont2}
Let $\Phi : A \to B$ be a channel. If there exists a (degrading) channel $\Lambda : B \to E$ such that $\| \Phi^{\mathrm{c}} - \Lambda \circ \Phi \|_1^{\mathrm{D}} \leq \eps$ and $\| \Phi^{\mathrm{c}} - \Lambda \circ \Phi \|_{\infty}^{\mathrm{D}} \leq \nu$, with ${\rm{dim}}\, {\mathcal{H}}_E = d_E$ and $\eps \leq 2\nu d_E/(\nu d_E + 3)$, then for $\beta \coloneqq 2\nu/\eps$,
    \begin{equation} \label{eq:prop:cohinfcont2}
        |Q^{(1)}(\Phi)-S(\Phi,\Lambda)| \leq \frac{\eps}{2} \log(\beta d_E-1) + h(\eps/2).
    \end{equation}
\end{prop}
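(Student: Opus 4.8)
The plan is to mirror the proof of Theorem~\ref{theo:cohinfcont}, replacing the stabilized norm conditions by their unstabilized, density-matrix-restricted counterparts. As before, I would begin by writing the channel coherent information in the form
\begin{equation} \nonumber
    \begin{aligned}
        Q^{(1)}(\Phi) = \max_{\rho \in \mathcal{D}(\mathcal{H}_A)} \{ & S(\omega_{\tilde{E}F}) - S(\omega_E) : \\
        & \omega_{E\tilde{E}F} = (W \otimes \mathds{1}) V \rho V^{\dagger} (W^\dagger \otimes \mathds{1}) \},
    \end{aligned}
\end{equation}
with $\omega_E = \Tr_{\tilde{E}F}(\omega_{E\tilde{E}F})$, exactly as in~\eqref{eq:maxCondEnt}. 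The goal is then to compare $S(\omega_E)$ and $S(\omega_{\tilde{E}})$ using Theorem~\ref{theo:ContinuityOpNorm}, which requires controlling both $T(\omega_E, \omega_{\tilde{E}})$ and $\|\omega_E - \omega_{\tilde{E}}\|_\infty$.

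The key observation is that, for any fixed input $\rho \in \mathcal{D}(\mathcal{H}_A)$, the two reduced states satisfy $\omega_E = \Phi^{\mathrm{c}}(\rho)$ and $\omega_{\tilde{E}} = (\Lambda \circ \Phi)(\rho)$, so that $\omega_E - \omega_{\tilde{E}} = (\Phi^{\mathrm{c}} - \Lambda \circ \Phi)(\rho)$. This is precisely where the unstabilized norms enter: by the definition in~\eqref{eq:normUnstabilized}, the hypothesis $\|\Phi^{\mathrm{c}} - \Lambda \circ \Phi\|_1^{\mathrm{D}} \leq \eps$ gives $\|(\Phi^{\mathrm{c}} - \Lambda \circ \Phi)(\rho)\|_1 \leq \eps$ directly, hence $T(\omega_E, \omega_{\tilde{E}}) \leq \eps/2$, and likewise $\|\Phi^{\mathrm{c}} - \Lambda \circ \Phi\|_\infty^{\mathrm{D}} \leq \nu$ yields $\|\omega_E - \omega_{\tilde{E}}\|_\infty \leq \nu$. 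This replaces the somewhat delicate chain of implications $(a)$--$(b)$ used in the proof of Theorem~\ref{theo:cohinfcont} (where the stabilized cb-norm had to be reduced to the relevant single-input bound via a tensor-product argument) with an essentially immediate application of the definitions, since the density-matrix-restricted norms are tailored exactly to inputs of the form $\rho \in \mathcal{D}(\mathcal{H}_A)$.

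With both distance bounds in hand and the hypothesis $\eps \leq 2\nu d_E/(\nu d_E + 3)$ ensuring the constraint of Theorem~\ref{theo:ContinuityOpNorm} is met (with dimension $d_E$ and the pair $(\eps/2, \nu)$ playing the roles of the trace and operator-norm bounds), I would invoke that theorem to conclude
\begin{equation} \nonumber
    |S(\omega_E) - S(\omega_{\tilde{E}})| \leq \frac{\eps}{2}\log(\beta d_E - 1) + h(\eps/2),
\end{equation}
with $\beta = 2\nu/\eps$. Substituting the upper estimate for $S(\omega_E)$ into the expression for $Q^{(1)}(\Phi)$ and comparing with $S(\Phi,\Lambda)$ (recalling that $S(F|\tilde{E})_\omega = S(\Phi(\rho)) - S(\Lambda \circ \Phi(\rho)) = S(\omega_{\tilde{E}F}) - S(\omega_{\tilde{E}})$) yields~\eqref{eq:prop:cohinfcont2}. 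I do not anticipate a genuine obstacle here: the only point requiring mild care is verifying that the single-input bounds $T(\omega_E,\omega_{\tilde{E}}) \leq \eps/2$ and $\|\omega_E-\omega_{\tilde{E}}\|_\infty \leq \nu$ hold uniformly over the maximizing $\rho$, which is immediate because the unstabilized norms are defined as maxima over all density matrices. The rest is a verbatim adaptation of the argument already given for Theorem~\ref{theo:cohinfcont}.
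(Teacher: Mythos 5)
Your proposal is correct and takes essentially the same route as the paper: the paper offers no separate proof of Proposition~\ref{prop:cohinfcont2}, stating only that it is analogous to that of Theorem~\ref{theo:cohinfcont}, and your adaptation follows that proof verbatim while correctly observing that the hypotheses $\| \Phi^{\mathrm{c}} - \Lambda \circ \Phi \|_1^{\mathrm{D}} \leq \eps$ and $\| \Phi^{\mathrm{c}} - \Lambda \circ \Phi \|_{\infty}^{\mathrm{D}} \leq \nu$ bound $\|\omega_E - \omega_{\tilde{E}}\|_1$ and $\|\omega_E - \omega_{\tilde{E}}\|_\infty$ immediately by definition, rendering the tensor-product implication chain $(a)$--$(b)$ of the stabilized case unnecessary. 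Your identification of the parameters for Theorem~\ref{theo:ContinuityOpNorm} (dimension $d_E$, the pair $(\eps/2,\nu)$, hence $\beta = 2\nu/\eps$) is exactly right.
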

\begin{prop} \label{prop:upperQuantCapa2}
    Let $\Phi : A \to B$ be a channel. If there exists a (degrading) channel $\Lambda : B \to E$ such that $\| \Phi^{\mathrm{c}} - \Lambda \circ \Phi \|_1^{\mathrm{D}} \leq \eps_1$, $\| \Phi^{\mathrm{c}} - \Lambda \circ \Phi \|_{\dia} \leq \eps_{\dia}$ and $\| \Phi^{\mathrm{c}} - \Lambda \circ \Phi \|_{\infty}^{\mathrm{D}} \leq \nu$, with ${\rm{dim}}\, {\mathcal{H}}_E = d_E$ and $\eps_1 \leq 2\nu d_E/(\nu d_E + 3)$, then for $\beta \coloneqq 2\nu/\eps_1$,
    \begin{equation} \label{eq:prop:upperQuantCapa2}
        \begin{aligned}
            Q(\Phi) \leq Q^{(1)}(\Phi)
        & + \frac{\eps_1}{2} \log(\beta d_E-1) + h(\eps_1/2) \\
        & \quad + \eps_{\dia} \log(d_E) + g(\eps_{\dia}/2).
        \end{aligned}
    \end{equation}
\end{prop}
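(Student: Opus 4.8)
The plan is to follow the proof of Theorem~\ref{theo:upperQuantCapa} essentially verbatim, the only substitution being that the single-letter continuity estimate is now supplied by Proposition~\ref{prop:cohinfcont2} (phrased in terms of the unstabilized norms $\|\cdot\|_1^{\mathrm{D}}$ and $\|\cdot\|_\infty^{\mathrm{D}}$), while the capacity bound~\eqref{eq:boundCapaAlickyFannesSutter} is applied with the diamond-norm parameter $\eps_\dia$. I would first note that the three hypotheses on the fixed degrading channel $\Lambda : B \to E$ play two distinct roles: the unstabilized constraints $\| \Phi^{\mathrm{c}} - \Lambda \circ \Phi \|_1^{\mathrm{D}} \leq \eps_1$ and $\| \Phi^{\mathrm{c}} - \Lambda \circ \Phi \|_{\infty}^{\mathrm{D}} \leq \nu$ serve to control the single-use quantities $Q^{(1)}(\Phi)$ and $S(\Phi,\Lambda)$, whereas $\| \Phi^{\mathrm{c}} - \Lambda \circ \Phi \|_{\dia} \leq \eps_{\dia}$ is what is needed to control the regularised capacity $Q(\Phi)$.

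The first step would be to invoke Proposition~\ref{prop:cohinfcont2} with the parameters $\eps_1$ and $\nu$. Since $\eps_1 \leq 2\nu d_E/(\nu d_E + 3)$, it gives, with $\beta = 2\nu/\eps_1$,
\begin{equation} \nonumber
    |Q^{(1)}(\Phi)-S(\Phi,\Lambda)| \leq \tfrac{\eps_1}{2} \log(\beta d_E-1) + h(\eps_1/2).
\end{equation}
As the right-hand side is non-negative, this yields in particular the one-sided bound $S(\Phi,\Lambda) \leq Q^{(1)}(\Phi) + \tfrac{\eps_1}{2} \log(\beta d_E-1) + h(\eps_1/2)$, which plays exactly the role that Theorem~\ref{theo:cohinfcont} played in the proof of Theorem~\ref{theo:upperQuantCapa}. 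The second step is to observe that $\| \Phi^{\mathrm{c}} - \Lambda \circ \Phi \|_{\dia} \leq \eps_{\dia}$ means precisely that $\Phi$ is $\eps_\dia$--degradable with degrading channel $\Lambda$, so that the bound~\eqref{eq:boundCapaAlickyFannesSutter} of Sutter et al.\ applies with $\eps = \eps_{\dia}$, giving $Q(\Phi) \leq S(\Phi,\Lambda) + \eps_{\dia} \log(d_E) + g(\eps_{\dia}/2)$. Substituting the one-sided bound from the first step then reproduces~\eqref{eq:prop:upperQuantCapa2} verbatim, which completes the argument.

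The one point that requires care — and the only genuine departure from Theorem~\ref{theo:upperQuantCapa} — is the bookkeeping of which norm feeds which estimate: one must resist replacing $\eps_{\dia}$ in~\eqref{eq:boundCapaAlickyFannesSutter} by the (potentially smaller) unstabilized distance $\eps_1$, because the regularisation defining $Q(\Phi)$ forces us to use a norm that behaves well under tensor powers, whereas the purely single-letter bound of Proposition~\ref{prop:cohinfcont2} is free to exploit the tighter unstabilized trace- and operator-norm distances. This asymmetry is precisely what makes the three-parameter formulation worthwhile. Underlying Proposition~\ref{prop:cohinfcont2} itself is the elementary identity $\omega_E - \omega_{\tilde{E}} = (\Phi^{\mathrm{c}} - \Lambda \circ \Phi)(\rho)$ for a density matrix $\rho$, so that the two unstabilized norms bound $\|\omega_E - \omega_{\tilde{E}}\|_1$ and $\|\omega_E - \omega_{\tilde{E}}\|_\infty$ directly, after which Theorem~\ref{theo:ContinuityOpNorm} applies with $\mathrm{T}(\omega_E,\omega_{\tilde{E}}) \leq \eps_1/2$.
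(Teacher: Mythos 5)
Your proposal is correct and matches the paper's approach exactly: the paper gives no separate proof for Proposition~\ref{prop:upperQuantCapa2}, stating only that it is analogous to Theorems~\ref{theo:cohinfcont} and~\ref{theo:upperQuantCapa}, which is precisely the substitution you carry out (Proposition~\ref{prop:cohinfcont2} with $\eps_1,\nu$ for the single-letter estimate, then~\eqref{eq:boundCapaAlickyFannesSutter} with $\eps_\dia$). Your closing observation that $\omega_E-\omega_{\tilde{E}} = (\Phi^{\mathrm{c}}-\Lambda\circ\Phi)(\rho)$ lets the unstabilized norms bound $\|\omega_E-\omega_{\tilde{E}}\|_1$ and $\|\omega_E-\omega_{\tilde{E}}\|_\infty$ directly is also the right reading of why the unstabilized variant of Theorem~\ref{theo:cohinfcont} goes through (in fact more simply than the stabilized case).
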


\noindent The following simple Lemma allows us to make an even stronger case for why Proposition~\ref{prop:upperQuantCapa2} provides a better bound than Theorem~\ref{theo:upperQuantCapa}.
\begin{lem} \label{lem:1toInfLeq1to1}
Let $\Phi_1 : A \to B$ and $\Phi_2 : A \to B$ be two quantum channels, then
\begin{equation}
    \| \Phi_1 - \Phi_2 \|_{\infty}^{\mathrm{D}} \leq \frac{\| \Phi_1 - \Phi_2 \|_1^{\mathrm{D}}}{2}.
\end{equation}
\end{lem}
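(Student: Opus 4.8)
The plan is to reduce the statement to an elementary spectral fact about the difference map evaluated on a single state, and then maximize over inputs. Write $\Delta \coloneqq \Phi_1 - \Phi_2$. Since $\Phi_1$ and $\Phi_2$ are both CPTP, $\Delta$ is Hermiticity-preserving and, crucially, trace-annihilating: for every $\rho \in \dhx$ the operator $M \coloneqq \Delta(\rho) = \Phi_1(\rho) - \Phi_2(\rho)$ is Hermitian and satisfies $\tr{M} = \tr{\Phi_1(\rho)} - \tr{\Phi_2(\rho)} = 1 - 1 = 0$. So the first step I would take is simply to record that $\Delta(\rho)$ is a \emph{traceless} Hermitian operator for any density matrix $\rho$.

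The heart of the argument is then the following elementary inequality: for any Hermitian traceless operator $M$,
\begin{equation} \nonumber
\|M\|_\infty \leq \tfrac{1}{2}\|M\|_1.
\end{equation}
To establish it, I would diagonalize $M$ and let $\lambda_+ \coloneqq \sum_{\lambda_i > 0}\lambda_i$ and $\lambda_- \coloneqq \sum_{\lambda_i < 0}|\lambda_i|$ be the sums of its positive eigenvalues and of the magnitudes of its negative eigenvalues. Tracelessness forces $\lambda_+ = \lambda_-$, and since $\|M\|_1 = \lambda_+ + \lambda_-$ this gives $\lambda_+ = \lambda_- = \tfrac12\|M\|_1$. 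Now $\|M\|_\infty$ is the largest eigenvalue of $M$ in absolute value, and any single positive eigenvalue is bounded by $\lambda_+$ while any single negative eigenvalue is bounded in magnitude by $\lambda_-$; hence $\|M\|_\infty \leq \max(\lambda_+,\lambda_-) = \tfrac12\|M\|_1$.

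Finally, I would apply this pointwise with $M = \Delta(\rho)$ and maximize over $\rho \in \dhx$ (the maxima defining the norms in~\eqref{eq:normUnstabilized} being attained by compactness of $\dhx$ and continuity of the Schatten norms), obtaining
\begin{equation} \nonumber
\| \Phi_1 - \Phi_2 \|_{\infty}^{\mathrm{D}} = \max_{\rho \in \dhx} \|\Delta(\rho)\|_\infty \leq \max_{\rho \in \dhx} \tfrac12\|\Delta(\rho)\|_1 = \tfrac{1}{2}\| \Phi_1 - \Phi_2 \|_1^{\mathrm{D}}.
\end{equation}
I do not anticipate a genuine obstacle here; the only point demanding care is to invoke the trace-preserving property of the two channels, which is exactly what makes $\Delta(\rho)$ traceless and thereby yields the factor $\tfrac12$, as opposed to the trivial monotonicity bound $\|M\|_\infty \leq \|M\|_1$ that holds for arbitrary $M$.
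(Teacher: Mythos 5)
Your proof is correct and follows essentially the same route as the paper: the paper also deduces the pointwise bound $\|\Phi_1(\rho)-\Phi_2(\rho)\|_\infty \leq \tfrac12\|\Phi_1(\rho)-\Phi_2(\rho)\|_1$ and then maximizes over $\rho \in \dhx$, citing the second statement of its Lemma~\ref{lem:constDimQ} (whose proof is exactly your traceless-Hermitian spectral argument, phrased via the Jordan decomposition $\rho-\sigma = Q-S$) instead of reproving the spectral fact inline. The only cosmetic difference is that you state the key inequality for arbitrary traceless Hermitian operators rather than for differences of density matrices, which changes nothing of substance.
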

\begin{proof}[Proof of Lemma~\ref{lem:1toInfLeq1to1}]
    From Lemma~\ref{lem:constDimQ}, $\|\sigma_1-\sigma_2\|_{\infty} \leq \|\sigma_1-\sigma_2\|_1/2$ for any $\sigma_1, \sigma_2 \in \mathcal{D}(\mathcal{H}_B)$, so that $\|\Phi_1(\rho)-\Phi_2(\rho)\|_{\infty} \leq \|\Phi_1(\rho)-\Phi_2(\rho)\|_1/2$ for any $\rho \in \mathcal{D}(\mathcal{H}_A)$. Maximizing both sides over $\rho \in \mathcal{D}(\mathcal{H}_A)$ gives the result.
\end{proof}
\begin{rem} \label{rem:3}
If one can compute the quantities $\| \Phi^{\mathrm{c}} - \Lambda \circ \Phi \|_1^{\mathrm{D}}$, $\| \Phi^{\mathrm{c}} - \Lambda \circ \Phi \|_{\dia}$ and $\| \Phi^{\mathrm{c}} - \Lambda \circ \Phi \|_{\infty}^{\mathrm{D}}$ for any quantum channel $\Lambda$, one obtains the lowest upper bounds on $Q(\Phi)$ (given by~\eqref{eq:prop:upperQuantCapa2}) by taking $\eps_1 \coloneqq \| \Phi^{\mathrm{c}} - \Lambda \circ \Phi \|_1^{\mathrm{D}}$, $\eps_{\dia } \coloneqq \| \Phi^{\mathrm{c}} - \Lambda \circ \Phi \|_{\dia}$ and $\nu \coloneqq \| \Phi^{\mathrm{c}} - \Lambda \circ \Phi \|_{\infty}^{\mathrm{D}}$. In that case, Lemma~\ref{lem:1toInfLeq1to1} implies that $\beta \leq 1$ in Proposition~\ref{prop:upperQuantCapa2}. Coupled with the fact that $\eps_1 \leq \eps_\dia$ by definition, this guarantees that the bound given in~\eqref{eq:prop:upperQuantCapa2} (for these choices of $\eps_1$, $\eps_{\dia }$ and $\nu$) is at least as good as the bound given in point~$(i)$ of Theorem~7 in~\cite{Sutter2017}.\footnote{In contrast, the cb norm of the difference of two quantum channels is not necessarily bounded from above by half the diamond norm of the difference, meaning that $\beta$ appearing in Theorem~\ref{theo:upperQuantCapa} is not necessarily bounded from above by $1$ if one chooses $\eps \coloneqq \| \Phi^{\mathrm{c}} - \Lambda \circ \Phi \|_{\dia}$ and $\nu \coloneqq \| \Phi^{\mathrm{c}} - \Lambda \circ \Phi \|_{\cb}$. This is, for example, the case when $\Phi$ is the depolarizing channel. Indeed, later in Section~\ref{sec:depol}, we show that Proposition~\ref{prop:upperQuantCapa2} provides us with a better upper bound than Theorem~\ref{theo:upperQuantCapa} for the depolarizing channel.}
\end{rem}
\noindent Note that, for any quantum channels $\Phi$ and $\Lambda$, as in Proposition~\ref{prop:upperQuantCapa2}, one can always find finite non-negative parameters $\eps_1$, $\eps_\dia$ and $\nu$ such that $\| \Phi^{\mathrm{c}} - \Lambda \circ \Phi \|_1^{\mathrm{D}} \leq \eps_1$, $\| \Phi^{\mathrm{c}} - \Lambda \circ \Phi \|_{\dia} \leq \eps_{\dia}$ and $\| \Phi^{\mathrm{c}} - \Lambda \circ \Phi \|_{\infty}^{\mathrm{D}} \leq \nu$. Indeed, as already mentioned, $||\Phi^{\mathrm{c}} - \Lambda \circ \Phi||_\dia \leq 2$. Furthermore, $\| \Phi^{\mathrm{c}} - \Lambda \circ \Phi \|_1^{\mathrm{D}} \leq ||\Phi^{\mathrm{c}} - \Lambda \circ \Phi||_\dia \leq 2$ and $\| \Phi^{\mathrm{c}} - \Lambda \circ \Phi \|_{\infty}^{\mathrm{D}} \leq 1$ using Lemma~\ref{lem:1toInfLeq1to1}.
\bigskip

\noindent \textbf{Upper bounds on private classical capacities.} The upper bound on the private classical capacity of a quantum channel obtained by Sutter et al.~\cite{Sutter2017} (see point~$(iii)$ of Theorem~7) can be analogously strengthened by exploiting the notion of $(\eps,\nu)$--degradable channels. Its proof exploits the Audenaert-Fannes-Petz inequality~\eqref{eq:th-2} and the Alicki-Fannes-Winter inequality~\cite{AlickiFannes,Winter2016}. By replacing the former inequality by the inequality in our Theorem~\ref{theo:ContinuityOpNorm}, we obtain the following.
\begin{theo} \label{theo:upperPrivCapa}
Let $\Phi : A \to B$ be an $(\eps,\nu)$--degradable channel with a degrading channel $\Lambda : B \to E$, with ${\rm{dim}}\, {\mathcal{H}}_E = d_E$.
If $\eps \leq 2\nu d_E/(\nu d_E + 3)$, then for $\beta \coloneqq 2\nu/\eps$,
    \begin{equation} \nonumber
        \begin{aligned}
            P(\Phi) \leq P^{(1)}(\Phi)
            & + \frac{\eps}{2} \log(\beta d_E-1) + h(\eps/2) \\
            & \quad + 3 \eps \log(d_E) + 3 g(\eps/2).
        \end{aligned}
    \end{equation}
\end{theo}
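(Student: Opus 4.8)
The plan is to reuse, essentially verbatim, the proof of point~$(iii)$ of Theorem~7 in~\cite{Sutter2017}, modifying only the single step in which the Audenaert--Fannes--Petz inequality (Theorem~\ref{th-2}) is invoked and substituting our Theorem~\ref{theo:ContinuityOpNorm} there. The starting point is the expression of the channel private information as a difference of Holevo quantities: for an ensemble $\{p_i,\rho_i\}$ with average state $\bar\rho=\sum_i p_i\rho_i$, one has $P^{(1)}(\Phi)=\max_{\{p_i,\rho_i\}}[I(X;B)-I(X;E)]$ evaluated on the classical--quantum state built from the dilations $V$ and $W$ of~\eqref{eq:maxCondEnt}. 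Exactly as in the quantum-capacity case (Theorems~\ref{theo:cohinfcont} and~\ref{theo:upperQuantCapa}), Sutter et al.\ bound $P(\Phi)$ by combining this single-letter quantity with the Alicki--Fannes--Winter inequality, which handles the regularization $P(\Phi)=\lim_k\frac1k P^{(1)}(\Phi^{\otimes k})$ and the conditional-entropy differences, together with one continuity estimate for the von Neumann entropy of the \emph{averaged} environment states.

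The crucial point is that the Audenaert--Fannes--Petz inequality enters their argument only once, to control the difference $|S(\bar\omega_E)-S(\bar\omega_{\tilde E})|$ of the two marginal environment entropies, where $\bar\omega_E=\Phi^{\mathrm{c}}(\bar\rho)$ and $\bar\omega_{\tilde E}=\Lambda\circ\Phi(\bar\rho)$. I would replace this by Theorem~\ref{theo:ContinuityOpNorm}. To do so legitimately I must check its hypotheses for the pair $(\bar\omega_E,\bar\omega_{\tilde E})$: precisely as in the proof of Theorem~\ref{theo:cohinfcont}, the diamond-norm bound $\|\Phi^{\mathrm{c}}-\Lambda\circ\Phi\|_\dia\le\eps$ gives $\mathrm{T}(\bar\omega_E,\bar\omega_{\tilde E})\le\eps/2$, while the cb-norm bound $\|\Phi^{\mathrm{c}}-\Lambda\circ\Phi\|_\cb\le\nu$ gives $\|\bar\omega_E-\bar\omega_{\tilde E}\|_\infty\le\nu$ via the tensoring argument (steps $(a)$ and $(b)$ there). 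Since $\dim\mathcal{H}_E=d_E$, the hypothesis $\eps\le 2\nu d_E/(\nu d_E+3)$ is exactly the condition $\eps/2\le\nu d_E/(\nu d_E+3)$ needed to apply Theorem~\ref{theo:ContinuityOpNorm} with trace distance $\eps/2$, operator-norm distance $\nu$ and dimension $d_E$; it yields $|S(\bar\omega_E)-S(\bar\omega_{\tilde E})|\le\frac{\eps}{2}\log(\beta d_E-1)+h(\eps/2)$ with $\beta=2\nu/\eps$, the sought sharpening of the $\frac{\eps}{2}\log(d_E-1)+h(\eps/2)$ term produced by Theorem~\ref{th-2}.

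Every other ingredient is left untouched: the Alicki--Fannes--Winter inequality governs the remaining conditional-entropy differences and the regularization, and its repeated use reproduces the terms $3\eps\log(d_E)+3g(\eps/2)$ already appearing in Sutter et al.'s bound. Collecting the single, now-sharpened continuity term $\frac{\eps}{2}\log(\beta d_E-1)+h(\eps/2)$ with these unchanged contributions gives the claimed inequality.

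I expect the main difficulty to be bookkeeping rather than conceptual. First, one must verify that the operator-norm control genuinely passes to the \emph{averaged} states $\bar\omega_E,\bar\omega_{\tilde E}$; this is where the stabilized cb-norm, rather than the bare $1\to\infty$ norm, is indispensable, just as in Theorem~\ref{theo:cohinfcont}. Second, one must confirm that the replaced continuity estimate is applied at the single-copy level, so that the dimension appearing is $d_E$ and the parameter is $\beta=2\nu/\eps$; this is consistent with Sutter et al.'s framework, in which the approximate additivity of the private information already reduces all error contributions to single-copy quantities, so that no $d_E^{k}$ survives the division by $k$. Finally, since the right-hand side of~\eqref{eq:theo:ContinuityOpNorm2} is non-negative over the admissible range, substituting this smaller bound preserves every sign and monotonicity step in Sutter et al.'s chain of inequalities, so the argument goes through unchanged.
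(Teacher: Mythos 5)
Your proposal is correct and follows essentially the same route as the paper, whose own (very terse) proof of Theorem~\ref{theo:upperPrivCapa} is precisely to take Sutter et al.'s proof of point $(iii)$ of Theorem~7 and replace its single use of the Audenaert--Fannes--Petz inequality with Theorem~\ref{theo:ContinuityOpNorm}, leaving the Alicki--Fannes--Winter contributions $3\eps\log(d_E)+3g(\eps/2)$ untouched. Your verification of the hypotheses --- trace distance at most $\eps/2$ from the diamond-norm bound, operator-norm distance at most $\nu$ from the cb-norm bound via the tensoring argument of Theorem~\ref{theo:cohinfcont}, so that $\eps \leq 2\nu d_E/(\nu d_E+3)$ is exactly the condition needed and $\beta = 2\nu/\eps$ --- is, if anything, more explicit than what the paper records.
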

\begin{prop} \label{prop:upperPrivCapa2}
    Let $\Phi : A \to B$ be a channel. If there exists a (degrading) channel $\Lambda : B \to E$ such that $\| \Phi^{\mathrm{c}} - \Lambda \circ \Phi \|_1^{\mathrm{D}} \leq \eps_1$, $\| \Phi^{\mathrm{c}} - \Lambda \circ \Phi \|_{\dia} \leq \eps_{\dia}$ and $\| \Phi^{\mathrm{c}} - \Lambda \circ \Phi \|_{\infty}^{\mathrm{D}} \leq \nu$, with ${\rm{dim}}\, {\mathcal{H}}_E = d_E$ and $\eps_1 \leq 2\nu d_E/(\nu d_E + 3)$, then for $\beta \coloneqq 2\nu/\eps_1$,
    \begin{equation} \label{eq:prop:upperPrivCapa2}
        \begin{aligned}
            P(\Phi) \leq P^{(1)}(\Phi)
            & + \frac{\eps_1}{2} \log(\beta d_E-1) + h(\eps_1/2) \\
            & \quad + 3 \eps_{\dia} \log(d_E) + 3 g(\eps_{\dia}/2).
        \end{aligned}
    \end{equation}
\end{prop}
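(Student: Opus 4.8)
The plan is to follow the proof of Theorem~\ref{theo:upperPrivCapa} --- which itself adapts point~$(iii)$ of Theorem~7 in~\cite{Sutter2017} by substituting our Theorem~\ref{theo:ContinuityOpNorm} for the Audenaert--Fannes--Petz inequality --- but to split the error into two contributions governed by two different families of norms, in exactly the way Proposition~\ref{prop:upperQuantCapa2} refines Theorem~\ref{theo:upperQuantCapa}. Concretely, I would route the continuity-of-entropy contribution through the unstabilized distances $\eps_1$ and $\nu$ (single-copy), and route the Alicki--Fannes--Winter contribution --- the part that must survive the regularization~\eqref{eq:pcap} --- through the stabilized diamond norm $\eps_\dia$ (multi-copy). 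Since the bound splits cleanly along these lines, the continuity term can only improve when $\nu$ is genuinely smaller than $\eps_1/2$, while the $\eps_\dia$ term is left untouched.

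First I would isolate the continuity step. For any input ensemble $\{p_i,\rho_i\}$ with average $\bar\rho = \sum_i p_i \rho_i$, the entropic quantities appearing in the private information force a comparison between the environment outputs $\Phi^{\mathrm{c}}(\bar\rho)$ and $(\Lambda\circ\Phi)(\bar\rho)$. These are outputs of the single channels on a genuine density matrix, so the unstabilized bounds apply directly and give $\tfrac12\|\Phi^{\mathrm{c}}(\bar\rho)-(\Lambda\circ\Phi)(\bar\rho)\|_1 \leq \eps_1/2$ together with $\|\Phi^{\mathrm{c}}(\bar\rho)-(\Lambda\circ\Phi)(\bar\rho)\|_\infty \leq \nu$. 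The hypothesis $\eps_1 \leq 2\nu d_E/(\nu d_E + 3)$ is exactly the admissibility condition needed to apply Theorem~\ref{theo:ContinuityOpNorm} in dimension $d_E$ with trace distance at most $\eps_1/2$ and operator-norm distance at most $\nu$; doing so produces the term $\tfrac{\eps_1}{2}\log(\beta d_E-1)+h(\eps_1/2)$ with $\beta = 2\nu/\eps_1$, which is precisely the first line of~\eqref{eq:prop:upperPrivCapa2}. This is the same device used to prove Proposition~\ref{prop:cohinfcont2}.

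Next I would carry out the Alicki--Fannes--Winter step as in~\cite{Sutter2017}. The point of difficulty is that $P(\Phi)$ is defined through the regularization $P^{(1)}(\Phi^{\otimes k})$, so the relevant inputs live on $k$ tensor copies and may be entangled across them; bounding the resulting conditional-entropy perturbations therefore demands a norm that is stable under tensoring with an ancilla, namely the diamond norm, and this is the sole place where $\eps_\dia$ (rather than $\eps_1$) is needed. The Alicki--Fannes--Winter inequality~\cite{AlickiFannes,Winter2016} then contributes $3\eps_\dia\log(d_E)+3g(\eps_\dia/2)$, the factor $3$ being inherited from the private-information structure exactly as in point~$(iii)$ of Theorem~7 of~\cite{Sutter2017} and in our Theorem~\ref{theo:upperPrivCapa}. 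Adding the single-copy continuity bound to this multi-copy term and letting $k\to\infty$ yields~\eqref{eq:prop:upperPrivCapa2}.

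The main obstacle I anticipate is bookkeeping rather than a genuinely new idea: one must check that \emph{every} entropy-continuity invocation in the private-information argument can be reduced to states that are single-channel outputs of density matrices, so that the unstabilized pair $(\eps_1,\nu)$ genuinely suffices there, while cleanly isolating the intrinsically multi-copy perturbation so that $\eps_\dia$ enters only in the Alicki--Fannes--Winter step. Once this separation is established, the comparison with the original private-capacity bound of~\cite{Sutter2017} follows exactly as in Remark~\ref{rem:3}: taking $\eps_1,\eps_\dia,\nu$ to be the exact norms gives $\eps_1 \leq \eps_\dia$ by definition and $\nu \leq \eps_1/2$ by Lemma~\ref{lem:1toInfLeq1to1}, whence $\beta \leq 1$, so that~\eqref{eq:prop:upperPrivCapa2} is at least as good as point~$(iii)$ of Theorem~7 in~\cite{Sutter2017}.
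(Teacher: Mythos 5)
Your proposal is correct and follows essentially the same route as the paper, which leaves the proof of Proposition~\ref{prop:upperPrivCapa2} implicit (it is obtained from Theorem~\ref{theo:upperPrivCapa}, i.e.~point~$(iii)$ of Theorem~7 in~\cite{Sutter2017} with the Audenaert--Fannes--Petz step replaced by Theorem~\ref{theo:ContinuityOpNorm}, in exactly the way Propositions~\ref{prop:cohinfcont2} and~\ref{prop:upperQuantCapa2} refine Theorems~\ref{theo:cohinfcont} and~\ref{theo:upperQuantCapa}): your split of the error --- the unstabilized pair $(\eps_1,\nu)$ for the single-copy entropy-continuity step, with the hypothesis $\eps_1 \leq 2\nu d_E/(\nu d_E+3)$ matching the admissibility condition of Theorem~\ref{theo:ContinuityOpNorm} at trace distance $\eps_1/2$ and giving $\beta = 2\nu/\eps_1$, versus the stabilized diamond norm $\eps_\dia$ for the multi-copy Alicki--Fannes--Winter step that survives the regularization in~\eqref{eq:pcap} --- is precisely the intended argument, and your bookkeeping concern resolves affirmatively since every continuity invocation involves outputs of the single-copy channel difference on genuine density matrices (including reduced states of entangled multi-copy inputs). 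One trivial pointer: the comparison with point~$(iii)$ of Theorem~7 in~\cite{Sutter2017} for the private capacity is recorded in Remark~\ref{rem:4} rather than Remark~\ref{rem:3}, but the logic you give ($\eps_1 \leq \eps_\dia$ and $\beta \leq 1$ via Lemma~\ref{lem:1toInfLeq1to1}) is the same.
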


%%%%%%%%%%%%%%%%%%% SDP FOR UNSTABILISED NORMS %%%%%%%%%%%%%%%%%%%

\subsection{\texorpdfstring{Upper bounds on $\| \cdot \|_{\infty}^{\mathrm{D}}$ and $\| \cdot \|_{1}^{\mathrm{D}}$ for a difference of two quantum channels and corresponding SDPs}{Upper bound on 1 to infinity norm and 1 to 1 norm for a difference of two quantum channels and corresponding SDPs}}

The purpose of this section is to provide computable upper bounds on the unstabilised norms $\| \Phi \|_{\infty}^{\mathrm{D}}$ and $\| \Phi\|_{1}^{\mathrm{D}}$ (defined through~\eqref{eq:normUnstabilized}), where $\Phi$ is a difference of two quantum channels. These in turn allow us to compute the upper bounds on the quantum and private classical capacities given in Propositions~\ref{prop:upperQuantCapa2} and~\ref{prop:upperPrivCapa2}. As we show below, the upper bounds on the norms can be efficiently expressed as semidefinite programs (SDPs). For the sake of completeness we include a brief description of SDPs below.
\medskip

\noindent
%%%%%%%%%%%%%%%% Intro on SDPs %%%%%%%%%%%%%%%%
\textbf{Semidefinite programming.} We adopt the formulation given by Watrous~\cite{bookWatrous} to define semidefinite programs (see also~\cite{Watrousln,Watrous2009,Watrous2013}). A semidefinite program (SDP) over two complex vector spaces $\mathcal{X} = \mathbb{C}^n$ and $\mathcal{Y} = \mathbb{C}^m$ is specified by a triple $(\Psi,C,D)$, where:
\begin{enumerate}
    \item $\Psi : \mathcal{L}(\mathcal{X}) \rightarrow \mathcal{L}(\mathcal{Y})$ is a Hermiticity-preserving super-operator;
    \item $C \in \mathcal{L}(\mathcal{X})$ and $D \in \mathcal{L}(\mathcal{Y})$ are Hermitian operators.
\end{enumerate}
The following two optimization problems are associated with such a semidefinite program:
\begin{center}
    \begin{minipage}{.49\columnwidth}
    \begin{center}
        \underline{Primal problem}
        \begin{equation*}
            \begin{aligned}
                \text{maximize:} \quad & \inner{C}{X}, \\
                \text{subject to:} \quad & \Psi(X) \leq D, \\
                & X \geq 0, \\
                & X \in \mathcal{L}(\mathcal{X}).
            \end{aligned}
        \end{equation*}
    \end{center}
    \end{minipage}
    \begin{minipage}{.49\columnwidth}
    \begin{center}
        \underline{Dual problem}
        \begin{equation*}
            \begin{aligned}
                \text{minimize:} \quad & \inner{D}{Y}, \\
                \text{subject to:} \quad & \Psi^*(Y) \geq C, \\
                & Y \geq 0, \\
                & Y \in \mathcal{L}(\mathcal{Y}).
            \end{aligned}
        \end{equation*}
    \end{center}
    \end{minipage}
\end{center}
Denote by $\mathscr{P}$ and $\mathscr{D}$ the sets of so-called primal and dual \emph{feasible operators}, respectively, i.e.,
\begin{equation}
    \begin{aligned}
        \mathscr{P} & = \{ X \in \mathcal{L}(\mathcal{X}) : X \geq 0, \Psi(X) \leq D \}, \\
        \mathscr{D} & = \{ Y \in \mathcal{L}(\mathcal{Y}) : Y \geq 0, \Psi^*(X) \geq C \}.
    \end{aligned}
\end{equation}
For every SDP, it holds that
\begin{equation}
    \sup_{X \in \mathscr{P}} \inner{C}{X} \leq \inf_{Y \in \mathscr{D}} \inner{D}{Y}.
\end{equation}
This is known as weak duality. On the other hand, if the so-called Slater's condition (see e.g.\cite{bookWatrous}) holds then the inequality reduces to an equality, and the SDP is said to satisfy strong duality.
The latter may fail to hold for some SDPs.
\medskip

We are now in a position to derive upper bounds on the unstabilized norms, and obtain SDP formulations of these bounds. We begin with the unstabilized norm $\| \cdot \|_{\infty}^{\mathrm{D}}$. Define the set
\begin{equation} \nonumber
    \pptd \coloneqq \{ \sigma \in \bhyx : \sigma \geq 0, \ptx{\sigma} \geq 0, \tr{\sigma} \leq 1 \},
\end{equation}
where $\ptx{\sigma}$ denotes the partial transpose of $\sigma$ over subsystem $A$.
Let $\Phi : \bhx \rightarrow \bhy$ be a Hermiticity-preserving linear map\footnote{Note that the difference of two quantum channels is given by such a map.} and define the following quantities:
\begin{equation} \label{eq:upperBound1ToInf}
    \begin{aligned}
        \mathcal{M}_{\infty}^{(-)}(\Phi) & \coloneqq \max_{\sigma \in \pptd} \{ - \tr{J(\Phi) \sigma} \}, \\
        \mathcal{M}_{\infty}^{(+)}(\Phi) & \coloneqq \max_{\sigma \in \pptd} \{ \tr{J(\Phi) \sigma} \},
    \end{aligned}
\end{equation}
where $J(\Phi) \in \bhyx$ is the Choi operator of $\Phi$. We have the following Proposition.
%%%%%%%%%%%%%%%% Upper bound on 1->Inf norm %%%%%%%%%%%%%%%%
\begin{prop} \label{prop:ineq1ToInfMInf}
    Let $\Phi : \bhx \rightarrow \bhy$ be a Hermiticity-preserving linear map. Then
    \begin{equation} \label{eq:prop:ineq1ToInfMInf}
        \| \Phi \|_{\infty}^{\mathrm{D}} \leq \max \{ \mathcal{M}_{\infty}^{(-)}(\Phi), \mathcal{M}_{\infty}^{(+)}(\Phi) \}.
    \end{equation}
\end{prop}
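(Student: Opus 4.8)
The plan is to bound the output operator norm pointwise in the input state and then maximize over inputs. Since $\Phi$ is Hermiticity-preserving and every $\rho \in \dhx$ is Hermitian, $\Phi(\rho)$ is a Hermitian operator on $\mathcal{H}_B$. For such an operator one has the variational identity
\[
    \| \Phi(\rho) \|_{\infty} = \max \left\{ \lambda_{\max}(\Phi(\rho)), \, \lambda_{\max}(-\Phi(\rho)) \right\},
\]
where $\lambda_{\max}(H) = \max\{ \tr{H \tau} : \tau \in \dhy \}$ is attained at the rank-one projector onto a top eigenvector of $H$. Thus it suffices to upper bound the two functionals $\pm \tr{\Phi(\rho)\tau}$ uniformly over $\tau \in \dhy$ and $\rho \in \dhx$.

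Next I would rewrite these functionals through the Choi operator. Starting from the Choi representation~\eqref{eq:Choi2}, $\Phi(\rho) = \trx\!\left(J(\Phi)(\iy \otimes \rho^T)\right)$, and using the adjoint relation for the partial trace, $\tr{(\trx M)\tau} = \tr{M(\tau \otimes \ix)}$, one obtains
\[
    \tr{\Phi(\rho)\tau} = \tr{J(\Phi)(\tau \otimes \rho^T)}.
\]
Defining $\sigma \coloneqq \tau \otimes \rho^T \in \bhyx$ therefore turns each functional into $\pm \tr{J(\Phi)\sigma}$, which is exactly the objective appearing in the definitions~\eqref{eq:upperBound1ToInf} of $\mathcal{M}_{\infty}^{(\pm)}(\Phi)$.

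The crucial observation is then that this particular $\sigma$ is feasible for the optimizations defining $\mathcal{M}_{\infty}^{(\pm)}$, i.e.\ $\sigma \in \pptd$: positivity $\sigma \geq 0$ follows from $\tau, \rho^T \geq 0$; the partial-transpose constraint holds because $\ptx{\sigma} = \tau \otimes \rho \geq 0$; and $\tr{\sigma} = \tr{\tau}\tr{\rho^T} = 1 \leq 1$. Consequently $\tr{\Phi(\rho)\tau} = \tr{J(\Phi)\sigma} \leq \mathcal{M}_{\infty}^{(+)}(\Phi)$ and $-\tr{\Phi(\rho)\tau} \leq \mathcal{M}_{\infty}^{(-)}(\Phi)$, whence $\|\Phi(\rho)\|_{\infty} \leq \max\{\mathcal{M}_{\infty}^{(+)}(\Phi), \mathcal{M}_{\infty}^{(-)}(\Phi)\}$. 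Taking the maximum over $\rho \in \dhx$ yields~\eqref{eq:prop:ineq1ToInfMInf}.

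The reason one obtains only an inequality, which is also the conceptual point of the construction, is that the feasible set $\pptd$ is strictly larger than the set of product operators $\{\tau \otimes \rho^T\}$ that actually arise; enlarging to $\pptd$ is precisely the relaxation that renders the right-hand side a semidefinite program. I expect the only step requiring genuine care to be the partial-trace and transpose bookkeeping of the second paragraph, where one must respect the transpose convention fixed by $\ket{\Omega}$ in the Choi representation; the remainder is a direct verification of the three defining constraints of $\pptd$.
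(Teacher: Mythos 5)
Your proof is correct and takes essentially the same route as the paper's: both reduce $\|\Phi(\rho)\|_{\infty}$, for Hermitian $\Phi(\rho)$, to the two linear functionals $\pm\tr{J(\Phi)\sigma}$ on product operators via the Choi representation, verify that these products lie in $\pptd$, and conclude by relaxing to that convex set. The only cosmetic differences are that you optimize over normalized states $\tau \in \dhy$ while the paper uses subnormalized $Z \geq 0$ with $\tr{Z} \leq 1$, and that you keep the transpose in $\tau \otimes \rho^T$ explicit (a nice point of care, since the paper's set $\mathcal{Q}_1$ silently absorbs it by re-parametrizing over $\rho \in \dhx$) — neither affects the argument.
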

%%%%%%%%%%%%%%%% Proof of upper bound on 1->Inf norm %%%%%%%%%%%%%%%%
\begin{proof}
    Using~\eqref{eq:Choi2}, we can write
    \begin{equation} \nonumber
        \| \Phi \|_{\infty}^{\mathrm{D}} = \max \{ ||\trx\left (J(\Phi) \left[\iy \otimes \rho \right] \right)||_{\infty} : \rho \in \dhx \}.
    \end{equation}
    Now, for any Hermitian operator $H$, one can write a spectral decomposition
    \begin{equation}
        H = \sum_i p_i \proj{\psi_i} - \sum_i q_i \proj{\phi_i},
    \end{equation}
    where $p_i \geq 0$ for all $i$ and $q_i > 0$ for all $i$. It can be seen that
    \begin{equation}
        \max_i p_i = \lambda_+(H) = \max_{Z \geq 0} \{ \tr{Z H} : \, \tr{Z} \leq 1 \},
    \end{equation}
    \begin{equation}
        \max_i q_i = - \lambda_-(H) = \max_{Z \geq 0} \{ -\tr{Z H} : \, \tr{Z} \leq 1 \},
    \end{equation}
    so that $||H||_{\infty} = \max \{ \lambda_+(H), - \lambda_-(H) \}$.
    Since $\Phi$ is Hermiticity-preserving, $J(\Phi)$ is Hermitian, and we can therefore write
    \begin{equation} \label{eq:opt1toInfD}
        \| \Phi \|_{\infty}^{\mathrm{D}} = \max \left\{ \max_{X \in \mathcal{Q}_1} \{ \Tr\left(J(\Phi) X \right) \}, \max_{X \in \mathcal{Q}_1} \{ - \Tr\left(J(\Phi) X \right) \} \right\}
    \end{equation}
    where we have defined the set
    \begin{equation} \label{eq:Q1}
        \mathcal{Q}_1 \coloneqq \{ Z \otimes \rho : \rho \in \dhx, Z \in \bhy, Z \geq 0, \tr{Z} \leq 1 \}.
    \end{equation}
    The fact that $\mathcal{Q}_1 \subseteq \pptd$ gives the desired result.
\end{proof}
\noindent Interestingly, the set $\pptd$ is convex, unlike the set $\mathcal{Q}_1$. This observation allows us to construct the following SDPs for the two quantities $\mathcal{M}_{\infty}^{(\pm)}(\Phi)$.
%%%%%%%%%%%%%%%% SDP for upper bound on 1->Inf norm %%%%%%%%%%%%%%%%
\begin{prop}
\label{prop:SDP-ub-infty}
    Let $\Phi : \bhx \rightarrow \bhy$ be a Hermiticity-preserving linear map. The optimization problems $\mathcal{M}_{\infty}^{(\pm)}(\Phi)$ (given in~\eqref{eq:upperBound1ToInf}) can be expressed as SDPs %associated 
    with the following primal and dual problems:
    \begin{center}
    \begin{minipage}{.49\columnwidth}
    \begin{center}
        \underline{Primal problem}
        \begin{equation*}
            \begin{aligned}
                \text{max} \quad & \inner{\pm J(\Phi)}{\sigma}, \\
                \text{s.t.} \quad & \sigma \geq 0, \\
                & \Tr(\sigma) \leq 1, \\
                & \ptx{\sigma} \geq 0, \\
                & \sigma \in \bhyx.
            \end{aligned}
        \end{equation*}
    \end{center}
    \end{minipage}
    \begin{minipage}{.49\columnwidth}
    \begin{center}
        \underline{Dual problem}
        \begin{equation*}
            \begin{aligned}
                \text{min} \quad & \lambda, \\
                \text{s. t.} \quad & \lambda \in \mathbb{R}_+, \\
                & \omega \geq 0, \\
                & \lambda \, \iyx - \ptx{\omega} \geq \pm J(\Phi), \\
                & \omega \in \bhyx.
            \end{aligned}
        \end{equation*}
    \end{center}
    \end{minipage}
    \end{center}
\end{prop}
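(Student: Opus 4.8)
The plan is to cast each of the two optimization problems $\mathcal{M}_{\infty}^{(\pm)}(\Phi)$ directly into the standard primal form specified by a triple $(\Psi, C, D)$ recalled above, and then to obtain the stated dual by computing the adjoint $\Psi^*$. The first observation is that the displayed primal problem is nothing but the definition of $\mathcal{M}_{\infty}^{(\pm)}(\Phi)$: since $\Phi$ is Hermiticity-preserving, its Choi operator $J(\Phi)$ is Hermitian, so $\inner{\pm J(\Phi)}{\sigma} = \pm \tr{J(\Phi)\sigma}$, and the constraints $\sigma \geq 0$, $\Tr(\sigma) \leq 1$, $\ptx{\sigma} \geq 0$ are precisely those defining $\pptd$. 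Hence the optimal value of the primal equals $\mathcal{M}_{\infty}^{(\pm)}(\Phi)$, and it remains only to exhibit the SDP data and derive the dual.

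The key device is to encode the two heterogeneous constraints---the scalar inequality $\Tr(\sigma) \leq 1$ and the operator inequality $\ptx{\sigma} \geq 0$---through a single superoperator inequality, by letting the output space be a direct sum of a one-dimensional space and $\bhyx$. Concretely, I would take $C \coloneqq \pm J(\Phi)$, the variable $X \coloneqq \sigma \in \bhyx$, and define the map $\Psi(\sigma) \coloneqq \Tr(\sigma) \oplus (-\ptx{\sigma})$ together with $D \coloneqq 1 \oplus 0$. Then the primal constraint $\Psi(\sigma) \leq D$ splits into $\Tr(\sigma) \leq 1$ and $-\ptx{\sigma} \leq 0$, reproducing the feasible set $\pptd$ exactly, while the objective $\inner{C}{X}$ is $\inner{\pm J(\Phi)}{\sigma}$, as required.

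Next I would compute the adjoint $\Psi^*$. Writing a general dual variable as $Y = \lambda \oplus \omega$ with $\lambda \in \mathbb{R}$ and $\omega \in \bhyx$ Hermitian, the defining relation $\inner{Y}{\Psi(\sigma)} = \inner{\Psi^*(Y)}{\sigma}$ gives $\lambda \Tr(\sigma) - \inner{\omega}{\ptx{\sigma}}$. Using $\Tr(\sigma) = \inner{\iyx}{\sigma}$ and---crucially---the self-adjointness of the partial transpose with respect to the Hilbert--Schmidt inner product, namely $\inner{\omega}{\ptx{\sigma}} = \inner{\ptx{\omega}}{\sigma}$, one obtains $\Psi^*(Y) = \lambda\, \iyx - \ptx{\omega}$. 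The dual objective is then $\inner{D}{Y} = \lambda$, the positivity $Y \geq 0$ decomposes as $\lambda \geq 0$ and $\omega \geq 0$, and the constraint $\Psi^*(Y) \geq C$ becomes $\lambda\, \iyx - \ptx{\omega} \geq \pm J(\Phi)$, matching the stated dual term by term.

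I expect the only genuine subtlety to be this adjoint computation, specifically the correct handling of the direct-sum output space together with the use of the fact that the partial transpose is its own Hilbert--Schmidt adjoint; once that identity is invoked, the remainder is bookkeeping. A minor point worth checking is that the triple $(\Psi, C, D)$ is a legitimate SDP in the sense recalled above: $\Psi$ is Hermiticity-preserving because both $\Tr$ and $-\ptx{\cdot}$ send Hermitian operators to Hermitian ones, while $C = \pm J(\Phi)$ and $D = 1 \oplus 0$ are Hermitian.
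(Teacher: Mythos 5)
Your proposal is correct and follows essentially the same route as the paper: the same triple $(\Psi, C, D)$ with $\Psi(\sigma) = \Tr(\sigma) \oplus (-\ptx{\sigma})$, $C = \pm J(\Phi)$, $D = 1 \oplus 0$, and the same adjoint computation $\Psi^*(\lambda \oplus \omega) = \lambda\,\iyx - \ptx{\omega}$ via the self-adjointness of the partial transpose under the Hilbert--Schmidt inner product. Your implicit restriction to block-diagonal dual variables is harmless (and is the ``simplification'' the paper alludes to), since $\Psi^*$ depends only on the diagonal blocks and positivity of $Y$ forces those blocks to be positive semidefinite.
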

%%%%%%%%%%%%%%%% Proof of SDP for upper bound on 1->Inf norm %%%%%%%%%%%%%%%%
\begin{proof}
    The primal problems are obviously just rewritings of~\eqref{eq:upperBound1ToInf}.
    To see that the corresponding dual problems are as stated above, note that the primal and dual problems can be specified by the triple $(\Psi,C,D)$, where $\Psi : \bhyx \rightarrow \mathcal{L}(\mathbb{C} \oplus (\mathcal{H}_B \otimes \mathcal{H}_A))$ is a Hermiticity-preserving mapping defined through the following equation for any $\sigma \in  \bhyx$:
    \begin{equation}
        \Psi(\sigma) = \begin{pmatrix}
            \Tr(\sigma) & 0 \\
            0 & - \ptx{\sigma}
        \end{pmatrix},
    \end{equation}
    while $C \in \bhyx$ and $D \in \mathcal{L}(\mathbb{C} \oplus (\mathcal{H}_B \otimes \mathcal{H}_A))$ are Hermitian matrices defined as
    \begin{equation}
        C = \pm J(\Phi) \quad \text{ and } \quad D = \begin{pmatrix}
            1 & 0 \\
            0 & 0
        \end{pmatrix}.
    \end{equation}
    The adjoint of the map $\Psi$ is %given by
    defined through the relation
    \begin{equation}
        \Psi^* \begin{pmatrix}
            \lambda & \cdot \\
            \cdot & \omega
        \end{pmatrix} = \lambda \iyx - \ptx{\omega}.
    \end{equation}
    After a simplification of the primal and dual problems associated with $(\Psi,C,D)$, one obtains the primal and dual problems stated in Proposition~\ref{prop:SDP-ub-infty}.
\end{proof}

We now turn to the unstabilized norm $\| \cdot \|_{1}^{\mathrm{D}}$. Define the set
\begin{equation} \nonumber
    \begin{aligned}
        \ppto \coloneqq \{ & W \in \bhyx : W \geq 0, \ptx{W} \geq 0, \\
        &\quad W \leq \iy \otimes \rho \text{ for some } \rho \in \dhx \}.
    \end{aligned}
\end{equation}
Let $\Phi : \bhx \rightarrow \bhy$ be a linear map and define the two quantities
\begin{equation} \label{eq:upperBound1To1}
    \begin{aligned}
        \mathcal{M}_{1}^{(+)}(\Phi) & \coloneqq 2 \max_{W \in \ppto} \{ \tr{J(\Phi) W} \}\\
        \mathcal{M}_{1}^{(-)}(\Phi) & \coloneqq 2 \max_{W \in \ppto} \{ -\tr{J(\Phi) W} \}.
    \end{aligned}
\end{equation}
We have the following Proposition.
%%%%%%%%%%%%%%%% Upper bound on 1->1 norm %%%%%%%%%%%%%%%%
\begin{prop} \label{prop:ineq1To1M1}
    Let $\Phi : \bhx \rightarrow \bhy$ be the difference between two quantum channels. Then
    \begin{equation}
        \| \Phi \|_{1}^{\mathrm{D}} \leq \min \{\mathcal{M}_{1}^{(-)}(\Phi), \mathcal{M}_{1}^{(+)}(\Phi) \}.
    \end{equation}
\end{prop}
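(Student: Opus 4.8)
The plan is to mirror the proof of Proposition~\ref{prop:ineq1ToInfMInf}, replacing the variational formula for the operator norm by one for the trace norm, and crucially exploiting the fact that $\Phi$, being the difference of two trace-preserving maps, is \emph{trace-annihilating}: $\tr{\Phi(\rho)} = 0$ for every $\rho \in \dhx$. By~\eqref{eq:Choi2}, one has $\| \Phi \|_{1}^{\mathrm{D}} = \max\{ \| \trx(J(\Phi)[\iy \otimes \rho^T]) \|_1 : \rho \in \dhx \}$, and since transposition is a bijection on $\dhx$ one may replace $\rho^T$ by $\rho$ throughout. So the task reduces to bounding $\| \Phi(\rho) \|_1$ by a linear functional of $J(\Phi)$ optimized over a convex relaxation.

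First I would record the trace-norm analogue of the variational identity used before. For any Hermitian $H$, writing $H = H_+ - H_-$ for its positive and negative parts (with $H_\pm \geq 0$ of orthogonal support), one has $\tr{H_+} = \max_{0 \leq P \leq \iyx} \tr{P H}$ and $\tr{H_-} = \max_{0 \leq P \leq \iyx} \{ -\tr{P H} \}$, the optima being attained by the spectral projector onto the corresponding eigenspace. When $H$ is moreover traceless, $\tr{H_+} = \tr{H_-}$, whence $\| H \|_1 = \tr{H_+} + \tr{H_-} = 2\tr{H_+} = 2\tr{H_-}$. Applying this to $H = \Phi(\rho)$ (Hermitian and traceless by the above) gives $\| \Phi(\rho) \|_1 = 2 \max_{0 \leq P \leq \iy} \tr{P \Phi(\rho)}$, and equally $\| \Phi(\rho) \|_1 = 2\max_{0 \leq P \leq \iy}\{-\tr{P\Phi(\rho)}\}$.

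Next I would rewrite the objective in terms of the Choi operator. Using~\eqref{eq:Choi2} and cyclicity of the trace, $\tr{P \Phi(\rho)} = \tr{J(\Phi) (P \otimes \rho^T)}$. Setting $W \coloneqq P \otimes \rho$ (after renaming $\rho^T \mapsto \rho$, which is harmless since the maximization runs over all of $\dhx$), I would verify that every such product lies in $\ppto$: indeed $W \geq 0$ as a tensor product of positive operators; $\ptx{W} = P \otimes \rho^T \geq 0$ since $\rho^T \geq 0$; and $W = P \otimes \rho \leq \iy \otimes \rho$ because $0 \leq P \leq \iy$ and $\rho \geq 0$. Hence the product set $\{ P \otimes \rho : 0 \leq P \leq \iy, \, \rho \in \dhx \}$ is contained in $\ppto$, so relaxing the maximization from this set to all of $\ppto$ can only increase the value. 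This yields $\| \Phi \|_{1}^{\mathrm{D}} = 2 \max_{P,\rho} \tr{J(\Phi)(P\otimes\rho)} \leq \mathcal{M}_{1}^{(+)}(\Phi)$, and the same argument with the opposite sign gives $\| \Phi \|_{1}^{\mathrm{D}} \leq \mathcal{M}_{1}^{(-)}(\Phi)$; taking the minimum finishes the proof.

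The conceptually important step—rather than a genuine obstacle—is the observation that the trace-annihilating property converts $\| \Phi(\rho) \|_1$ into twice a \emph{single} positive-part functional $\max_{0 \leq P \leq \iy}\tr{P\Phi(\rho)}$, which is precisely what produces the factor of $2$ in the definitions~\eqref{eq:upperBound1To1} and lets an optimization over $0 \leq P \leq \iy$ (rather than over $-\iy \leq M \leq \iy$) suffice. The only points requiring care are the verification of the three defining conditions of $\ppto$ for the product operators—in particular that the $\ptx{\cdot} \geq 0$ and $W \leq \iy \otimes \rho$ constraints are automatically met by $P \otimes \rho$—and the bookkeeping with the transpose, which disappears upon optimizing over all density matrices.
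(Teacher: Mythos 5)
Your proof is correct and takes essentially the same route as the paper's: the variational identity $\|H\|_1 = 2\max\{\tr{ZH} : 0 \leq Z \leq \mathds{1}\}$ (and its minus-sign twin) for traceless Hermitian $H$ — with tracelessness supplied by $\Phi$ being a difference of channels — followed by the Choi rewriting $\tr{Z\Phi(\rho)} = \tr{J(\Phi)(Z \otimes \rho^T)}$ and the relaxation from the product set to $\ppto$. Your explicit verification of the three defining conditions of $\ppto$ for $P \otimes \rho$ and the transpose bookkeeping merely spell out details the paper leaves implicit in its claim $\mathcal{Q}_2 \subseteq \ppto$.
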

%%%%%%%%%%%%%%%% Proof of upper bound on 1->1 norm %%%%%%%%%%%%%%%%
\begin{proof}
    For any traceless, Hermitian operator $H$,
    \begin{equation} \label{eq:traceNormHermTraceless}
        ||H||_1 = 2 \, \max_{Z \geq 0} \{ \tr{Z H} : \, Z \leq \mathds{1} \}.
    \end{equation}
    Since $\Phi$ is the difference of two quantum channels, we can write
    \begin{equation} \label{eq:opt1to1D}
        \begin{aligned}
            \| \Phi \|_{1}^{\mathrm{D}} & = 2 \max \{ \tr{\Phi(\rho) Z} : \rho \in \dhx, 0 \leq Z \leq \iy \} \\
            & = 2 \max_{X \in \mathcal{Q}_2} \{ \tr{J(\Phi) X} \},
        \end{aligned}
    \end{equation}
    where we used~\eqref{eq:Choi2} and defined the set
    \begin{equation} \label{eq:Q2}
        \mathcal{Q}_2 \coloneqq \{ Z \otimes \rho : \rho \in \dhx, Z \in \bhy, Z \geq 0, Z \leq \iy \}.
    \end{equation}
    The fact that $\mathcal{Q}_2 \subseteq \ppto$ implies that $\| \Phi \|_{1}^{\mathrm{D}} \leq \mathcal{M}_{1}^{(+)}(\Phi)$.

    {Similarly, for any traceless, Hermitian operator $H$, one can replace~\eqref{eq:traceNormHermTraceless} with
    \begin{equation} \label{eq:traceNormHermTraceless2}
        ||H||_1 = 2 \, \max_{Z \geq 0} \{ - \tr{Z H} : \, Z \leq \mathds{1} \},
    \end{equation}
    and one ends up with $\| \Phi \|_{1}^{\mathrm{D}} \leq \mathcal{M}_{1}^{(-)}(\Phi)$ with the same reasoning as above. Combining the two results gives the Proposition.
    }
\end{proof}
\noindent The fact that the set $\ppto$ is convex allows us to construct the following SDPs for the two quantities $\mathcal{M}_{1}^{(\pm)}(\Phi)$.
%%%%%%%%%%%%%%%% SDP for upper bound on 1->1 norm %%%%%%%%%%%%%%%%
\begin{prop} \label{prop:primalDualUpperBound1to1}
    Let $\Phi : \bhx \rightarrow \bhy$ be a Hermiticity-preserving linear map. The optimization problems $\mathcal{M}_{1}^{(\pm)}(\Phi)$ (given in~\eqref{eq:upperBound1To1}) can be expressed as SDPs associated with the following primal and dual problems:
    \begin{center}
    \begin{minipage}{.49\columnwidth}
    \begin{center}
        \underline{Primal problem}
        \begin{equation*}
            \begin{aligned}
                \text{max} \quad & 2 \inner{\pm J(\Phi)}{W}, \\
                \text{s. t.} \quad & W \geq 0, \\
                & W \leq \iy \otimes \rho, \\
                & \ptx{W} \geq 0, \\
                & W \in \bhyx, \\
                & \rho \in \dhx.
            \end{aligned}
        \end{equation*}
    \end{center}
    \end{minipage}
    \begin{minipage}{.49\columnwidth}
    \begin{center}
        \underline{Dual problem}
        \begin{equation*}
            \begin{aligned}
                \text{min} \quad & ||\try(Y_2)||_{\infty}, \\
                \text{s. t.} \quad & Y_1, Y_2 \geq 0, \\
                & Y_2 - \ptx{Y_1} \geq \pm 2 J(\Phi), \\
                & Y_1, Y_2 \in \bhyx. \\
                & ~\\
                & ~
            \end{aligned}
        \end{equation*}
    \end{center}
    \end{minipage}
    \end{center}
\end{prop}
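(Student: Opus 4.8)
The plan is to recognize both problems $\mathcal{M}_1^{(\pm)}(\Phi)$ as semidefinite programs in the standard primal/dual form of Watrous and to read off the dual by computing an adjoint map, exactly paralleling the proof of Proposition~\ref{prop:SDP-ub-infty} (and Proposition~9 of~\cite{Sutter2017}). First I would note that the stated primal problems are nothing but the definitions in~\eqref{eq:upperBound1To1} with the set $\ppto$ written out explicitly: the optimization variable is the pair $(W,\rho)$, the membership $W \in \ppto$ unfolds into $W \geq 0$, $\ptx{W} \geq 0$ and $W \leq \iy \otimes \rho$ for a density matrix $\rho$, and the objective is $2\inner{\pm J(\Phi)}{W}$. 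The only mild point to check here is that imposing $\Tr(\rho) \leq 1$ in place of $\Tr(\rho) = 1$ leaves the optimal value unchanged: the objective does not involve $\rho$, and replacing any $\rho$ of trace below one by $\rho + (1-\Tr(\rho))\proj{1}$ only enlarges $\iy \otimes \rho$ in the positive-semidefinite order, hence relaxes the constraint $W \leq \iy \otimes \rho$. This lets the normalization fit the inequality form of a standard SDP.

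To package this as a triple $(\Psi,C,D)$, I would take the primal variable to be the block-diagonal operator $X = W \oplus \rho$ on $(\mathcal{H}_B \otimes \mathcal{H}_A) \oplus \mathcal{H}_A$, so that $X \geq 0$ encodes $W \geq 0$ and $\rho \geq 0$ simultaneously, set $C = (\pm 2 J(\Phi)) \oplus 0$ so that $\inner{C}{X} = 2\inner{\pm J(\Phi)}{W}$, and define the Hermiticity-preserving map
\[
\Psi(W \oplus \rho) = \begin{pmatrix} -\ptx{W} & 0 & 0 \\ 0 & W - \iy \otimes \rho & 0 \\ 0 & 0 & \Tr(\rho) \end{pmatrix}, \qquad D = \begin{pmatrix} 0 & 0 & 0 \\ 0 & 0 & 0 \\ 0 & 0 & 1 \end{pmatrix},
\]
so that $\Psi(X) \leq D$ reproduces precisely $\ptx{W} \geq 0$, $W \leq \iy \otimes \rho$ and $\Tr(\rho) \leq 1$ on three orthogonal blocks.

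The key computation is then the adjoint. Writing a dual variable as $Y = Y_1 \oplus Y_2 \oplus \mu$ with $Y_1, Y_2 \geq 0$ in $\bhyx$ and $\mu \geq 0$ a scalar, I would expand $\inner{\Psi(X)}{Y}$ using the self-adjointness of the partial transpose, $\inner{\ptx{W}}{Y_1} = \inner{W}{\ptx{Y_1}}$, together with the identity $\inner{\iy \otimes \rho}{Y_2} = \inner{\rho}{\try(Y_2)}$, and collect the terms in $W$ and in $\rho$ separately. This yields
\[
\Psi^*(Y) = \big(Y_2 - \ptx{Y_1}\big) \oplus \big(\mu\,\ix - \try(Y_2)\big),
\]
so the dual constraint $\Psi^*(Y) \geq C$ becomes the pair $Y_2 - \ptx{Y_1} \geq \pm 2 J(\Phi)$ and $\try(Y_2) \leq \mu\,\ix$, with dual objective $\inner{D}{Y} = \mu$.

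Finally I would eliminate the scalar $\mu$. Since $Y_2 \geq 0$ forces $\try(Y_2) \geq 0$, the constraint $\try(Y_2) \leq \mu\,\ix$ is equivalent to $\mu \geq \lambda_{\max}(\try(Y_2)) = \|\try(Y_2)\|_\infty$, and minimizing $\mu$ saturates this, replacing the objective $\mu$ by $\|\try(Y_2)\|_\infty$. This collapses the dual to exactly the form stated in the Proposition, and strong duality follows from a Slater-point argument as for the companion SDPs. I expect the main obstacle to be bookkeeping rather than conceptual: one must arrange $\Psi$ so that all three primal constraints --- partial-transpose positivity, the domination $W \leq \iy \otimes \rho$, and the normalization of $\rho$ --- sit on orthogonal blocks, and then recognize that the scalar multiplier of the trace constraint is precisely what converts the innocuous linear dual objective $\mu$ into the operator-norm objective $\|\try(Y_2)\|_\infty$.
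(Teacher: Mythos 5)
Your proposal is correct and follows essentially the same route as the paper's proof: the identical triple $(\Psi, C, D)$ with the same block structure, the same adjoint computation yielding $\Psi^*(Y) = (Y_2 - \ptx{Y_1}) \oplus (\lambda\,\ix - \try(Y_2))$, and the same elimination of the scalar via $\lambda\,\ix \geq \try(Y_2) \Leftrightarrow \lambda \geq \|\try(Y_2)\|_\infty$. Your explicit justification that relaxing $\Tr(\rho)=1$ to $\Tr(\rho)\leq 1$ leaves the optimum unchanged (by padding $\rho$ with $(1-\Tr(\rho))\proj{1}$) is a detail the paper passes over silently, and is a welcome addition.
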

%%%%%%%%%%%%%%%% Proof of SDP for upper bound on 1->1 norm %%%%%%%%%%%%%%%%
\begin{proof}
    The primal problems are obviously just rewritings of~\eqref{eq:upperBound1To1}. To see that the corresponding dual problems are as stated above, note that the primal and dual problems can be specified by the triple $(\Psi,C,D)$, where $\Psi : \mathcal{L}( (\mathcal{H}_B \otimes \mathcal{H}_A) \oplus \mathcal{H}_A) \rightarrow \mathcal{L}((\mathcal{H}_B \otimes \mathcal{H}_A) \oplus (\mathcal{H}_B \otimes \mathcal{H}_A) \oplus \mathbb{R})$ is a Hermiticity-preserving mapping defined as
    \begin{equation}
        \Psi \begin{pmatrix}
            W & 0 \\
            0 & \rho
        \end{pmatrix} = \begin{pmatrix}
            - \ptx{W} & 0 & 0 \\
            0 & W - \iy \otimes \rho & 0 \\
            0 & 0 & \tr{\rho}
        \end{pmatrix},
    \end{equation}
    while $C \in \mathcal{L}( (\mathcal{H}_B \otimes \mathcal{H}_A) \oplus \mathcal{H}_A)$ and $D \in \mathcal{L}((\mathcal{H}_B \otimes \mathcal{H}_A) \oplus (\mathcal{H}_B \otimes \mathcal{H}_A) \oplus \mathbb{R})$ are Hermitian matrices defined as
    \begin{equation}
        C = 2 \begin{pmatrix}
            \pm J(\Phi) & 0 \\
            0 & 0
        \end{pmatrix} \quad \text{ and } \quad D = \begin{pmatrix}
            0 & 0 & 0 \\
            0 & 0 & 0 \\
            0 & 0 & 1
        \end{pmatrix}.
    \end{equation}
    The adjoint of the map $\Psi$ is given by
    \begin{equation}
        \Psi^* \begin{pmatrix}
            Y_1 & \cdot & \cdot \\
            \cdot & Y_2 & \cdot \\
            \cdot & \cdot & \lambda
        \end{pmatrix} = \begin{pmatrix}
            Y_2 - \ptx{Y_1} & 0 \\
            0 & \lambda \ix - \try(Y_2)
        \end{pmatrix}.
    \end{equation}
    From this, the dual problem can be written as
    \begin{center}
        \underline{Dual problem}
        \begin{equation*}
            \begin{aligned}
                \text{min} \quad & \lambda, \\
                \text{s. t.} \quad & \lambda \geq 0, \\
                & Y_1, Y_2 \geq 0, \\
                & \lambda \ix - \try(Y_2) \geq 0, \\
                & Y_2 - \ptx{Y_1} \geq \pm 2 J(\Phi), \\
                & Y_1, Y_2 \in \bhyx.
            \end{aligned}
        \end{equation*}
    \end{center}
    Now, since $\lambda \geq 0$ and $Y_2 \geq 0$,
    \begin{equation}
        \lambda \ix \geq \try(Y_2) \quad \Leftrightarrow \quad \lambda \geq ||\try(Y_2)||_{\infty}.
    \end{equation}
   We can therefore simplify the above dual problem to the one given in the Proposition. 
\end{proof}

%%%%%%%%%%%%%%%% Comparison between upper bounds on 1->1 and 1->Inf norms and diamond norm %%%%%%%%%%%%%%%%
\begin{lem} \label{lem:M1leqDia}
    Let $\Phi : \bhx \rightarrow \bhy$ be the difference between two quantum channels. Then
    \begin{equation} \label{eq:lem:M1leqDia}
        \begin{aligned}
            2 \max \{ \mathcal{M}_{\infty}^{(-)}(\Phi), \mathcal{M}_{\infty}^{(+)}(\Phi) \} & \leq \max \{ \mathcal{M}_{1}^{(-)}(\Phi), \mathcal{M}_{1}^{(+)}(\Phi) \} \\
            & \leq ||\Phi||_{\dia}.
        \end{aligned}
    \end{equation}
\end{lem}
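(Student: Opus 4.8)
The plan is to prove the two inequalities in~\eqref{eq:lem:M1leqDia} separately, linking them through the middle quantity $\max\{\mathcal{M}_{1}^{(-)}(\Phi),\mathcal{M}_{1}^{(+)}(\Phi)\}$. The left inequality should reduce to a comparison of the feasible sets $\pptd$ and $\ppto$, while the right one should follow from weak SDP duality against the diamond-norm SDP.

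For the first inequality I would prove the set inclusion $\pptd\subseteq\ppto$. The two definitions share the constraints $\sigma\geq 0$ and $\ptx{\sigma}\geq 0$, so the only thing to verify is that any $\sigma\in\pptd$ admits a density matrix $\rho\in\dhx$ with $\sigma\leq\iy\otimes\rho$. The key point is that a PPT operator satisfies the reduction criterion: the reduction map on system $B$, namely $M\mapsto\tr{M}\iy-M$, is decomposable, so positivity of both $\sigma$ and $\ptx{\sigma}$ forces $\iy\otimes\try{\sigma}-\sigma\geq 0$ (the inequality is homogeneous, hence valid for sub-normalised $\sigma$ as well). Since $\try{\sigma}$ has trace $\tr{\sigma}\leq 1$, I would set $\rho\coloneqq\try{\sigma}+(1-\tr{\sigma})\ix/d_A\in\dhx$, for which $\iy\otimes\rho\geq\iy\otimes\try{\sigma}\geq\sigma$, establishing $\sigma\in\ppto$. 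The inclusion then gives $\mathcal{M}_{\infty}^{(+)}(\Phi)=\max_{\sigma\in\pptd}\tr{J(\Phi)\sigma}\leq\max_{W\in\ppto}\tr{J(\Phi)W}=\tfrac12\mathcal{M}_{1}^{(+)}(\Phi)$, and likewise $\mathcal{M}_{\infty}^{(-)}(\Phi)\leq\tfrac12\mathcal{M}_{1}^{(-)}(\Phi)$; taking the maximum over the two signs yields $2\max\{\mathcal{M}_{\infty}^{(-)},\mathcal{M}_{\infty}^{(+)}\}\leq\max\{\mathcal{M}_{1}^{(-)},\mathcal{M}_{1}^{(+)}\}$.

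For the second inequality I would argue by weak duality, using the dual SDP for $\mathcal{M}_{1}^{(\pm)}(\Phi)$ from Proposition~\ref{prop:primalDualUpperBound1to1} together with Watrous' SDP for the diamond norm already invoked above, namely $\|\Phi\|_{\dia}=2\inf\{\|\try(Z)\|_{\infty}:Z\geq J(\Phi),\,Z\geq 0\}$. Given any $Z$ feasible for the latter, the choice $Y_1=0$, $Y_2=2Z$ is feasible for the dual of $\mathcal{M}_{1}^{(+)}(\Phi)$, since the constraint $Y_2-\ptx{Y_1}\geq 2J(\Phi)$ reduces to $2Z\geq 2J(\Phi)$. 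Weak duality then gives $\mathcal{M}_{1}^{(+)}(\Phi)\leq\|\try(Y_2)\|_{\infty}=2\|\try(Z)\|_{\infty}$, and taking the infimum over $Z$ yields $\mathcal{M}_{1}^{(+)}(\Phi)\leq\|\Phi\|_{\dia}$. For $\mathcal{M}_{1}^{(-)}(\Phi)$ the same argument applied to $-J(\Phi)=J(-\Phi)$, combined with $\|\Phi\|_{\dia}=\|-\Phi\|_{\dia}$, gives $\mathcal{M}_{1}^{(-)}(\Phi)\leq\|\Phi\|_{\dia}$, whence $\max\{\mathcal{M}_{1}^{(-)},\mathcal{M}_{1}^{(+)}\}\leq\|\Phi\|_{\dia}$.

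I expect the main obstacle to be the inclusion $\pptd\subseteq\ppto$. Positivity and sub-normalisation of $\sigma$ alone do not suffice to find a state $\rho$ with $\sigma\leq\iy\otimes\rho$: when $d_A=d_B$ the normalised maximally entangled state $\proj{\Omega}$ is a counterexample, since $\bra{\Omega}(\iy\otimes\rho)\ket{\Omega}=1/d_A$ for every state $\rho$ while $\bra{\Omega}\proj{\Omega}\ket{\Omega}=1$, so $\iy\otimes\rho-\proj{\Omega}$ is never positive. This object fails the PPT condition, which is exactly what rescues the argument, and the crux is recognising that the PPT hypothesis supplies precisely the reduction-criterion inequality $\iy\otimes\try{\sigma}\geq\sigma$. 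Once this is in hand, the remaining steps—the normalisation correction and the two weak-duality comparisons—are routine.
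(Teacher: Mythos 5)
Your proof is correct, and it splits into one half that matches the paper and one half that takes a genuinely different route. For the left inequality, both you and the paper rest on the same key fact, namely that PPT operators satisfy the reduction inequality $\iy \otimes \try(\sigma) \geq \sigma$, whence $\pptd \subseteq \ppto$; the paper gets this by citing the reduction criterion for non-distillable states together with the non-distillability of PPT states, while you derive it directly from decomposability of the reduction map $M \mapsto \tr{M}\,\iy - M$. Your version is in fact slightly more careful: elements of $\pptd$ are only subnormalized ($\tr{\sigma} \leq 1$), and your padding $\rho = \try(\sigma) + (1-\tr{\sigma})\,\ix/d_A$ makes explicit a normalization step that the paper glosses over when invoking a criterion stated for density matrices. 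For the right inequality your mechanism differs from the paper's: the paper stays entirely on the primal side, quoting Watrous' maximization characterization $\|\Phi\|_{\dia} = 2\max_{X \in \mathcal{Q}_3}\{\pm\tr{J(\Phi)X}\}$, where $\mathcal{Q}_3$ is obtained from $\ppto$ by dropping the PPT constraint, so that the inclusion $\ppto \subseteq \mathcal{Q}_3$ finishes the proof in one line; you instead pass to the dual SDP of Proposition~\ref{prop:primalDualUpperBound1to1} and show that any $Z$ feasible for Watrous' minimization form of the diamond norm yields the feasible dual point $(Y_1,Y_2)=(0,2Z)$, concluding by weak duality. Your route is somewhat heavier, since it needs both the dual derived in Proposition~\ref{prop:primalDualUpperBound1to1} and the dual (minimization) form of Watrous' SDP, whereas the paper needs only the primal form; on the other hand, your handling of $\mathcal{M}_{1}^{(-)}(\Phi)$ via $-J(\Phi)=J(-\Phi)$ and $\|-\Phi\|_{\dia}=\|\Phi\|_{\dia}$ is cleaner than the paper's unproved remark that the $(-)$ case follows by ``following Watrous' proof''. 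Both arguments are sound, and the factor-of-two bookkeeping in your chain $2\,\mathcal{M}_{\infty}^{(\pm)}(\Phi) \leq \mathcal{M}_{1}^{(\pm)}(\Phi) \leq \|\Phi\|_{\dia}$ agrees exactly with the paper's.
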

\begin{proof}
    We begin by proving the second inequality in~\eqref{eq:lem:M1leqDia}. Watrous showed that if $\Phi$ is the difference between two quantum channels, then (see Section~4 in~\cite{Watrous2009})
    \begin{equation}
        \| \Phi \|_{\dia} = 2 \max_{X \in \mathcal{Q}_3} \{ \tr{J(\Phi) X} \},
    \end{equation}
    where
    \begin{equation}
        \begin{aligned}
            \mathcal{Q}_3 \coloneqq \{ W : & W \in \bhyx, W \geq 0, \\
            & W \leq \iy \otimes \rho \text{ for some } \rho \in \dhx \}.
        \end{aligned}
    \end{equation}
    The fact that $\ppto \subseteq \mathcal{Q}_3$ then implies $\mathcal{M}_{1}^{(+)}(\Phi) \leq ||\Phi||_{\dia}$. Following Watrous' proof, it can also be shown that
    \begin{equation}
        \| \Phi \|_{\dia} = 2 \max_{X \in \mathcal{Q}_3} \{ - \tr{J(\Phi) X} \},
    \end{equation}
    so that we also have $\mathcal{M}_{1}^{(-)}(\Phi) \leq ||\Phi||_{\dia}$. Combining the two previous inequalities gives the second inequality in~\eqref{eq:lem:M1leqDia}.
    
    To prove the first inequality in~\eqref{eq:lem:M1leqDia}, first note that by the so-called {\em{reduction criterion}}~\cite{Horodecki1999}, if a quantum state $\sigma_{BA} \in \bhyx$ is not distillable, it satisfies $\sigma_{BA} \leq \iy \otimes \sigma_A$, with $\sigma_A = \try(\sigma_{BA})$. Furthermore, it is known that PPT states are not distillable. By combining these two facts, we can infer that $\pptd \subseteq \ppto$. We therefore have that $2 \mathcal{M}_{\infty}^{(-)}(\Phi) \leq \mathcal{M}_{1}^{(-)}(\Phi)$ and $2 \mathcal{M}_{\infty}^{(+)}(\Phi) \leq \mathcal{M}_{1}^{(+)}(\Phi)$. Combining these two inequalities gives the desired result.
\end{proof}

{
\begin{rem} \label{rem:4}
From the SDP given in Section~3 of~\cite{Watrous2013} for the diamond norm, one can compute $\| \Phi^{\mathrm{c}} - \Lambda \circ \Phi \|_{\dia}$ for any channel $\Lambda$. The SDPs provided in Propositions~\ref{prop:SDP-ub-infty} and~\ref{prop:primalDualUpperBound1to1} also allow one to compute the four quantities $\mathcal{M}_{1}^{(-)}(\Phi)$, $\mathcal{M}_{1}^{(+)}(\Phi)$, $\mathcal{M}_{\infty}^{(-)}(\Phi)$ and $\mathcal{M}_{\infty}^{(+)}(\Phi)$. According to Propositions~\ref{prop:ineq1ToInfMInf} and~\ref{prop:ineq1To1M1}, one can then take $\eps_1 = \max \{ \mathcal{M}_{1}^{(-)}(\Phi), \mathcal{M}_{1}^{(+)}(\Phi) \}$, $\eps_{\dia } = \| \Phi^{\mathrm{c}} - \Lambda \circ \Phi \|_{\dia}$ and $\nu = \max \{ \mathcal{M}_{\infty}^{(-)}(\Phi), \mathcal{M}_{\infty}^{(+)}(\Phi) \}$ in {Propositions~\ref{prop:upperQuantCapa2} and~\ref{prop:upperPrivCapa2}}, since we trivially have that $\max \{ \mathcal{M}_{1}^{(-)}(\Phi), \mathcal{M}_{1}^{(+)}(\Phi) \} \geq \min \{ \mathcal{M}_{1}^{(-)}(\Phi), \mathcal{M}_{1}^{(+)}(\Phi) \}$.
Now, Lemma~\ref{lem:M1leqDia} implies that $\beta = 2\nu/\eps_1 \leq 1$ and $\eps_1 \leq \eps_\dia$. This in turn guarantees that the bounds given in~\eqref{eq:prop:upperQuantCapa2} {and~\eqref{eq:prop:upperPrivCapa2}} (for these choices of $\eps_1$, $\eps_{\dia }$ and $\nu$) are at least as good as the bound given in points~$(i)$ {and~$(iii)$} of Theorem~7 in~\cite{Sutter2017}.
\end{rem}
}

%%%%%%%%%%%%%%%%%%% DEPOLARIZING %%%%%%%%%%%%%%%%%%%

\subsection{An application: {upper bound on the quantum capacity of the qubit depolarizing channel}\label{sec:depol}}

The qubit depolarizing channel $\mathcal{E}_p : A \rightarrow B$ with $\mathcal{H}_A\simeq \mathcal{H}_B \simeq \mathbb{C}^{2}$ and $p \in [0,1]$ is defined as
\begin{equation}
    \mathcal{E}_p(\rho) \coloneqq (1-p) \rho + \frac{p}{3} (X \rho X + Z \rho Z + Y \rho Y),
\end{equation}
where $X \coloneqq \begin{pmatrix}
        0 & 1 \\
        1 & 0
    \end{pmatrix}$,
    $Y \coloneqq \begin{pmatrix}
        0 & -i \\
        i & 0
    \end{pmatrix}$, and
    $Z \coloneqq \begin{pmatrix}
        1 & 0 \\
        0 & -1
    \end{pmatrix}$
are the Pauli matrices.
Its channel coherent information can be computed to be~\cite{Bennett1996,DiVincenzo1998},
\begin{equation} \label{eq:cohInfoDepol}
    Q^{(1)}(\mathcal{E}_p) = 1 + (1-p) \log (1-p) + p \log (p/3).
\end{equation}
The following upper bound for its quantum capacity was obtained in~\cite{Smith2008} for $p \in [0,1/4]$:
\begin{equation}
    Q(\mathcal{E}_p) \leq \mathrm{conv} \left\{ 1 - h(p), c(p), 1 - 4p \right\},
\end{equation}
where $\theta(p) \coloneqq h ([1 + \gamma(p)]/2) - h (\gamma(p)/2)$, $\gamma(p) \coloneqq 4(\sqrt{1\!-\!p}\!-\!1\!+\!p)$, and $\mathrm{conv} \{ f_1(p), \cdots, f_n(p)\}$ denotes the maximal convex function that is less than or equal to all $f_i(p)$, for $i=1,\cdots,n$.
Exploiting the $\eps$--degradability of the depolarizing channel, the authors of~\cite{Sutter2017} improved the above upper bound to the following (for $p \in [0,1/4]$):
\begin{equation}
    Q(\mathcal{E}_p) \leq \mathrm{conv} \left\{ S(\Phi,\Lambda), 1 - h(p), \theta(p), 1 - 4p \right\},
\end{equation}
where $\Lambda$ denotes any degrading channel and $S(\Phi,\Lambda)$ is defined through~\eqref{eq:maxCondEnt}.
The above bound was then simplified with the help of Proposition~5 of~\cite{Sutter2017} to
\begin{align}
        Q(\mathcal{E}_p) \leq \, \mathrm{conv} \bigl\{ & Q^{(1)}(\mathcal{E}_p) + \frac{\eps^{(p)}_{\dia}}{2} \log (d_E-1) + h (\eps^{(p)}_{\dia}/2) \nonumber \\
        & \quad \quad \quad \quad \quad + \eps^{(p)}_{\dia} \log(d_E) + g(\eps^{(p)}_{\dia}/2), \nonumber \\
        & \quad \quad \quad \quad \quad 1 - h(p), \theta(p), 1 - 4p \bigr\}, \label{eq:boundQSutter}
\end{align}
with $d_E = \mathrm{dim} \mathcal{H}_E$ and where
\begin{equation} \label{eq:opteps}
    \eps^{(p)}_{\dia} \coloneqq \inf_{\Lambda} \left\{ || \mathcal{E}_p^{\mathrm{c}} - \Lambda \circ \mathcal{E}_p ||_{\dia} : \Lambda : B \to E\, \text{is CPTP} \right\}.
\end{equation}

It turns out that the bound of~\eqref{eq:boundQSutter} can be further improved by exploiting the $(\eps,\nu)$-degradability of the depolarizing channel. Define $\Lambda_p$ to be the degrading channel achieving the infimum over $\Lambda$ in~\eqref{eq:opteps}, i.e., the degrading channel that satisfies
\begin{equation}\label{eq:epdia}
    \eps^{(p)}_{\dia} = || \mathcal{E}_p^{\mathrm{c}} - \Lambda_p \circ \mathcal{E}_p ||_{\dia}.
\end{equation}
Next, as explained in Remark~\ref{rem:4}, compute the upper bounds on $|| \mathcal{E}_p^{\mathrm{c}} - \Lambda_p \circ \mathcal{E}_p ||_1^{\mathrm{D}}$ and $|| \mathcal{E}_p^{\mathrm{c}} - \Lambda_p \circ \mathcal{E}_p ||_{\infty}^{\mathrm{D}}$, which are respectively given by
\begin{equation}\label{eq:ep1nu}
    \begin{aligned}
        \eps^{(p)}_1 & \coloneqq \max \{ \mathcal{M}_{1}^{(-)}(\Phi_p), \mathcal{M}_{1}^{(+)}(\Phi_p) \}, \\
        \nu^{(p)} & \coloneqq \max \{ \mathcal{M}_{\infty}^{(-)}(\Phi_p), \mathcal{M}_{\infty}^{(+)}(\Phi_p) \},
    \end{aligned}
\end{equation}
where $\Phi_p \coloneqq \mathcal{E}_p^{\mathrm{c}} - \Lambda_p \circ \mathcal{E}_p$.
By combining Proposition~\ref{prop:upperQuantCapa2} with the other aforementioned bounds on the quantum capacity of the qubit depolarizing channel, we obtain the new upper bound
\begin{align}
        Q(\mathcal{E}_p) \leq \, \mathrm{conv} \bigl\{ & Q^{(1)}(\mathcal{E}_p) \! + \!  \frac{\eps^{(p)}_{1}}{2}\log ( \beta^{(p)} d_E \! - \! 1 ) \! + \! h (\eps^{(p)}_{1}/2) \nonumber \\
        & \quad \quad \quad \quad \quad + \eps^{(p)}_{\dia} \log(d_E) + g(\eps^{(p)}_{\dia}/2), \nonumber \\
        & \quad \quad \quad \quad \quad 1 - h(p), \theta(p), 1 - 4p \bigr\}, \label{eq:boundQnu}
\end{align}
where $\beta^{(p)} \coloneqq 2\nu^{p)}/\eps^{(p)}_1$. Note that, by the hypothesis of Proposition~\ref{prop:upperQuantCapa2}, the bound in~\eqref{eq:boundQnu} is only valid if $\eps^{(p)}_1 \leq 2 \nu^{(p)} d_E/(\nu^{(p)} d_E + 3)$, which can be checked after computing $\eps^{(p)}_1$ and $\nu^{(p)}$. This inequality is satisfied at least for the low-noise range $p \in [0, 0.025]$, for which we compute the upper bound on the quantum capacity
%(see Fig.~1b of the Appendix~\cite{appIEEE}).
{(see Fig.~\ref{fig:norms_const} in Appendix~\ref{sec:appB}).}
Note that, for this parameter range, \eqref{eq:boundQSutter} was known to be the tightest bound on the quantum capacity ({\em{cf.}}~\cite{Leditzky2018,Fanizza2020}). Figs.~\ref{fig:depol1} and~\ref{fig:depol2} compare our new upper bound given in~\eqref{eq:boundQnu} with the previous upper bound~\eqref{eq:boundQSutter}.

\section{Open Problems\label{sec:discussions}}

We conclude the paper by stating a conjecture and some open problems.
The continuity bound given in Theorem~\ref{theo:ContinuityOpNorm} is not tight when $1/\beta$ is not an integer. We would like to conjecture the following uniform continuity bound holds and is tight.
\begin{conj} \label{conj:ContinuityOpNorm2:ContinuityOpNorm2}
Consider a Hilbert space ${\mathcal{H}}$ of dimension $d=|{\mathcal{H}}|<\infty$. Let $\rho$ and $\sigma$ be two states on $\mathcal{H}$ and let $\eps \coloneqq \mathrm{T}(\rho,\sigma)$ and $\nu \coloneqq ||\rho-\sigma||_{\infty}$.
Then the following inequality holds and is tight:
\begin{equation} \label{eq:conj:ContinuityOpNorm2}
	\left| S(\rho) - S(\sigma) \right| \leq f_{d}(\eps,\nu),
\end{equation}
where the function $f_d$ is defined through~\eqref{eq:RHSBound}.
\end{conj}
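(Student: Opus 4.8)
The plan is to mirror the proof of Theorem~\ref{theo:ContinuityOpNorm}, but to replace the use of Sason's bound by the sharp classical bound of Theorem~\ref{theo:main}. Assuming without loss of generality that $S(\rho)\le S(\sigma)$, I would introduce the pinching $\mathcal{N}_d$ in the eigenbasis of $\rho$, so that $\mathcal{N}_d(\rho)=\rho$ while $\mathcal{N}_d(\sigma)\prec\sigma$ by unitality. Writing $r_X$ and $\tilde s_X$ for the eigenvalue distributions of $\rho$ and $\mathcal{N}_d(\sigma)$, one gets $S(\rho)\le S(\sigma)\le S(\mathcal{N}_d(\sigma))=\hr{X}{\tilde s}$, while data processing for the trace distance and for the operator norm under the unital channel $\mathcal{N}_d$ gives $\eps_2\coloneqq\mathrm{TV}(r_X,\tilde s_X)\le\eps$ and $\nu_2\coloneqq\mathrm{LO}(r_X,\tilde s_X)\le\nu$. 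Applying Theorem~\ref{theo:main} to the pair $(r_X,\tilde s_X)$ on the $d$-point alphabet then yields
\begin{equation} \nonumber
    \left| S(\rho)-S(\sigma)\right| \le f_d(\eps_2,\nu_2).
\end{equation}
The conjecture would follow if one could show $f_d(\eps_2,\nu_2)\le f_d(\eps,\nu)$, i.e.\ that $f_d$ is non-decreasing in each of its two arguments on the admissible region carved out by Lemma~\ref{lem:constDimQ} (namely $\nu\le\eps$ and $d\ge 2\lceil\eps/\nu\rceil$).

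Monotonicity in $\nu$ at fixed $\eps$ is the benign direction, and I expect it to go through cleanly. On any open interval of $\nu$ on which $\eps/\nu$ avoids the integers, the quantities $d_+$ and $d_-$ of~\eqref{eq:rem} are constant and $\mu=\eps-d_+\nu$ is smooth, and a direct computation from~\eqref{eq:RHSBound} gives
\begin{equation} \nonumber
    \frac{\partial f_d}{\partial\nu} = d_+\log\!\left(\nu/\mu\right)\ge 0,
\end{equation}
since $0\le\mu<\nu$ and $d_+\ge0$. At the values of $\nu$ where $\eps/\nu$ is an integer one checks that $\mu\log\mu\to0$ as $\mu\to0^+$, so the contributions of $d_+\nu\log\nu$ and $\mu\log\mu$ are continuous and the only discontinuity comes from $d_-=d-\lceil\eps/\nu\rceil$; that jump is upward as $\nu$ increases. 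Hence $f_d$ is non-decreasing in $\nu$ throughout, and combined with the reduction this already gives $f_d(\eps_2,\nu_2)\le f_d(\eps_2,\nu)$.

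The hard part is monotonicity in $\eps$ at fixed $\nu$, which is where I expect the argument to break, and which I believe is precisely why Theorem~\ref{theo:ContinuityOpNorm} carries the extra hypothesis $\eps\le\nu d/(\nu d+3)$. On a smooth piece one finds
\begin{equation} \nonumber
    \frac{\partial f_d}{\partial\eps} = \log\!\left(\frac{(1-\eps)\,\mu\, d_-}{\eps^2}\right),
\end{equation}
whose sign is indefinite: as $\eps/\nu$ approaches an integer from above, $\mu=\eps-d_+\nu\to0^+$ and this derivative diverges to $-\infty$, so $f_d$ is \emph{not} monotone in $\eps$, even under the dimension constraint of Lemma~\ref{lem:constDimQ}. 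Consequently the naive ``push $\eps_2$ up to $\eps$'' step is unavailable, and this is the central obstacle. To overcome it I would abandon the pointwise-monotonicity route and instead try either to lift the majorization argument of Theorem~\ref{theo:main} (Steps~A--E) directly to the commuting pair produced by the pinching, keeping track of the \emph{joint} constraint that data processing imposes on $(\eps_2,\nu_2)$ rather than treating the two distances independently, or to argue that the pinched pair $(r_X,\tilde s_X)$ can never realise the near-integer-ratio configurations on which $f_d$ decreases in $\eps$. Establishing one of these two finer statements is, I expect, the genuine content of the conjecture.
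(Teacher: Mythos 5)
You have not proved the statement, but neither does the paper: this is stated there as Conjecture~\ref{conj:ContinuityOpNorm2:ContinuityOpNorm2}, an open problem, and your proposal is essentially a correct reconstruction of the paper's own (negative) analysis in Section~\ref{sec:discussions}. Your reduction --- pinching in the eigenbasis of $\rho$, applying Theorem~\ref{theo:main} to the eigenvalue distributions $(r_X,\tilde{s}_X)$, and then trying to relax $(\eps_2,\nu_2)$ to $(\eps,\nu)$ --- is exactly the route the paper suggests and then rules out: the paper observes that one cannot find an $\eps^*$ such that $f_d(\eps,\nu)$ is monotonically increasing in $\eps\in[0,\eps^*]$ for fixed $\nu$, though it supports this only with the plots in Appendix~\ref{sec:appC}. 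Your closed-form piecewise derivatives, $\partial f_d/\partial\nu = d_+\log(\nu/\mu)\ge 0$ and $\partial f_d/\partial\eps = \log\bigl((1-\eps)\,\mu\,d_-/\eps^2\bigr)$ with the latter diverging to $-\infty$ as $\mu\to 0^+$, are correct and give a sharper, analytic version of that observation. One minor caveat in the ``benign'' direction: the intermediate point $(\eps_2,\nu)$ may violate the domain constraint $\nu\le\eps_2$ of the set $\mathcal{S}$ in~\eqref{eq:setnueps} (since $\eps_2\le\eps$ can fall below $\nu$), so even the step $f_d(\eps_2,\nu_2)\le f_d(\eps_2,\nu)$ requires extending $f_d$ beyond $\mathcal{S}$ (where $d_+=0$, $\mu=\eps_2$) or an extra argument; this is handled in the paper's Theorem~\ref{theo:ContinuityOpNorm} only because Sason's functional form, unlike $f_d$, is insensitive to this.

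The genuine gap is that your two fallback strategies --- lifting the majorization argument of Steps~A--E to track the joint constraint that the pinching imposes on $(\eps_2,\nu_2)$, or showing that pinched pairs never realise the near-integer-ratio configurations where $f_d$ decreases in $\eps$ --- are only named, not executed, and they are precisely the unresolved content of the conjecture; neither you nor the paper closes this. Separately, the conjecture also asserts tightness, which your proposal does not address at all; that half the paper \emph{does} settle, by exhibiting commuting states whose eigenvalue vectors are the distributions~\eqref{eq:probaTightMe1} together with~\eqref{eq:probaTightMe2a} (when $\mu=0$) or~\eqref{eq:probaTightMe2b} (when $\mu\neq 0$), for which $\mathrm{T}=\eps$, $\|\rho-\sigma\|_\infty=\nu$ (using Lemma~\ref{lem:constDimQ} to verify $\eps/d_-\le\nu$), and the entropy difference equals $f_d(\eps,\nu)$. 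So relative to what the paper establishes, your proposal is missing the easy tightness half, correctly identifies why the inequality half resists the natural proof, and leaves that half open --- as does the paper.
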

\begin{figure}[!t]
\centering
\subfloat[]{\includegraphics[width=\columnwidth]{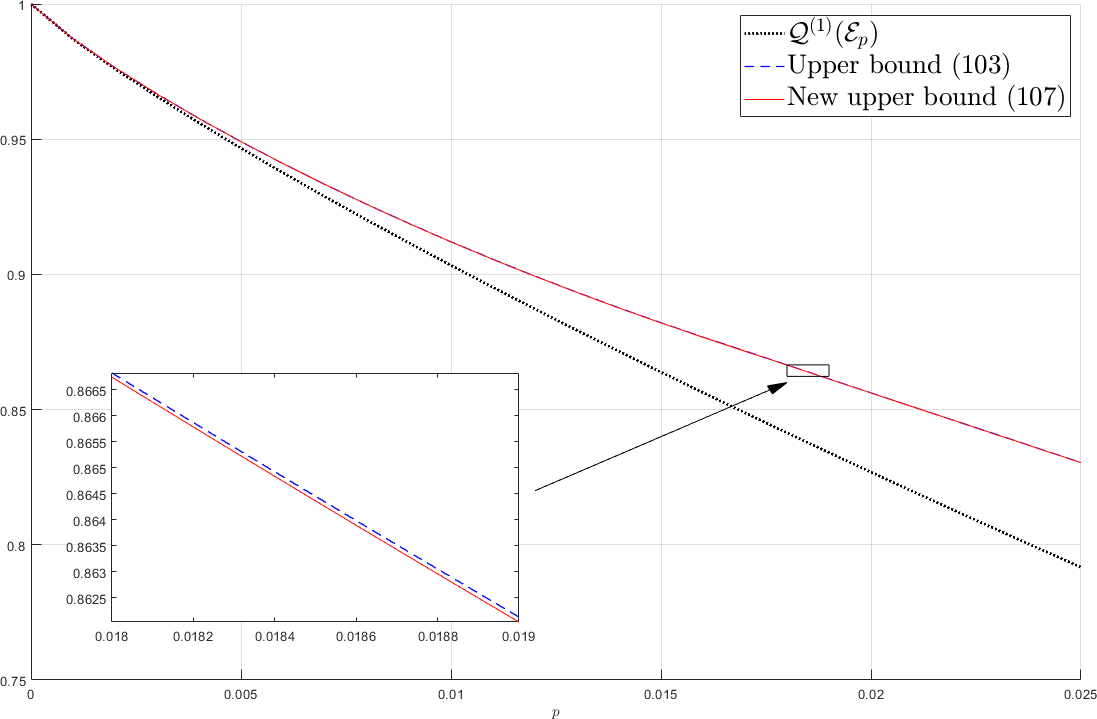}
\label{fig:depol1}}
\hfil
\subfloat[]{\includegraphics[width=\columnwidth]{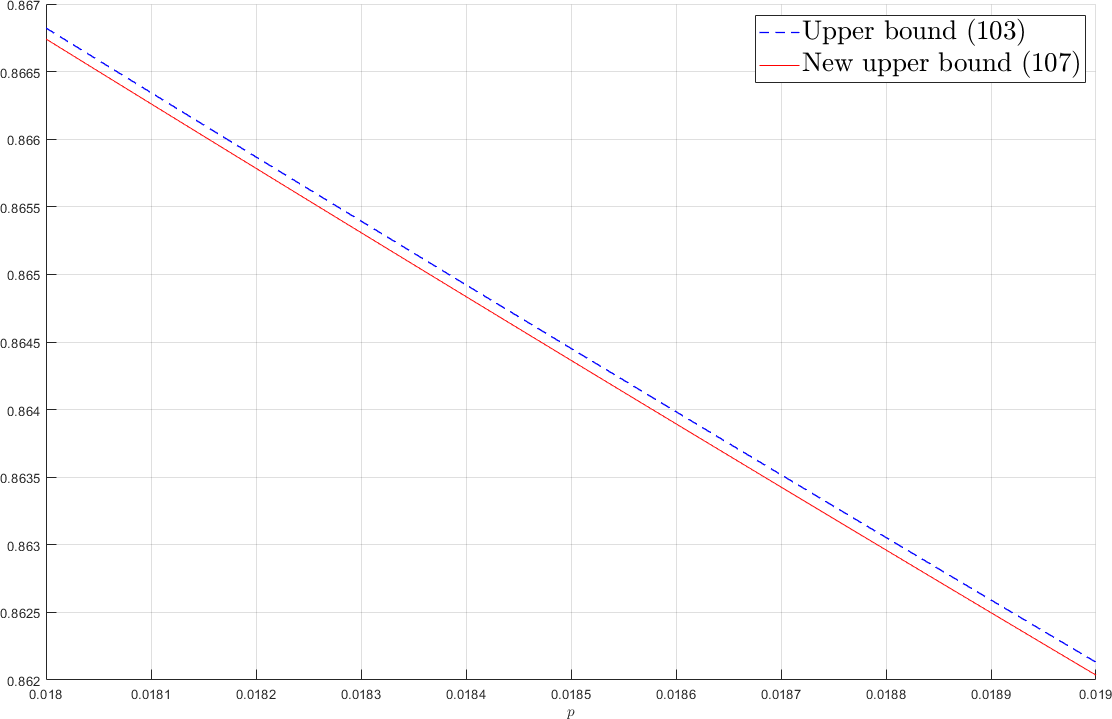}
\label{fig:depol2}}
\caption{(a) depicts upper and lower bounds for the quantum capacity $Q(\mathcal{E}_p)$ of a qubit depolarizing channel $\mathcal{E}_p$ in the low-noise regime $p \in [0, 0.025]$ for which~\eqref{eq:boundQSutter} (dashed blue curve) was the previously known tightest upper bound. The channel coherent information given in~\eqref{eq:cohInfoDepol} denotes a lower bound on $Q(\mathcal{E}_p)$ (dotted black curve). The solid red line depicts our new upper bound given by~\eqref{eq:boundQnu}. (b): a magnification of the box represented on Fig.~\ref{fig:depol1}.}
\end{figure}
\noindent A possible way to prove Conjecture~\ref{conj:ContinuityOpNorm2:ContinuityOpNorm2}, would be to proceed in a manner analogous to the proof of Theorem~\ref{theo:ContinuityOpNorm}. Recall that the proof of the latter is based on the classical result of Lemma~\ref{lem:Sason}. Similarly, one could attempt to prove Conjecture~\ref{conj:ContinuityOpNorm2:ContinuityOpNorm2} by exploiting the strengthened classical result of Theorem~\ref{theo:main}. However, one crucial step of the proof would be to establish that there exists some $\eps^* \in [0,1]$ for which the function $f_{d}(\eps,\nu)$ is monotonically increasing in $\eps \in [0,\eps^*]$, for any fixed $\nu \in [0,\eps^*]$.
This would be analogous to the steps corresponding 
to~\eqref{eq:prooftheo:ContinuityOpNorm2} and~\eqref{eq:prooftheo:ContinuityOpNorm3} in the proof of Theorem~\ref{theo:ContinuityOpNorm}. It turns out, however, that it is not possible to find such a value $\eps^*$. This can be better understood from
{Figs.~\ref{fig:RHSa} and~\ref{fig:RHSb} in Appendix~\ref{sec:appC}}
%Figs.~2a and~2b of the Appendix~\cite{appIEEE},
which depict 3D plots of $f_{d}(\eps,\nu)$ for $d=10$, and
{Fig.~\ref{fig:RHS2} in Appendix~\ref{sec:appC},}
%Fig.~2c of the Appendix~\cite{appIEEE},
which depicts 2D plots of $f_{d}(\eps,\nu)$ for $d=10$, for fixed values of $\nu$.
Note that the continuity bound of~\eqref{eq:conj:ContinuityOpNorm2}, if it holds, can easily be shown to be tight for all allowed values of $\eps$ and $\nu$. To see this, when $\mu=0$ in the definition~\eqref{eq:RHSBound} of $f_{d}(\eps,\nu)$, consider a pair of commuting states, $\rho$ and $\sigma$ whose vectors of eigenvalues are given by the probability distributions of~\eqref{eq:probaTightMe1} and~\eqref{eq:probaTightMe2a}, respectively; when $\mu \neq 0$, consider instead states $\rho$ and $\sigma$ with vectors of eigenvalues given by the probability distributions of~\eqref{eq:probaTightMe1} and~\eqref{eq:probaTightMe2b}, respectively.

To our knowledge, for a linear map, $\Phi$, which is the difference of two quantum channels, the unstablized norm $||\Phi||_{\infty}^{\mathrm{D}}$ (equivalently, the optimization problem in~\eqref{eq:opt1toInfD}) cannot be expressed as an SDP. This is because the set $\mathcal{Q}_1$ defined in~\eqref{eq:Q1} is not convex. Similarly, the unstablized norm $||\Phi||_1^{\mathrm{D}}$ (equivalently, the optimization problem in~\eqref{eq:opt1to1D}) cannot be expressed as an SDP either, as the set $\mathcal{Q}_2$ defined in~\eqref{eq:Q2} is not convex. However, if we were able to compute the two norms, $||\Phi||_1^{\mathrm{D}}$ and $||\Phi||_{\infty}^{\mathrm{D}}$, then we would be able to obtain better upper bounds on the quantum capacities of channels through Proposition~\ref{prop:upperQuantCapa2}. Therefore, it would of interest to design methods to compute these two norms. One possibility might be to show that they can be expressed via a so-called \emph{second-order cone program}~\cite{Alizadeh2003}.

The bounds for approximate degradable channels provided in~\cite{Sutter2017} were exploited in~\cite{Leditzky2018b} to determine the quantum and the private capacity of low-noise quantum channels to leading orders in the channel's distance to the perfect channel (we denote it by $\eta$). To this end, the authors of~\cite{Leditzky2018b} derived analytical upper bounds on the quantity $ \eps_\Phi$ in~\eqref{eq:optepsGen} to leading orders in $\eta$. An interesting question is whether their techniques can be applied to furthermore provide upper bounds on the quantity $\nu_\Phi$ in~\eqref{eq:optnuGen} to leading orders in $\eta$, or even the quantity
\begin{equation} \nonumber
    \beta_\Phi \coloneqq 2 \inf_{\Lambda} \left\{ \frac{|| \Phi^{\mathrm{c}} - \Lambda \circ \Phi ||_{\cb}}{|| \Phi^{\mathrm{c}} - \Lambda \circ \Phi ||_{\dia}} : \Lambda : \mathcal{L}(B) \rightarrow \mathcal{L}(E) \, \text{is CPTP} \right\}
\end{equation}
that would replace $\beta$ in~\eqref{eq:theo:upperQuantCapa}. This would in turn strengthen the upper bounds on the quantum- and private classical capacity derived in~\cite{Leditzky2018b}.

It would also be interesting to see whether ideas analogous to the ones developed in this paper could be used to get improved upper bounds on the classical capacity of certain quantum channels (see e.g.~\cite{Leditzky2018c,Leung2009,Shirokov2017}).

%%%%%%%%%%%%%%%%%%%%%%%%%%%%%%%%%%%%%%%%%%%%
%%%%%%%%%%%%%%%%%%%%%%%%%%%%%%%%%%%%%%%%%%%%
%%%%%%%%%%%%%%%%% Appendix %%%%%%%%%%%%%%%%%
%%%%%%%%%%%%%%%%%%%%%%%%%%%%%%%%%%%%%%%%%%%%
%%%%%%%%%%%%%%%%%%%%%%%%%%%%%%%%%%%%%%%%%%%%

\appendices

\section{Supplemental lemmas} \label{sec:appA}

\noindent The following Lemma generalizes Lemma~4 of \cite{JabbourDatta}.
\begin{lem} \label{lem:majEps2}
    Consider a vector $\bos{v} \in \mathbb{R}^n$ whose elements are non-negative. For some $i,j \in \left\lbrace 1, \cdots, n \right\rbrace$ such that $v_i \geq v_j$, and some $s \in (0, v_j]$, define the vector $\bos{u} \in \mathbb{R}^n$ whose elements satisfy the following relations:
    \begin{equation}
    \left\lbrace \begin{aligned}
        u_i & = v_i + s, \\
		u_j & = v_j - s, \\
		u_k & = v_k, \quad \forall k \in \left\lbrace 1, \cdots, n \right\rbrace \setminus \left\lbrace i,j \right\rbrace.
    \end{aligned} \right.
    \end{equation}
	Then $\bos{v} \prec \bos{u}$.
\end{lem}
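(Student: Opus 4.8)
The plan is to exploit the fact that $\bos{u}$ and $\bos{v}$ differ in only the two coordinates $i$ and $j$, reducing the $n$-dimensional majorization claim to a one-line convexity inequality. First I would dispose of the equality constraint in the definition of majorization by noting that the total sums agree, $\sum_{l=1}^n u_l = \sum_{l=1}^n v_l$, since the operation adds $s$ to $v_i$ and subtracts the same $s$ from $v_j$ while leaving every other entry fixed. For the partial-sum inequalities I would invoke the standard Hardy--Littlewood--P\'olya characterization (see~\cite{Majorization}): for vectors with equal component sums, $\bos{v} \prec \bos{u}$ holds if and only if $\sum_{l=1}^n \phi(v_l) \le \sum_{l=1}^n \phi(u_l)$ for every convex function $\phi$. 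Because only the $i$-th and $j$-th terms differ, all others cancel and the whole claim collapses to establishing, for every convex $\phi$,
\[
    \phi(v_i + s) + \phi(v_j - s) \ge \phi(v_i) + \phi(v_j).
\]

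I would prove this core inequality by rewriting it as $\phi(v_i+s) - \phi(v_i) \ge \phi(v_j) - \phi(v_j - s)$ and appealing to the monotonicity of divided differences of a convex function. The left-hand side equals $s$ times the average slope of $\phi$ on the interval $[v_i, v_i + s]$, while the right-hand side equals $s$ times its average slope on $[v_j - s, v_j]$. Since $v_i \ge v_j$, the interval $[v_j - s, v_j]$ lies entirely to the left of $[v_i, v_i + s]$, so convexity forces the former average slope to be at most the latter, which gives the inequality. The hypotheses $s \in (0, v_j]$ and the non-negativity of $\bos{v}$ ensure that every argument of $\phi$ stays inside the relevant domain, so no boundary issues arise.

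The main obstacle -- and essentially the only place where the hypothesis $v_i \ge v_j$ enters -- is securing the correct left-to-right ordering of these two intervals; this is exactly what makes the ``spread'' operation move $\bos{v}$ upward (rather than downward) in the majorization order. If one prefers a fully self-contained argument that avoids citing the Hardy--Littlewood--P\'olya characterization, I would instead restrict to the family of convex test functions $\phi_t(x) = \max\{x - t, 0\}$, for which the partial-sum conditions defining $\bos{v} \prec \bos{u}$ are recovered via the identity $\sum_{l=1}^k \bos{x}^{\downarrow}_l = \min_{t}\bigl(kt + \sum_l \max\{x_l - t, 0\}\bigr)$, and then verify the two-term inequality above for each $\phi_t$ by a short case analysis on the position of $t$ relative to $v_j - s$, $v_j$, $v_i$, and $v_i + s$. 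Either route yields $\bos{v} \prec \bos{u}$.
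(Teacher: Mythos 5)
Your proof is correct, but it takes a genuinely different route from the paper's. The paper's argument is two lines: since $v_i \geq v_j$, the two-component vector $(v_i, v_j)$ is majorized by $(u_i, u_j) = (v_i + s, v_j - s)$ (the sums agree and the larger entry only grows), and then Lemma~5 of~\cite{JabbourDatta} --- a padding result saying that appending a common vector to two majorization-ordered vectors preserves the ordering --- lifts this to the full $n$-dimensional claim. You instead invoke the Hardy--Littlewood--P\'olya characterization of majorization via convex test functions, let the $n-2$ untouched coordinates cancel, and reduce everything to the two-term inequality $\phi(v_i+s) + \phi(v_j-s) \geq \phi(v_i) + \phi(v_j)$, which you establish correctly by monotonicity of chordal slopes: with $v_j - s \leq v_j \leq v_i \leq v_i + s$ the secant slope of $\phi$ on $[v_j-s, v_j]$ is at most that on $[v_i, v_i+s]$, and this is indeed the only place the hypothesis $v_i \geq v_j$ is needed. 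Both arguments exploit the same structural fact --- only coordinates $i$ and $j$ move --- but they externalize different ingredients: the paper leans on its earlier concatenation lemma and gets a near-immediate proof, whereas your route is independent of~\cite{JabbourDatta} and, via the test functions $\phi_t(x) = \max\{x-t,0\}$ together with the variational identity for top-$k$ sums, can be made fully self-contained; it also makes transparent why this transfer (from the smaller entry to the larger one, the reverse of a Robin Hood transfer) moves the vector strictly up in the majorization order. One minor observation: the non-negativity of $\bos{v}$ plays no essential role in either argument --- it only guarantees $v_j - s \geq 0$, i.e., that the entries of $\bos{u}$ remain non-negative as required by the contexts in which the lemma is applied --- so your remark about domain issues, while harmless, is not actually load-bearing.
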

\begin{proof}
    Since $v_i \geq v_j$, the vector $(v_i,v_j)$ is majorized by the vector $(u_i,u_j)$. Lemma~\ref{lem:majEps2} then follows from Lemma~5 of \cite{JabbourDatta}.
\end{proof}

\noindent We use Lemma~\ref{lem:majEps2} to prove the following.
\begin{lem} \label{lem:majq}
    Consider a vector $\bos{v} \in \mathbb{R}^n$ whose elements are non-negative and are arranged in non-increasing order, and for some $i \in \left\lbrace 1, \cdots, n-1 \right\rbrace$, define the vector $\bos{u} \in \mathbb{R}^n$ whose elements satisfy the following relations:
    \begin{equation}
    \left\lbrace \begin{aligned}
        u_j & = v_j + \omega_j, \quad \forall j \in \left\lbrace 1, \cdots, i \right\rbrace, \\
		u_j & = v_j - \omega_j, \quad \forall j \in \left\lbrace i+1, \cdots, n \right\rbrace,
    \end{aligned} \right.
    \end{equation}
    where $\omega_j \geq 0$ for all $j \in \left\lbrace 1, \cdots, n \right\rbrace$ and $\sum_{j=1}^{i} \omega_j = \sum_{j=i+1}^{n} \omega_j$. Then $\bos{v} \prec \bos{u}$.
\end{lem}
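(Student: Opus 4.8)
The plan is to verify the defining partial-sum inequalities of majorization directly, exploiting the fact that $\bos{v}$ is already arranged in non-increasing order, so that $\bos{v}^{\downarrow} = \bos{v}$. First I would record that the total sums agree: since $\sum_{j=1}^i \omega_j = \sum_{j=i+1}^n \omega_j$, summing all coordinates gives $\sum_{k=1}^n u_k = \sum_{k=1}^n v_k$, which is the equality constraint in the definition of $\prec$. It then remains to establish, for each $k \in \left\lbrace 1, \dots, n-1 \right\rbrace$, the inequality $\sum_{l=1}^k \bos{u}^{\downarrow}_l \geq \sum_{l=1}^k \bos{v}^{\downarrow}_l = \sum_{l=1}^k v_l$.

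The key step I would use is the elementary bound $\sum_{l=1}^k \bos{u}^{\downarrow}_l \geq \sum_{l=1}^k u_l$, which holds because the sum of the $k$ largest entries of a vector dominates the sum of any $k$ of its entries, here the first $k$. I would then bound $\sum_{l=1}^k u_l$ in two cases. For $k \leq i$, one has $\sum_{l=1}^k u_l = \sum_{l=1}^k v_l + \sum_{l=1}^k \omega_l \geq \sum_{l=1}^k v_l$, since every $\omega_l \geq 0$. For $k > i$,
\begin{equation} \nonumber
    \sum_{l=1}^k u_l = \sum_{l=1}^k v_l + \Big( \sum_{l=1}^i \omega_l - \sum_{l=i+1}^k \omega_l \Big) \geq \sum_{l=1}^k v_l,
\end{equation}
where the parenthesized term is non-negative because $\sum_{l=1}^i \omega_l = \sum_{l=i+1}^n \omega_l \geq \sum_{l=i+1}^k \omega_l$, the last step being a sum over fewer non-negative terms. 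Combining these gives $\sum_{l=1}^k \bos{u}^{\downarrow}_l \geq \sum_{l=1}^k v_l$ for all $k < n$, which together with the sum equality at $k = n$ yields $\bos{v} \prec \bos{u}$.

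The only genuine subtlety, and the main place to be careful, is that $\bos{u}$ need not itself be sorted, so the partial sums in the majorization condition involve the rearrangement $\bos{u}^{\downarrow}$ rather than $\bos{u}$ directly; the one-line inequality $\sum_{l=1}^k \bos{u}^{\downarrow}_l \geq \sum_{l=1}^k u_l$ is precisely what lets one sidestep tracking the order of $\bos{u}$. An alternative route, more in keeping with the preceding Lemma~\ref{lem:majEps2}, would be to realize the passage from $\bos{v}$ to $\bos{u}$ as a sequence of elementary mass transfers, each moving weight from a tail index ($>i$) to a head index ($\leq i$); since the head entries only grow and the tail entries only shrink while remaining ordered below the head entries (as $\bos{v}$ is sorted), every such transfer moves mass from a smaller to a larger coordinate, so Lemma~\ref{lem:majEps2} applies and transitivity of $\prec$ finishes the argument. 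I would nonetheless prefer the direct computation above, since it is fully general and avoids having to verify that no coordinate is driven below zero along the chain of transfers.
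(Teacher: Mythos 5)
Your proof is correct, but it takes a genuinely different route from the paper's. The paper establishes Lemma~\ref{lem:majq} by precisely the transfer argument you sketch as an alternative: it realizes $\bos{u}$ from $\bos{v}$ through a sequence of elementary moves of weight from a tail index ($>i$) to a head index ($\leq i$), notes that head entries only increase and tail entries only decrease so that the hypothesis $v_i \geq v_j$ of Lemma~\ref{lem:majEps2} persists at every step, and finishes with transitivity of $\prec$. Your main argument instead verifies the partial-sum conditions directly, and each step checks out: sortedness gives $\sum_{l=1}^k \bos{v}^{\downarrow}_l = \sum_{l=1}^k v_l$; the bound $\sum_{l=1}^k \bos{u}^{\downarrow}_l \geq \sum_{l=1}^k u_l$ correctly sidesteps any need to track the ordering of $\bos{u}$; the split into $k \leq i$ and $k > i$, with $\sum_{l=1}^i \omega_l = \sum_{l=i+1}^n \omega_l \geq \sum_{l=i+1}^k \omega_l$, handles all partial sums, and the balanced $\omega$-sums give equality at $k = n$. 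What your computation buys is self-containedness and slightly greater generality: Lemma~\ref{lem:majEps2} as stated demands $s \in (0, v_j]$, so the paper's chain of transfers tacitly requires every decreased coordinate to remain non-negative (in effect $u_j \geq 0$ for $j > i$), a hypothesis Lemma~\ref{lem:majq} never imposes --- it does hold in the application in Step~B of the proof of Theorem~\ref{theo:main}, where the decreased entries equal $p_X(x) \geq 0$, and you rightly flag this as the reason to prefer the direct route; your argument moreover never uses non-negativity of the entries of $\bos{v}$ at all, only its ordering via $\bos{v}^{\downarrow} = \bos{v}$. What the paper's route buys in exchange is economy of exposition: having proved Lemma~\ref{lem:majEps2} anyway (as a generalization of Lemma~4 of~\cite{JabbourDatta}), the transfer-plus-transitivity argument disposes of Lemma~\ref{lem:majq} in two lines, whereas your proof re-derives the majorization inequalities from scratch.
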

\begin{proof}
    It is clear that the vector $\bos{u}$ can be obtained from vector $\bos{v}$ via a series of steps, where %each of these steps can be understood as follows: removing 
    in each step one removes some weight $s$ from an element in $\left\lbrace i+1, \cdots, n \right\rbrace$ of the initial vector, and adds that weight $s$ to an element in $\left\lbrace 1, \cdots, i \right\rbrace$. Since during the whole process, elements in $\left\lbrace 1, \cdots, i \right\rbrace$ always have values higher than elements in $\left\lbrace i+1, \cdots, n \right\rbrace$, Lemma~\ref{lem:majEps2} implies that in each step, the initial vector is majorized by the final vector. Since majorization is transitive, it then follows that $\bos{v} \prec \bos{u}$.
\end{proof}
\begin{lem} \label{lem:MeLeqSason}
    Let $M > 1$.
    Define the function $\tilde{g}_M : \mathcal{S} \rightarrow [0, \infty)$ as
    \begin{equation} \label{eq:RHSBoundSason}
        \tilde{g}_M(\eps,\nu) \coloneqq h(\eps) + \eps \log(\beta M-1),
    \end{equation}
    where the set $\mathcal{S}$ is defined as
    \begin{equation} \label{eq:setnuepsapp}
        \mathcal{S} \coloneqq \{ (\eps,\nu) \in [0, 1] \times [0, 1] : \nu \leq \eps \},
    \end{equation}
    and $\beta \coloneqq \nu/\eps$, and consider the function $f_M : \mathcal{S} \rightarrow [0, \infty)$ defined as
    \begin{equation} \label{eq:RHSBoundapp}
        f_d(\eps,\nu) \coloneqq h(\eps) + d_+ \, \nu \log \nu + \mu \log \mu + \eps \log d_- -\eps \log \eps,
    \end{equation}
    where $d_+ \in \mathbb{Z}$ and $\mu \in [0,\nu)$ are uniquely defined through the following relations:
    \begin{equation} \label{eq:remapp}
        \eps = d_+ \nu + \mu,
    \end{equation}
    and
    \begin{equation}
        d_- \coloneqq \begin{cases}
	   d-d_+, & \mathrm{if} \, \mu = 0, \\
	   d-d_+-1, & \mathrm{else}.
        \end{cases}
    \end{equation}
    Then $\tilde{g}_M(\eps,\nu) = f_M(\eps,\nu)$ if $\eps/\nu$ is an integer. Furthermore, if $\nu M \geq 2 \eps$ and $\eps/\nu$ is not an integer, then $\tilde{g}_M(\eps,\nu) > f_M(\eps,\nu)$.
\end{lem}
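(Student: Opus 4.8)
The plan is to treat the two assertions separately: the equality for integer $\eps/\nu$ is a direct algebraic identity, while the strict inequality for non-integer $\eps/\nu$ follows from a one-variable concavity argument in the remainder parameter $\mu$.

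\emph{Equality case.} When $\eps/\nu$ is an integer we have $\mu=0$ and $d_+=\eps/\nu$, so $d_-=d-d_+$ reduces to $M-d_+$. Substituting into~\eqref{eq:RHSBoundapp}, using $d_+\nu=\eps$ (hence $d_+\nu\log\nu=\eps\log\nu$) together with the convention $\mu\log\mu\to 0$, I would collect all logarithmic terms into $f_M=h(\eps)+\eps\log\frac{\nu(M-d_+)}{\eps}$. Since $\nu(M-d_+)=\nu M-d_+\nu=\nu M-\eps$, the argument of the logarithm is exactly $\beta M-1$ with $\beta=\nu/\eps$, giving $f_M=h(\eps)+\eps\log(\beta M-1)=\tilde g_M$. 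No hypothesis beyond well-definedness of $f_M$ is needed here.

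\emph{Setup for the strict inequality.} Assume $\eps/\nu\notin\mathbb{Z}$, so $d_+=\lfloor\eps/\nu\rfloor$, $\mu=\eps-d_+\nu\in(0,\nu)$, and $d_-=M-d_+-1$. The key reduction is to fix $M,d_+,\nu$ and regard the gap as a smooth function of $\mu\in(0,\nu)$: writing $\eps=d_+\nu+\mu$ and $S:=\nu M-\eps=\nu d_-+\nu-\mu$, the $h(\eps)$ contributions cancel and
\[ \Delta(\mu):=\tilde g_M-f_M=\eps\log S-d_+\nu\log\nu-\mu\log\mu-\eps\log d_-. \]
First I would record that the hypothesis $\nu M\geq 2\eps$ forces $d_-\geq 1$ (so $S\geq 2\nu-\mu>0$ and $f_M$ is finite): with $\mu>0$ it gives $M\geq 2d_++2\mu/\nu>2d_+$, hence $M\geq 2d_++1$ and $d_-=M-d_+-1\geq d_+\geq 0$; the case $d_+=0$ gives $d_-=M-1\geq 1$, while $d_+\geq 1$ gives $d_-\geq 1$ directly. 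This is precisely where $\nu M\geq 2\eps$ is used.

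\emph{Concavity and endpoints.} Differentiating twice in $\mu$ (using $\eps'=1$, $S'=-1$, and $S+\eps=\nu M$) yields
\[ \Delta''(\mu)=-\frac{1}{S\ln 2}-\frac{\nu M}{S^2\ln 2}-\frac{1}{\mu\ln 2}<0, \]
so $\Delta$ is strictly concave on $(0,\nu)$ and extends continuously to $[0,\nu]$. At the right endpoint $\mu\to\nu$ one has $\eps\to(d_++1)\nu$ and $S\to\nu d_-$, and the logarithms cancel to give $\Delta(\nu)=0$; at the left endpoint $\mu\to 0$ one gets $\Delta(0)=d_+\nu\log\frac{M-d_+}{M-d_+-1}\geq 0$ (strictly positive when $d_+\geq 1$, zero when $d_+=0$), both finite because $d_-\geq 1$. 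Since a strictly concave function lies strictly above the chord joining its endpoints, for $\mu\in(0,\nu)$ I obtain $\Delta(\mu)>\frac{\nu-\mu}{\nu}\Delta(0)+\frac{\mu}{\nu}\Delta(\nu)=\frac{\nu-\mu}{\nu}\Delta(0)\geq 0$, i.e. $\tilde g_M>f_M$.

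\emph{Main obstacle.} The computation of $\Delta''$ is routine once $\Delta$ is expressed in the single variable $\mu$; the delicate point is the bookkeeping. One must use the $\mu\neq 0$ branch $d_-=M-d_+-1$ \emph{throughout} the concavity argument, recognize that the equality case corresponds to the right endpoint $\mu\to\nu$ (where $\eps/\nu$ reaches the next integer) rather than to $\mu=0$, and verify that $\nu M\geq 2\eps$ is exactly the condition guaranteeing $d_-\geq 1$, so that $\Delta(0)\geq 0$ and the chord bound closes. Note also that $\Delta(0)$ here is the value of the continuous extension of the $\mu\neq 0$ formula, not the true gap at $\mu=0$ (which vanishes by the equality case); this causes no difficulty since the inequality is only claimed on the open interval.
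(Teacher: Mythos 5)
Your proof is correct, but it takes a genuinely different route from the paper's for the strict inequality. The paper's proof is a direct algebraic one: after simplification it writes the gap as
\begin{equation*}
\tilde{g}_M(\eps,\nu) - f_M(\eps,\nu) = - \eps \log \left[ 1-\frac{\nu-\mu}{\nu M-\eps} \right] + \mu \log \left( \frac{\nu}{\mu} \right),
\end{equation*}
and concludes by noting that the second term is strictly positive (since $\nu > \mu$ when $\eps/\nu \notin \mathbb{Z}$) and the first is non-negative (using $\nu M \geq 2\eps$); your $\Delta(\mu)$ is exactly this expression, as one sees from $1 - (\nu-\mu)/(\nu M - \eps) = \nu d_-/S$. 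Instead of exhibiting this two-term decomposition, you fix $M, d_+, \nu$, prove $\Delta''(\mu) < 0$ on $(0,\nu)$, evaluate $\Delta(0) = d_+\nu \log\frac{M-d_+}{M-d_+-1} \geq 0$ and $\Delta(\nu) = 0$, and invoke the strict chord bound. Both arguments are sound; yours is longer but more systematic (no need to guess the right regrouping), it makes explicit where the bound degenerates (equality is approached only as $\mu \to \nu$, i.e.\ as $\eps/\nu$ reaches the next integer), and the chord inequality even yields the quantitative lower bound $\Delta(\mu) \geq \frac{\nu-\mu}{\nu}\Delta(0)$, which the paper's sign-check does not. You are also more careful than the paper on one point both proofs need: that $\nu M \geq 2\eps$ forces $d_- > 0$, so that $S > 0$ on all of $[0,\nu]$ and all logarithms are defined (the paper uses this implicitly when taking $\log[1 - (\nu-\mu)/(\nu M - \eps)]$). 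One cosmetic caveat: your step ``$M > 2d_+$ hence $M \geq 2d_+ + 1$'' uses integrality of $M$, which the lemma's literal hypothesis ($M > 1$) does not grant; it is harmless, since for $(\eps,\nu) \in \mathcal{S}$ with $\eps/\nu \notin \mathbb{Z}$ one has $d_+ \geq 1$, and then $M > 2d_+ \geq d_+ + 1$ already gives $d_- = M - d_+ - 1 > 0$ as a real number, which is all the argument requires (and in the paper's intended use, $M = |\mathcal{X}|$ is an integer anyway).
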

\begin{proof}
    When $\eps/\nu = 1/\beta$ is an integer, $d_+ = 1/\beta$ and $\mu=0$, so that
    \begin{equation}
            f_M(\eps,\nu) = \tilde{g}_M(\eps,\nu).
    \end{equation}
    When $1/\beta = \eps/\nu$ is not an integer, $\mu \neq 0$. In that case, some simplifications lead to
\begin{equation} \nonumber
    \tilde{g}_M(\eps,\nu) - f_M(\eps,\nu) = - \eps \log \left[ 1-\frac{\nu-\mu}{\nu M-\eps} \right] + \mu \log \left( \frac{\nu}{\mu} \right).
\end{equation}
Now, $\nu > \mu$, so that the second term of the above equation is strictly positive. Furthermore, $\nu M \geq 2 \eps$ implies $\nu M - \eps \geq 0$, which means that the first term of the above equation is non-negative. Consequently, $\tilde{g}_M(\eps,\nu) > f_M(\eps,\nu)$.
\end{proof}
\begin{lem} \label{lem:incBoundSason1}
    Let $M > 1$ and $\eps \in [0,1]$. Define the function $\textsl{g}_M : [\eps/M, 1] \rightarrow [0, \infty)$ as 
    \begin{equation} \label{eq:RHSBoundSason1}
        \textsl{g}_M(\nu) \coloneqq \eps \log\left((\nu/\eps) M-1\right) + h(\eps).
    \end{equation}
    Then $\textsl{g}_M$ is monotonically increasing in $\nu$ for all $\nu \in [\eps/M,1]$.
\end{lem}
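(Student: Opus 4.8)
The plan is to observe that, for fixed $\eps$ and $M$, the summand $h(\eps)$ is a constant, so the monotonicity of $\textsl{g}_M$ is entirely governed by the map $\nu \mapsto \eps \log((\nu/\eps)M - 1)$. First I would dispose of the degenerate case $\eps = 0$: there $\textsl{g}_M \equiv 0$ (using the convention $0 \log 0 = 0$), which is trivially nondecreasing. So henceforth assume $\eps > 0$.

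Next, I would exhibit $\textsl{g}_M$ as a composition of monotone maps. The inner affine map $\nu \mapsto (\nu/\eps)M - 1$ has positive slope $M/\eps$ and is therefore strictly increasing; moreover, for $\nu \in (\eps/M, 1]$ its value lies in $(0, \infty)$, the domain on which $\log$ is defined and strictly increasing. Since $\eps > 0$, it follows that $\nu \mapsto \eps \log((\nu/\eps)M - 1)$, and hence $\textsl{g}_M$, is strictly increasing on $(\eps/M, 1]$. Equivalently, a one-line differentiation gives $\textsl{g}_M'(\nu) = \frac{M}{(\ln 2)((\nu/\eps)M - 1)} > 0$ for all $\nu > \eps/M$, yielding the same conclusion.

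The only point requiring any care — and the closest thing to an obstacle — is the behaviour at the left endpoint $\nu = \eps/M$, where the argument of the logarithm vanishes and $\textsl{g}_M$ diverges to $-\infty$. Since $\textsl{g}_M$ is continuous and strictly increasing on the open interval $(\eps/M, 1]$ and tends to $-\infty$ as $\nu \downarrow \eps/M$, it is monotonically increasing on the whole interval $[\eps/M, 1]$ in the extended sense. This is precisely what is needed in the application to~\eqref{eq:prooftheo:ContinuityOpNorm2}, where one only enlarges the argument from $\nu_2$ to $\nu \geq \nu_2$ while holding $\eps = \eps_2$ fixed.
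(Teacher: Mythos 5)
Your proof is correct and follows essentially the same route as the paper, whose entire proof is the one-line observation that the claim ``simply follows from the monotonicity of the logarithm.'' Your additional care about the degenerate case $\eps = 0$ and the divergence of $\textsl{g}_M$ at the left endpoint $\nu = \eps/M$ is sound but goes beyond what the paper records.
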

%
% \begin{proof}
%     We simply compute the derivative
%     \begin{equation}
%     \frac{\partial \textsl{g}_M}{\partial \nu} = \frac{\eps M}{\nu M - \eps},
%     \end{equation}
%     which is non-negative if and only if $\nu M - \eps \geq 0$, which holds by the hypothesis of the lemma.
% \end{proof}
\begin{proof}
    The claim simply follows from the monotonicity of the logarithm.
\end{proof}
\begin{lem} \label{lem:incBoundSason2}
    Let $M > 1$ and $\nu \in [0,1]$. Define the function $\textsl{f}_M : [0, \nu M/2] \rightarrow [0, \infty)$ as 
    \begin{equation} \label{eq:RHSBoundSason2}
        \textsl{f}_M(\eps) \coloneqq \eps \log((\nu/\eps) M-1) + h(\eps).
    \end{equation}
    Then $\textsl{f}_M$ is monotonically increasing in $\eps$ for all $\eps \in [0,\nu M / (\nu M + 3)]$.
\end{lem}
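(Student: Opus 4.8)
The plan is to show that the derivative of $\textsl{f}_M$ is nonnegative on the stated interval. Since $\log = \ln/\ln 2$ with $\ln 2 > 0$, monotonicity is unaffected by passing from base-$2$ to natural logarithms, so I would work with $\ln$ throughout. Writing $a \coloneqq \nu M$ and expanding the binary entropy, one has
\begin{equation*}
(\ln 2)\,\textsl{f}_M(\eps) = \eps \ln(a-\eps) - 2\eps \ln \eps - (1-\eps)\ln(1-\eps),
\end{equation*}
which is valid for $0 < \eps < \min\{a,1\}$; note that the constraint $\eps \le \nu M/(\nu M + 3)$ indeed forces both $\eps < a$ and $\eps < 1$, and also $\eps < \nu M/2$, so $\eps$ lies in the declared domain $[0,\nu M/2]$. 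Differentiating gives
\begin{equation*}
(\ln 2)\,\textsl{f}_M'(\eps) = \ln\!\left( \frac{(a-\eps)(1-\eps)}{\eps^2} \right) - \frac{\eps}{a-\eps} - 1 \eqqcolon F(\eps),
\end{equation*}
so the goal reduces to proving $F(\eps) \ge 0$ on $(0,\, a/(a+3)]$.

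The next step is to observe that $F$ is strictly decreasing on $(0,\min\{a,1\})$: each of the three terms $\ln(a-\eps)$, $-2\ln\eps$ and $\ln(1-\eps)$ is decreasing in $\eps$, while $-\eps/(a-\eps)$ has derivative $-a/(a-\eps)^2 < 0$. Consequently it suffices to verify nonnegativity of $F$ at the right endpoint $\eps^\ast \coloneqq a/(a+3)$, since then $F(\eps) \ge F(\eps^\ast) \ge 0$ for every $\eps \in (0,\eps^\ast]$. Together with the fact that $\textsl{f}_M$ is continuous at $\eps=0$ (under the convention $\eps\ln\eps \to 0$), this yields that $\textsl{f}_M$ is non-decreasing on the closed interval $[0,\eps^\ast]$, as claimed.

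To evaluate the endpoint, I would substitute $\eps^\ast = a/(a+3)$, which gives $a-\eps^\ast = a(a+2)/(a+3)$, $1-\eps^\ast = 3/(a+3)$ and $\eps^\ast/(a-\eps^\ast) = 1/(a+2)$, so that
\begin{equation*}
F(\eps^\ast) = \ln\!\left( \frac{3(a+2)}{a} \right) - \frac{a+3}{a+2}.
\end{equation*}
The crux of the argument, and the main obstacle, is to show that this quantity is nonnegative for every $a>0$. I would handle it by studying $G(a) \coloneqq \ln\!\big(3(a+2)/a\big) - (a+3)/(a+2)$ as a function of $a$: a direct computation gives $G'(a) = -(a+4)/[a(a+2)^2] < 0$, so $G$ is strictly decreasing. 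Since $G(a) \to +\infty$ as $a \to 0^+$ and $G(a) \to \ln 3 - 1$ as $a \to \infty$, this monotonicity forces $G(a) > \ln 3 - 1 > 0$ for all $a>0$. Hence $F(\eps^\ast) \ge 0$, completing the proof. The only genuinely non-mechanical input is the elementary numerical fact $\ln 3 > 1$, which secures the limiting value at $a\to\infty$; the remaining steps are routine monotonicity bookkeeping.
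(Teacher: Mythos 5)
Your proof is correct, but it takes a genuinely different route from the paper's. Both start from the same derivative, which in your normalization reads $(\ln 2)\,\textsl{f}_M'(\eps) = \ln\bigl((a-\eps)(1-\eps)/\eps^2\bigr) - \eps/(a-\eps) - 1$ with $a = \nu M$. The paper establishes pointwise nonnegativity on the whole interval by two elementary estimates --- replacing $(2^{1/\ln 2})\eps^2 = e\,\eps^2$ by $\eps(1+2\eps)$ (valid for $\eps \leq 1$) and invoking $\ln x \geq 1 - 1/x$ to absorb the rational term $\eps/(\nu M - \eps)$ into a logarithm --- which reduces the claim to the purely algebraic comparison $(1-\eps)(\nu M - 2\eps) \geq \eps(1+2\eps)$, equivalent to $\eps \leq \nu M/(\nu M + 3)$; this is exactly where the threshold in the statement comes from. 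You instead observe that the derivative expression $F(\eps)$ is strictly decreasing in $\eps$ (each summand visibly is, including $-\eps/(a-\eps)$ with derivative $-a/(a-\eps)^2$), so the entire problem collapses to the single endpoint evaluation $F\bigl(a/(a+3)\bigr) = \ln\bigl(3(a+2)/a\bigr) - (a+3)/(a+2)$, which you show is positive for all $a>0$ via the monotonicity $G'(a) = -(a+4)/[a(a+2)^2] < 0$ and the limit $\ln 3 - 1 > 0$; all of these computations check out, as does your continuity argument at $\eps = 0$. What each approach buys: the paper's stays within rational/algebraic comparisons after two standard logarithm bounds, and the threshold appears as the exact zero-crossing of its lower bound on the derivative; yours is more systematic (monotone derivative plus endpoint check) and actually proves the strictly stronger fact $F(\eps) \geq G(a) > \ln 3 - 1 > 0$ on the interval, revealing that $\textsl{f}_M'$ does not vanish at $\nu M/(\nu M+3)$ and hence that monotonicity persists slightly beyond the stated threshold --- at the modest cost of requiring the transcendental input $\ln 3 > 1$, which the paper's argument avoids.
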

\begin{proof}
    We compute the derivative
    \begin{equation}
    \frac{\partial \textsl{f}_M}{\partial \eps} = \log \left( \frac{(1-\eps)(\nu M - \eps)}{\eps^2} \right) - \frac{1}{\ln 2} \frac{\nu M}{\nu M - \eps},
    \end{equation}
    where $\ln$ represents the natural logarithm.
    The above can be rewritten as
    \begin{equation}
        \frac{\partial \textsl{f}_M}{\partial \eps} = \log \left( \frac{(1-\eps)(\nu M - \eps)}{(2^{1/\ln 2}) \eps^2} \right) - \frac{1}{\ln 2} \frac{\eps}{\nu M - \eps}.
    \end{equation}
    Since $\eps \leq 1$, we have that $(2^{1/\ln 2}) \eps^2 \leq \eps(1+2\eps)$, so that, by monotonicity of the logarithm,
    \begin{equation} \label{eq:lemma5-1}
        \frac{\partial \textsl{f}_M}{\partial \eps} \geq \log \left( \frac{(1-\eps)(\nu M - \eps)}{\eps(1+2\eps)} \right) - \frac{1}{\ln 2} \frac{\eps}{\nu M - \eps}.
    \end{equation}
    Now, the well known lower bound for the logarithm function
    \begin{equation}
        \ln x \geq 1-\frac{1}{x}, \quad x \geq 0,
    \end{equation}
    can be expressed as
    \begin{equation}
        y \leq \ln \left( \frac{1}{1-y} \right) = \ln(2) \log \left(\frac{1}{1-y}\right), \quad y \leq 1.
    \end{equation}
    Since $\eps \leq \nu M/2$, we have that $\eps/(\nu M - \eps) \leq 1$, so that we can exploit the above inequality and write
    \begin{equation}
        \frac{\eps}{\nu M - \eps} \leq \ln (2) \log \left( \frac{\nu M - \eps}{\nu M - 2 \eps}\right).
    \end{equation}
    Combining this with~\eqref{eq:lemma5-1}, we obtain
    \begin{equation}
            \frac{\partial \textsl{f}_M}{\partial \eps} \geq \log \left( \frac{(1-\eps)(\nu M - \eps)}{\eps(1+2\eps)} \right) - \log \left(\frac{\nu M - \eps}{\nu M - 2 \eps}\right).
    \end{equation}
    This implies that the derivative of $\textsl{f}_M$ will be non-negative if the right hand side of the above inequality is non-negative. This will be the case if and only if $\eps \leq \nu M / (\nu M + 3)$, which holds by the hypothesis of the lemma.
\end{proof}

\section{Validity of the upper bound on the quantum capacity of the depolarizing channel} \label{sec:appB}

{In Fig.~\ref{fig:norms_both}, we present some plots illustrating the fact that the condition $\eps^{(p)}_1 \leq 2 \nu^{(p)} d_E/(\nu^{(p)} d_E + 3)$ is verified (see end of Section~\ref{sec:depol}), confirming the validity of the bound in~\eqref{eq:boundQnu}.}

\begin{figure}[!h]
\centering
\subfloat[]{\includegraphics[width=.94\columnwidth]{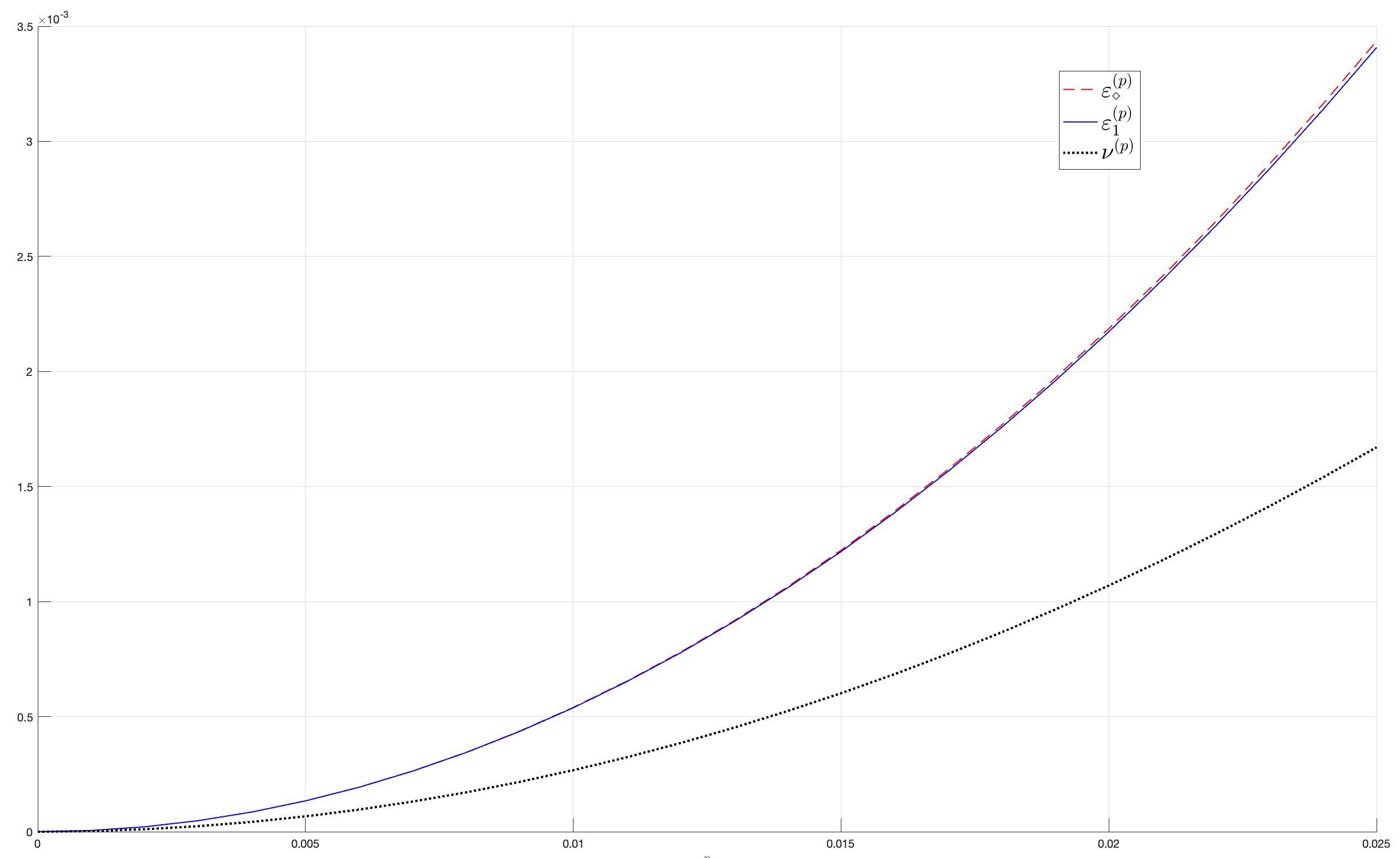}
\label{fig:norms}}
\hfil
\subfloat[]{\includegraphics[width=.94\columnwidth]{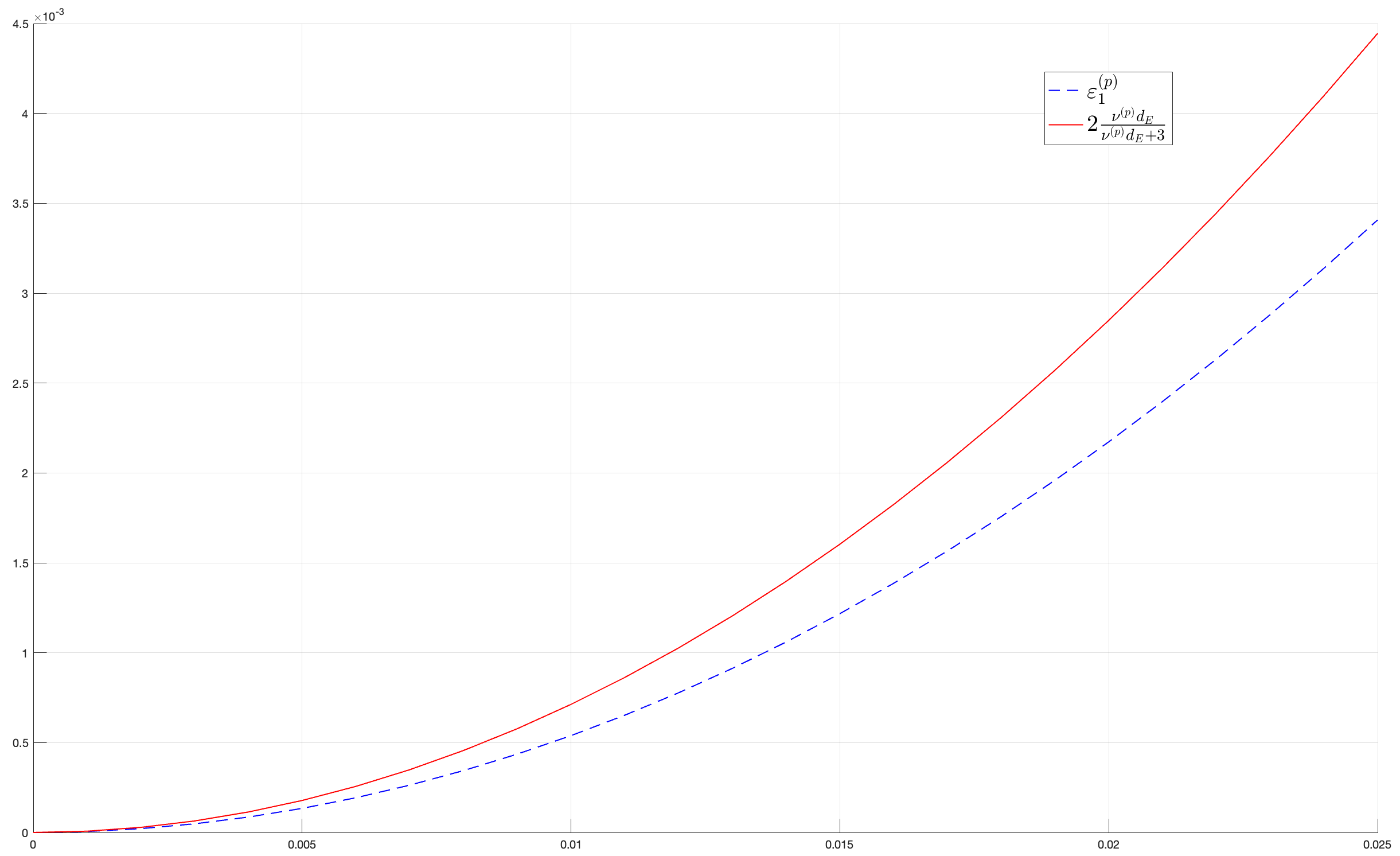}
\label{fig:norms_const}}
\caption{(a): Plots of the quantities defined {in~\eqref{eq:epdia} and~\eqref{eq:ep1nu}}. (b): Plot of $\eps^{(p)}_1$ and comparison with the quantity $2 \nu^{(p)} d_E/(\nu^{(p)} d_E + 3)$. We see that the condition $\eps^{(p)}_1 \leq 2 \nu^{(p)} d_E/(\nu^{(p)} d_E + 3)$ is indeed satisfied.\label{fig:norms_both}}
\end{figure}

\section{Plots of the right-hand side of the continuity bound for classical systems} \label{sec:appC}

{In Fig.~\ref{fig:RHSall} we present some plots of~\eqref{eq:RHSBound} for $d=10$, to illustrate the fact that, in general, there does not exist a $\eps^* \in [0,1]$ for which the function $f_{d}(\eps,\nu)$ of~\eqref{eq:RHSBound} is monotonically increasing in $\eps \in [0,\eps^*]$, for any fixed $\nu \in [0,\eps^*]$ (see Section~\ref{sec:discussions}).}

\begin{figure}[!b]
\centering
\subfloat[]{\includegraphics[width=\columnwidth]{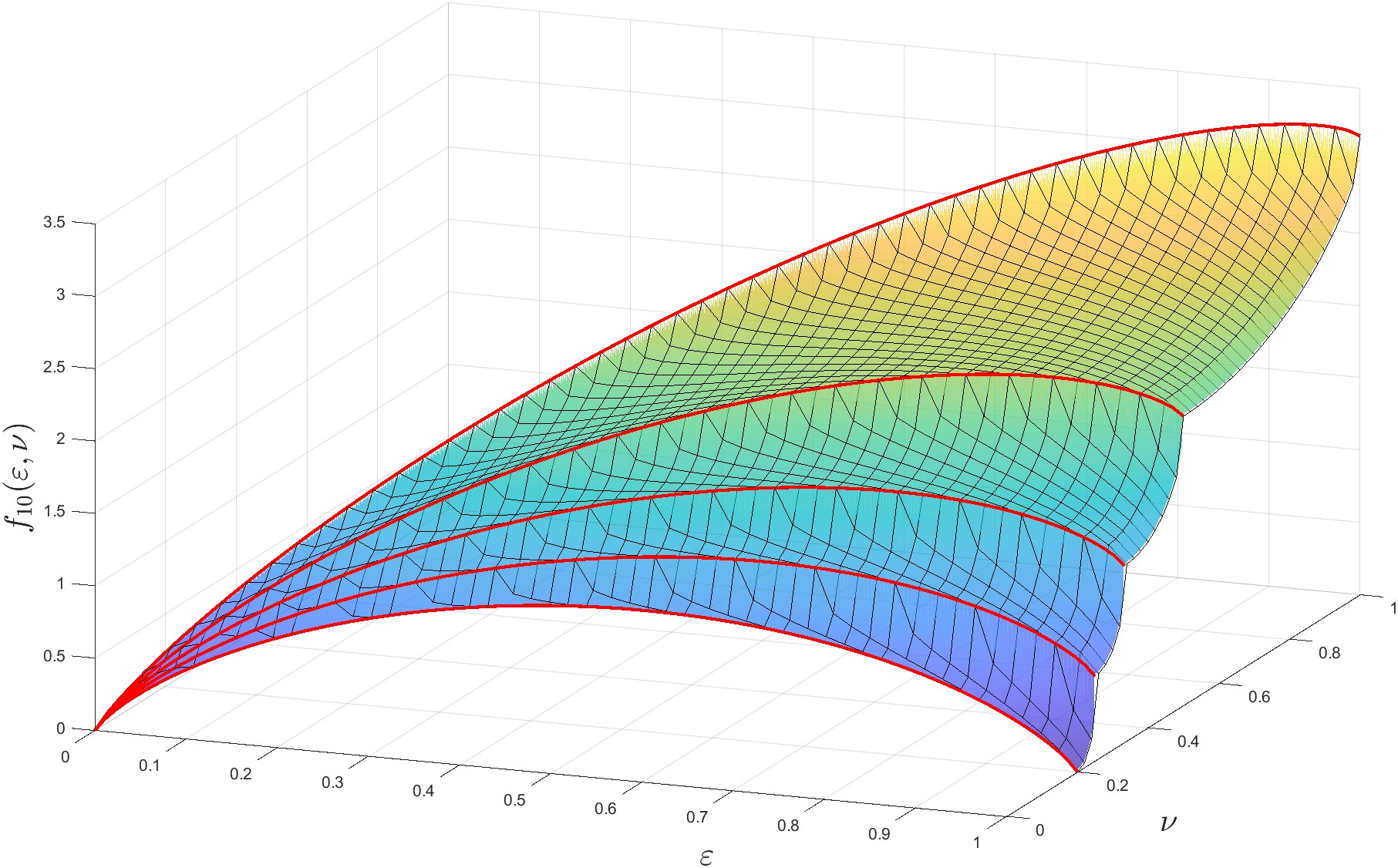}
\label{fig:RHSa}}
\hfil
\vspace{.5cm}
\subfloat[]{\includegraphics[width=\columnwidth]{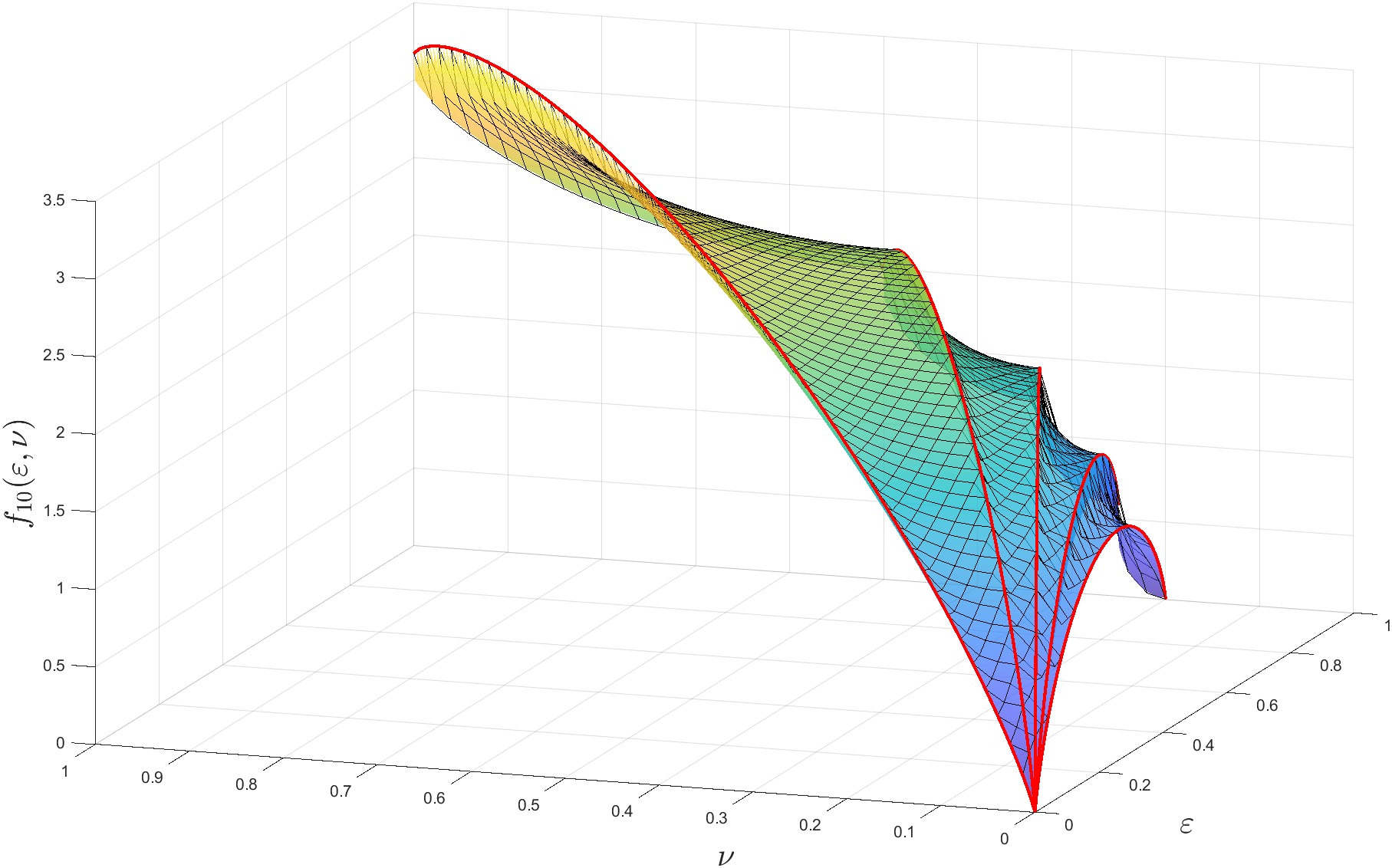}
\label{fig:RHSb}}
\hfil
\vspace{.5cm}
\subfloat[]{\includegraphics[width=\columnwidth]{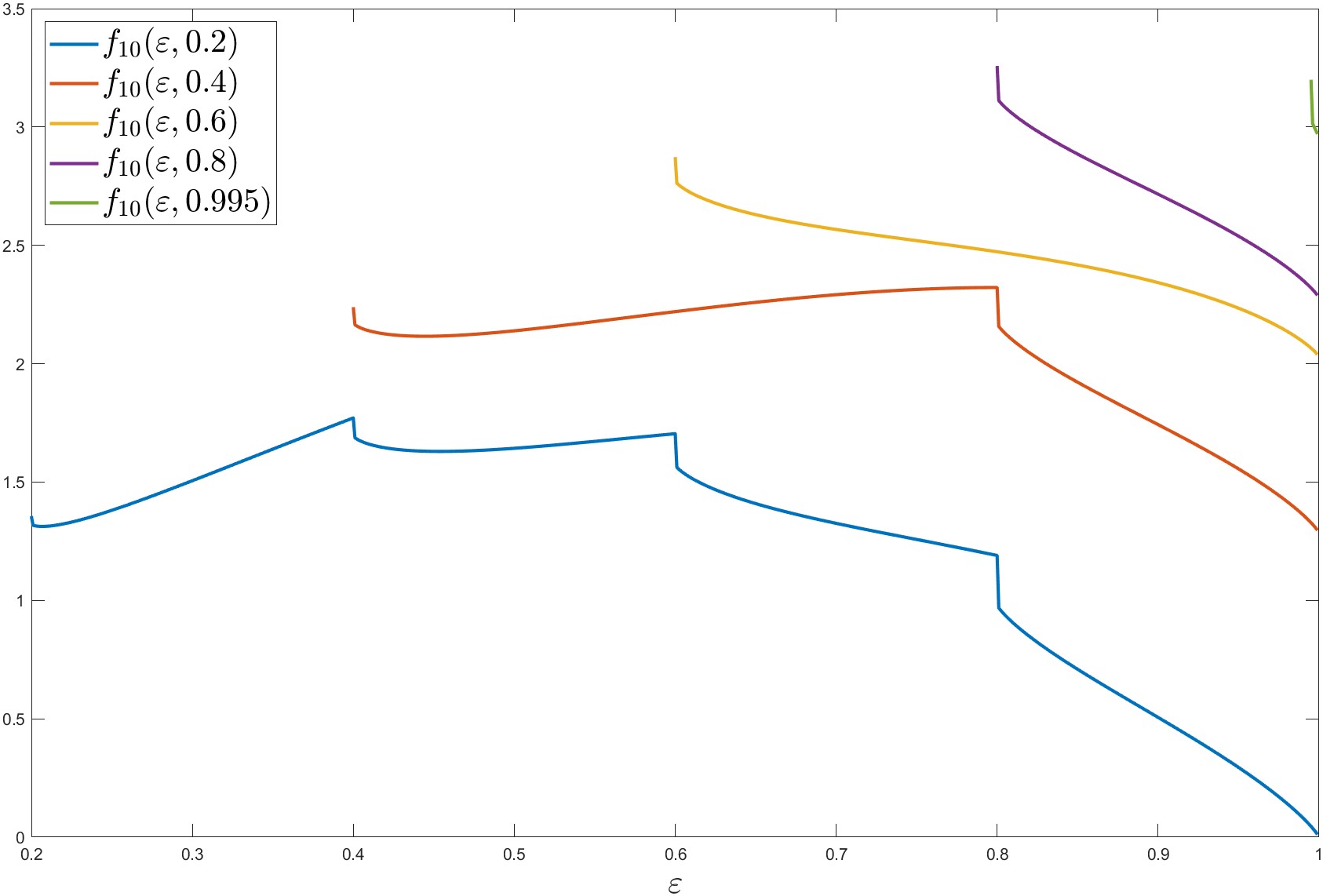}
\label{fig:RHS2}}
\caption{(a) and (b): Graphs {of~\eqref{eq:RHSBound}} for $d=10$. The two figures show the same plot from two different angles. It can be seen that if one fixes a value of $\nu$, the corresponding value of $f_{10}(\eps,\nu)$ will not necessarily be increasing with $\eps$. (c): Plots {of~\eqref{eq:RHSBound}} for $d=10$, for different fixed values of $\nu$. As it can be seen from the plots, the function $f_{d}(\eps,\nu)$ is not monotonically increasing in $\eps$ for a fixed value of $\nu$.\label{fig:RHSall}}
\end{figure}

\section*{Acknowledgements}
The authors are grateful to Andr\'as Gily\'en for making a crucial observation which proved very helpful. They also thank Koenraad Audenaert, Hamza Fawzi, Carles Roch i Carceller and John Watrous for interesting exchanges, and Mark Wilde for helpful feedback.
The hospitality of the International Centre for Mathematical Sciences (ICMS) during the workshop “Mathematical Physics in Quantum Technology: From Finite to Infinite Dimensions” (22-26 May 2023, Edinburgh) is gratefully acknowledged.
M.G.J. also acknowledges support from the Fonds de la Recherche Scientifique – FNRS.

%%%%%%%% BIBLIOGRAPHY %%%%%%%%

\bibliographystyle{IEEEtran}
\bibliography{BiblioInfoTheory}

\end{document}